\DeclareMathOperator*{\argmax}{argmax}
\DeclareMathOperator*{\argmin}{argmin}
\theoremstyle{plain}
\newtheorem{theorem}{Theorem}[section]
\newtheorem{lemma}[theorem]{Lemma}
\newtheorem{fact}[theorem]{Fact}
\newtheorem{observation}[theorem]{Observation}
\newtheorem{claim}[theorem]{Claim}
\newtheorem*{theorem*}{Theorem}
\newtheorem*{lemma*}{Lemma}
\newtheorem{claim*}{Claim}
\theoremstyle{definition}
\newtheorem*{definition*}{Definition}
\newtheorem{definition}[theorem]{Definition}
\theoremstyle{remark}
\newtheorem*{remark*}{Remark}
\newtheorem*{property*}{Property}
\newtheorem*{problem*}{Problem formulation}
\DeclareMathOperator{\E}{\mathbb{E}}
\DeclareMathOperator{\Prob}{\mathbb{P}}
\newcommand{\floor}[1]{\left\lfloor #1 \right\rfloor}
\newcommand{\ceil}[1]{\left\lceil #1 \right\rceil}
\newcommand{\pullsh}[1]{\mathnormal{N_{#1}}}
\DeclareMathOperator{\uh}{\mathnormal{u}} 
\newcommand{\uairange}[2]{\mathnormal{u_{#1 : #2}^{\texttt{AI}}}}
\DeclareMathOperator{\meani}{\mathnormal{\mu_i}}
\DeclareMathOperator{\arms}{\mathnormal{[k]}}
\DeclareMathOperator{\armth}{\mathnormal{b_t}} 
\DeclareMathOperator{\armtai}{\mathnormal{a_t}}
\DeclareMathOperator{\probc}{\mathnormal{\sigma}}
\DeclareMathOperator{\mgm}{\mathnormal{a^*}} 
\DeclareMathOperator{\mgmt}{\mathnormal{a^*_t}}
\DeclareMathOperator{\bvec}{\mathnormal{p}} 
\DeclareMathOperator{\avec}{\mathnormal{q}}
\DeclareMathOperator{\muvec}{\mathnormal{\boldsymbol{\mu}}} 
\DeclareMathOperator{\gammavec}{\mathnormal{\boldsymbol{\gamma}}} 
\DeclareMathOperator{\cvec}{\mathnormal{\boldsymbol{c}}} 
\DeclareMathOperator{\qvec}{\mathnormal{\boldsymbol{q}}} 
\DeclareMathOperator{\hvec}{\mathnormal{\boldsymbol{h}}}
\newcommand{\genmu}[1]{\mathnormal{\Tilde{\mu}_{#1}}}
\newcommand{\muexit}[1]{\mathnormal{\mu_{#1,\text{exit}}}}
\newcommand{\nexit}[1]{\mathnormal{n_{#1,\text{exit}}}}
\newcommand{\agmgm}{\mathnormal{\mathcal{P}_{\mgm}}}
\newcommand{\hold}{\mathcal{H}_{\mgm}}
\newcommand{\maxgenmu}[1]{\genmu{a_{#1}^*}}
\DeclareMathOperator{\genmul}{\mathnormal{\Tilde{\mu}_{l}}}
\DeclareMathOperator{\genmuu}{\mathnormal{\Tilde{\mu}_{u}}}
\newcommand{\ngain}[2]{n^{(#1,#2)}}
\newcommand{\uo}{\mathnormal{u_0}}
\newcommand{\reth}{\texttt{rETH}}
\newcommand{\maxreward}{\texttt{MAXREWARD}}
\DeclareMathOperator{\bn}{\mathnormal{N}} 
\DeclareMathOperator{\tauwindow}{\mathnormal{\tau}}
\DeclareMathOperator{\astar}{\mathnormal{\mathbf{a}^*}}
\DeclareMathOperator{\bstar}{\mathnormal{\mathbf{b}^*}}
\DeclareMathOperator{\aprime}{\mathnormal{\mathbf{a}\prime}}
\DeclareMathOperator{\bprime}{\mathnormal{\mathbf{b}\prime}}
\DeclareMathOperator{\abold}{\mathnormal{\mathbf{a}}}
\DeclareMathOperator{\bbold}{\mathnormal{\mathbf{b}}}
\DeclareMathOperator{\bobv}{\mathnormal{\mathbf{b}^{\text{obv}}}}
\newcommand{\bmyopic}{\mathnormal{\mathbf{b}_{\myopic}}}
\DeclareMathOperator{\smgm}{\mathnormal{\mathbf{a}^*}}
\DeclareMathOperator{\sobv}{\mathnormal{\mathbf{b}^{\text{obv}}}}
\newcommand{\myopicopttau}{\textbf{Myopic-OPT($\tauwindow$)}}
\newcommand{\myopicoptone}{\textbf{Myopic-OPT($1$)}}
\newcommand{\myopic}{\textbf{Myopic}}
\newcommand{\poly}{\mathnormal{\texttt{poly}}}
\newcommand{\thetal}{\mathnormal{\theta_l}}
\newcommand{\thetau}{\mathnormal{\theta_u}}
\newcommand{\gmax}{\mathnormal{\gamma_{\max}}}
\newcommand{\dmax}{\mathnormal{D_{\max}}}
\newcommand{\ghat}{\mathnormal{\hat{\gamma}}}
\newcommand{\gstari}{\mathnormal{\gamma^*_i}} 
\newcommand{\neps}{\mathnormal{n_{\epsilon}}}
\newcommand{\tless}{\mathcal{T}_{\text{less}}}
\newcommand{\tmore}{\mathcal{T}_{\text{more}}}
\newcommand{\OPT}{\mathnormal{OPT}}
\begin{document}

\title{How to Strategize Human Content Creation in the Era of GenAI?\thanks{This work is supported by the AI2050 program at Schmidt Sciences (Grant G-24-66104), Army Research
Office Award W911NF-23-1-0030, ONR Award N00014-23-1-2802 and NSF Award CCF-2303372.}}
\date{}  

\newcommand*{\affaddr}[1]{#1}
\newcommand*{\affmark}[1][*]{\textsuperscript{\normalfont#1}}
\newcommand*\samethanks[1][\value{footnote}]{\footnotemark[#1]}

\author{
  Seyed A. Esmaeili\affmark[1]\thanks{By contribution}~, Kevin Lim \affmark[1]\samethanks[2]~,Kshipra Bhawalkar\affmark[2]\thanks{Ordered Alphabetically}~, Zhe Feng\affmark[2]\samethanks, Di Wang\affmark[2]\samethanks,  Haifeng Xu\affmark[1,2]\samethanks[2]\\
  \affaddr{\affmark[1] University of Chicago}\\
  \affaddr{\affmark[2] Google Research}\\
  \texttt{\{esmaeili,kevinhlim,haifengxu\}@uchicago.edu}, \texttt{\{kshipra,zhef,wadi\}@google.com}\\
}
\date{}  
\maketitle

\begin{abstract}
Generative AI (GenAI) will have significant impact on content creation platforms. In this paper, we study the dynamic competition between a GenAI and a human contributor. Unlike the human, the GenAI's content only improves  when more contents are created by the human over time; however,   GenAI has the advantage of generating content at a lower cost. We study the algorithmic problem in this dynamic competition model about how the human contributor can maximize her utility when competing against the GenAI for content generation over a set of topics. In time-sensitive content domains (e.g., news or pop music creation) where contents' value diminishes over time, we show that there is no  polynomial time algorithm for finding the human's optimal (dynamic) strategy, unless the randomized exponential time hypothesis is false. Fortunately, we are able to design a polynomial time algorithm that naturally cycles between myopically optimizing over a short time window and pausing and provably guarantees an approximation ratio of $\frac{1}{2}$. We then turn to time-insensitive content domains where contents do not lose their value (e.g., contents on history facts). Interestingly,  we show that this setting permits a polynomial time algorithm that maximizes the human's utility in the long run. Finally, we conduct simulations that demonstrate the advantage of our algorithms in comparison to a collection of baselines.  
\end{abstract}

\section{Introduction} 
The emergence of Generative Artificial Intelligence (GenAI) has ushered in a transformative era across numerous practical domains, including text~\cite{chatgpt}, image~\cite{imagen}, and video generation~\cite{ho2022imagen, bartal2024lumiere, sora}. As AI-generated content proliferates on online  platforms, ranging from media-sharing (e.g., Instagram \cite{meta} and TikTok \cite{tiktok}) to news outlets, the implications for human content creators are profound. 

The rise of GenAI presents both challenges and opportunities for human content creators. While GenAI poses a competitive threat due to its capacity for rapid and cost-effective content generation, it also relies on human-created data for training and refinement. This interdependence creates a strategic interplay where human creators may adapt their content strategies to differentiate themselves from AI-generated outputs --- a theme that this work aims to address.  

A key concern surrounding GenAI is its potential to displace human creators due to its lower production costs and its capability to generate useful content as well as human creators. However, the imperative for \emph{timely} and accurate content in online platforms necessitates a continuous influx of human-generated material to serve as training data for GenAI models, for example, media contents on hot topics and recent news comments. This reliance on human input suggests a cyclical relationship where the absence of human content could degrade the quality of AI-generated outputs, leading to user dissatisfaction and a renewed demand for human creativity. In this scenario, human creators may experience a resurgence in value as their content garners increased attention and engagement from users, which creates an intrinsic difficulty to analyze the \emph{dynamics} between human and GenAI competition. To capture this \emph{time-sensitive} nature of content generation, we discount the value of human contents when we model the capability of GenAI as a function of human created contents.  Naturally, the older the contents, the more its value is discounted. 

Notably, there are also situations where the value of contents is   \emph{time-insensitive}, e.g., historical facts, scientific knowledge   and professional skills training. For these domains, through our model we uncover the threat that GenAI will replace human content creation can be real. We model such time-insensitive scenarios by assuming that the quality of human-generated contents is not discounted over time. For each topic, there is a moment that human content creation will eventually stop. To optimize utility in this model, one important consideration of the human creator is to delay their time to be driven out by GenAI and meanwhile maximize their accumulated rewards before GenAI's content creation becomes as competitive.



\vspace{-0.1cm}
\subsection{Our Contributions}
Our main technical contributions are threefold: 
\textbf{(1)} In Section \ref{sec:model}, we introduce the first model to study dynamic competition between a human agent and a GenAI agent for content creation across different topics in an online platform. Our model introduces key elements that capture  advantages and disadvantages of both  human and GenAI agent. In addition, the model enables us to associate with each topic a measure of ``information value'' at current time by discounting the contents  according to how recently they were created. Different topics have different discounting factors: e.g., time-sensitive contents  are more discounted than insensitive ones. Our goal is to design algorithms/strategies that strategize the human agent's choice of content topics in order to maximize the human's utility.  

\textbf{(2)} In Section \ref{sec:discounted}, we study the setting where topics are time-sensitive and therefore the (human) contents used for GenAI training for all topics are discounted. We start by proving that the problem is computationally hard. Specifically, we show that finding an optimal strategy for the human cannot be done in polynomial time unless the randomized exponential time hypothesis is false. Perhaps surprisingly, this intractability remains true even in simple special settings, hence illustrating the fundamental barriers for the human to strategize in such highly dynamic environments. We then introduce a novel polynomial time algorithm that alternates between running a short-term optimal strategy   and pausing (not generating any content) and show that it gurantees  an approximation ratio close to $\frac{1}{2}$. We further prove that it is necessary for any algorithm in time-sensitive setting to pause for a portion of the time horizon as otherwise the algorithm will provably obtain a vacuous approximation of the optimal utility in some situations.

\textbf{(3)} In Section \ref{sec:long_horizon}, we consider the setting where topics are time-insensitive and therefore the contents are not discounted (i.e., their information value to GenAI will not reduce over time). We pay special interest to long-term interactions due to a desire of understanding the human's long term utility. For time-insensitive topics (e.g., facts or scientific knowledge), a corollary from our model is that the human player will eventually find it not profitable to create any new contents due to GenAI's increasing capability, hence the human will ``exit'' the platform at some round. However, how to maximize total utility for the human before exiting the platform is a highly non-trivial algorithmic question. We start with a simple baseline algorithm that runs in exponential time. Through careful analysis about the problem structure,  we are able to improve the algorithm, leading to a polynomial time algorithm that maximizes the human utility. This setting provides an interesting contrast to the discounted setting where optimizing utility is computationally hard.    

All of our algorithms from Sections \ref{sec:discounted} and \ref{sec:long_horizon} demonstrate their superiority over a collection of baselines as empirically tested in Section \ref{sec:experiments}.

A few remarks regarding this work's contribution are worth mentioning. As one of the first studies of such dynamic human-vs-GenAI content creation competition, our model is  simple and basic. It captures the essential aspects of the dynamic competition but admittedly does not capture its whole landscape. On one hand, as we show even this simple setup is already highly non-trivial for algorithm development. On the other hand, we do see various possibilities to extend our basic model formulation  which cannot be all covered in a single work (see  detailed discussions in Appendix \ref{sec:conclusion}). However, we believe our algorithmic results offer useful insights for future studies of richer and concrete problem settings. Finally, we present intuitions for all technical results in the main body but defer formal proofs to the appendix due to  space limits. 


\subsection{Additional Related Work}
There has been a recent surge of interest in studying online content economy and competition among different content creators \cite{ben2017shapley,ben2018game,yao2023rethinking,yao2023bad,zhu2023online,hu2023incentivizing,jagadeesan2023supply,hron2022modeling,yao2024human}. More related to us are the works that aim to understand content creators' behaviors, for example, how creators will specialize at the equilibrium \cite{jagadeesan2023supply}, how creators' strategic behavior affects social welfare \cite{yao2023bad}, and how to design optimization method for long-term welfare considering content creators' strategic behaviors \cite{ben2017shapley,ben2018game,yao2023rethinking,zhu2023online,hu2023incentivizing,immorlica2024clickbait,mladenov2020optimizing}. Our work differ from these studies in a few key aspects. Most importantly, the previous works study the competition among only human creators whereas our model is set to study a more macro level question and features the competition between humanity collectively represented as a single Human player and technology represented as a single GenAI player.   


Recently, a few papers have studied the interaction between GenAI and human content creators. In particular, \cite{taitler2024braess} study the social impact from GenAI on content creators in question answering websites such as Stack Overflow. Their main focus is on designing algorithms that maximize revenue for the GenAI wherease we consider the optimization problem from the perspective of the human and accordingly our objectives and focus is very different. Moreover, \cite{raghavan2024competition} study content diversity/homogeneity when content creators use GenAI as a tool showing that GenAI usuage could lead to less diverse content. Additionally, \citet{yao2024human} study the competition between human and GenAI. However,  their model is at a more micro-level with many human creators who either compete with a single GenAI or use GenAI as an additional content creation tool. 


 
\section{A Model of  Dynamic Human-vs-GenAI  Content  Competition}\label{sec:model}
\textbf{Basic Setup. }  Motivated by content creation competition to attract Internet user traffic on recommender systems (e.g., Instagram, Tiktok, Youtube, etc.), in this section we introduce a basic model to capture the dynamic competition between human content creators and GenAI content generation. 
There are $k$ different topics (e.g.,   sports, politics, or nature), denoted by $\arms = \{ 1, \cdots, k \}$, for which   humans and GenAI can create contents. For convenience, we shall refer to each topic as an \emph{arm}. Additionally, and naturally, there is always an ``opt out'' option for the human to exit the game and receive $0$ utility. In this paper we focus on the dynamic two-agent competition between a single human agent and a single GenAI agent. This model  can be interpreted as the competition between all humans collaboratively as one team and all GenAIs collectively as another team. The goal of our work   is to study the human team's optimal competition strategy against GenAIs. While a more complex model could consider multiple self-interested humans and GenAIs, we start our study from the two-agent setting because: (1) even such two-agent competition is already highly non-trivial hence worth a thorough investigation; (2) human creators'  distinct content creation behavior used to  be primarily caused by their different creation costs \cite{yao2023rethinking}, but this factor  become negligible when facing GenAIs with order-of-magnitude lower creation cost. Thus this assimilates human creators' behaviors due to facing the same strong GenAI competitor, effectively making humans as if in a team.
The human agent's content creation capability for each arm/topic $i \in \arms$ is described by two values $ c_i, \meani$  where $c_i$ is normalized to be within $[0, 1]$ is the human's cost of creating a topic-$i$ content and $\meani$ captures user satisfaction or \emph{reward} for a  human-created topic-$i$ content.\footnote{Internet users'  reward  for a  content  may have noise in general but this will be modeled later when describing the users' (random) choice between human-created and GenAI-created contents.  } 


 \vspace{2mm}
\textbf{Modeling the Capability of GenAIs. } We consider repeated content creation over rounds $1,2,\dots,T$. Let arm $\armth \in \arms$ denote the topic choice by the human agent at  round $t$, whereas the GenAI chooses an arm $\armtai \in \arms$. As mentioned above, the reward resulting from the human's pull of arm $\armth$ is $\mu_{\armth}$. 
However, the reward $\tilde{\mu}_{\armtai}$ resulting from the GenAI's 
pull requires more careful modeling and will naturally depend on the capability of the GenAI technology. Motivated by extensive recent studies of the GenAI's capability as a function of the number of training data (also widely known as  \emph{scaling laws} \cite{kaplan2020scaling,gao2023scaling}),
we assume $\tilde{\mu}_{\armtai}$ is a function of the total (possibly discounted) number of times that arm $\armtai$ has been pulled by the human. Notably, we assume GenAI's pulling of an arm does not improve the GenAI's capability on this arm; this is consistent with the state-of-the-art GenAI technologies which are all trained on high-quality human-created data but not synthetic  GenAI  contents. 
Formally, let 
\begin{equation}
     \pullsh{i}(t) =  \sum_{s = 1}^{t} (\gamma_i)^{t+1-s} \times \mathbb{I} (b_s = i) 
\end{equation}
denote  the total discounted number of times that the human pulled arm $i \in \arms$ at round $t$. Here the discounting factor $\gamma_i \in [0, 1]$ is a parameter that captures the time-sensitivity of a topic $i$ and can generally be different for different topics. For instance, news topics are time-sensitive and will have small $\gamma_i$ because earlier data are less useful for generating today's  news content\footnote{Currently,  one potential recipe for such time-sensitive topics with   low $\gamma_i$  could be using retrieval-augmented generation (RAG) \cite{lewis2020retrieval,siriwardhana2023improving} that feeds the news articles as prompts/contexts to generate. This improvement could still be captured by our model using an increased $\gamma_i$ value.} 
  whereas factual knowledge (e.g., history events, linear algebra, etc.) is less time-sensitive and will have large $\gamma_i$. 
We will pay special interest to the cases where all topics have a discounted number of pulls $\gamma_i<1, \ \forall i \in \arms$ in Section \ref{sec:discounted} and where the value of contents is not discounted $\gamma_i=1, \ \forall i \in \arms$ in Section \ref{sec:long_horizon}.  

Hence for any arm $i \in \arms$, we model the reward $\tilde{\mu}_{i}$ of the GenAI's pull of arm $i$ as $\tilde{\mu}_{i}(t):=\mu_i -g_{i}(\pullsh{i}(t-1))  $ where $g_{i}(.)$ is  the quality \emph{gap} between human-created contents  and GenAI-created ones for arm $i$ and is a function of the total number of human-created contents. We also refer to $g_{i}$ as the \emph{shrinkage} function. Besides naturally assuming $g_i$ is positive and decreasing (i.e., more human-created contents, less gap), we do not assume any concrete form for $g_i$. We allow each arm $i \in \arms$ to have its own shrinkage function $g_i(.)$, capturing the fact that the GenAI's content generation capabilities  may differ across topics. Note that $\Tilde{\mu}_i$ depends (only) on human's historical arm pulls. For convenience, we refer to $\Tilde{\mu}_i(t) = \mu_i -g_{i}(\pullsh{i}(t-1))  $ as the \emph{generative mean} of arm $i$ at round $t$. We assume the cost for GenAI to create a  topic-$i$ content is smaller than human's cost. Without loss of generality, we \emph{normalize} the GenAI's content creation cost to be $0$ (i.e., human's cost $c_i\in [0, 1]$ should be interpreted as the relative cost). Finally, we make a minor technical assumption that the rewards of all (human or GenAI) created contents are bounded in $[\mu_l, \mu_u]$. I.e., $\forall i \in [k]: \mu_u \geq   \mu_i \geq  \mu_i - g_i(T) \ge \mu_l \ge 0$. 

\begin{figure}
    \centering
    \includegraphics[width=.5\linewidth]{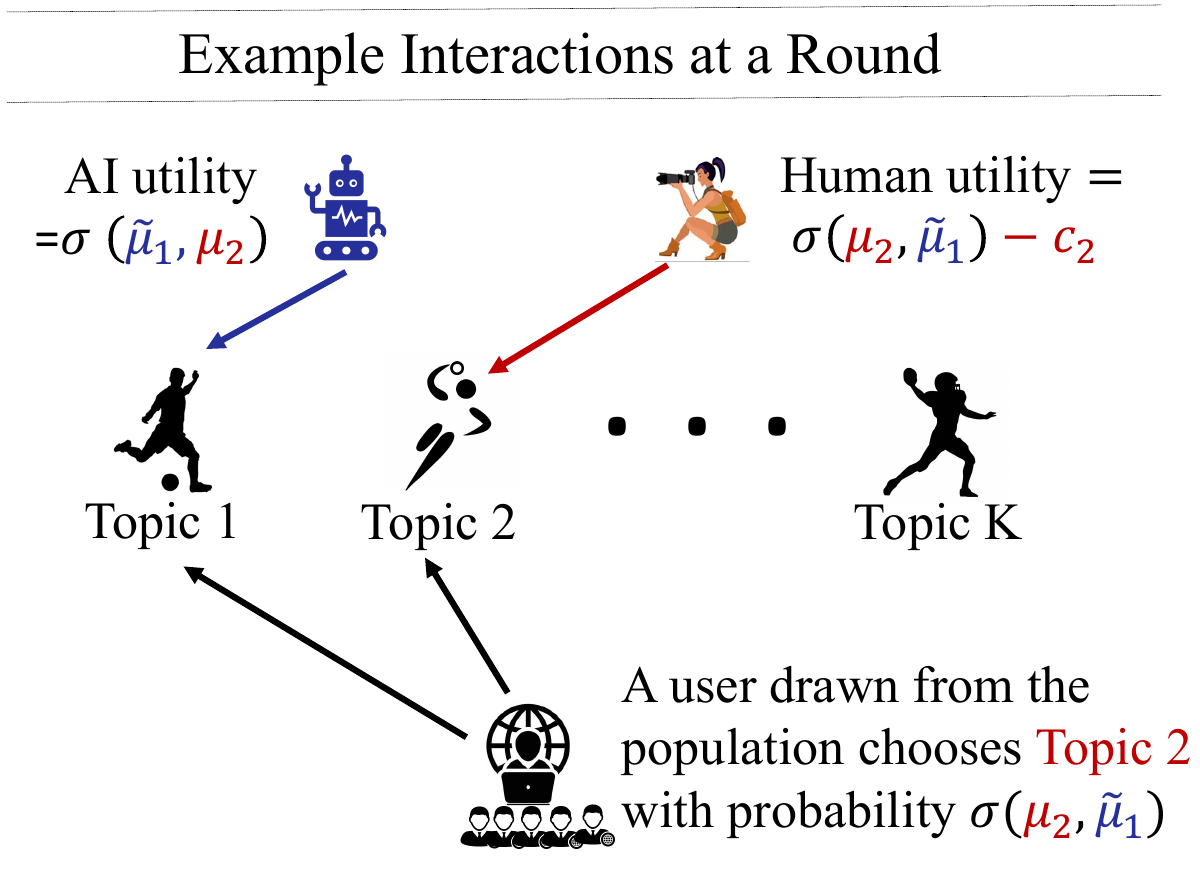}
    \vspace{-4.75mm}
    \caption{Example interaction between the human, GenAI, and Internet user at a given round. }
    \label{fig:interaction}
\end{figure}
\textbf{Competitions and Utilities. } We are now ready to model the dynamic Human-vs-GenAI competition. Figure \ref{fig:interaction} shows a visualization of the interaction. In each round $t$ an Internet user arrives and faces two contents, one created by human   with user  reward $r_{\armth}$ and another created by GenAI with reward $\Tilde{r}_{\armtai}$.  Following the standard choice model, we assume the user chooses one of the arms $b_t,a_t$ randomly with probabilities specified by a \emph{link function} $\probc$. Importantly, this means that the user is choosing between only the current round's contents. This is realistic since frequent recommender system users often only consume fresh contents as older contents have already been viewed (this is also why these systems often recommend fresh contents anyway). Let $ i \succ j $ denote the event that the user selects arm $i$ over $j$. The probability of this event is given by the link function $\probc(.,.)$ as follows
\begin{align}
\Prob(i \succ j)   = \probc(\mu_i,\mu_j) \in [0, 1]
\end{align}

Naturally, $\Prob(j \succ i)   = 1 -  \probc(\mu_i,\mu_j) = \probc(\mu_j,\mu_i)$. Besides the standard  (and natural) assumptions that $\probc(\cdot, \cdot)$ is monotonically increasing in the first variable and decreasing in the second, our results do not rely on any concrete form of the link function $\probc$. We also allow the human to possibly not select an arm in any round $t$, and denote this action by $\emptyset$. In that case, the human incurs no cost but   GenAI's content will be chosen  with probability $1$. That is, $\probc(\emptyset, \mu) = 0 = 1 - \probc(\mu, \emptyset)$ for any $\mu \in [\mu_l, \mu_u]$.  

The objective of both the human and GenAI is to maximize the user traffic of their contents, defined as the expected number of user pulls. Recall that the GenAI's generation costs have been normalized to $0$. Hence the competition leads to the following accumulated utilities for each agent.
\begin{align} 
     \text{{\bf Human Agent's Accumulated Utility}:} & \nonumber \\
     \uh_T(\mathbf{b};\mathbf{a} )   =   \sum_{t=1}^T \big[  \probc (\mu_{\armth}, \Tilde{\mu}_{\armtai}(t))    -    c_{\armth} \big]   \label{eq:human-U}   
     \\ 
     \nonumber
     \text{{\bf GenAI Agent's Accumulated Utility}:}    & \nonumber \\   u^{\texttt{AI}}_T(\mathbf{a};\mathbf{b}) = \sum_{t=1}^T  \probc (\Tilde{\mu}_{\armtai}(t), \mu_{\armth})      \label{eq:ai-U} 
\end{align}

where $\mathbf{b} = (b_1, \cdots, b_T)$ is the realized sequence of the human's actions,  similarly $\mathbf{a} = (a_1, \cdots, a_T)$ is the realized sequence of the GenAI's actions. At times, we may need to refer to the agents' accumulated utilities between rounds $t_1$ and  $t_2$ ($t_1 \leq t_2$, both inclusive). In this case, we use $\uh_{t_1:t_2}$ and $u^{\texttt{AI}}_{t_1:t_2} $ to denote these utilities which are defined similarly as in Equation \eqref{eq:human-U}  and \eqref{eq:ai-U} but with summation from $t_1$ to $t_2$. Note that if the human or GenAI randomize their actions then the utilities are the expectations of \eqref{eq:human-U} and \eqref{eq:ai-U} for the human and GenAI, respectively.  

The accumulated utility functions in Equation \eqref{eq:human-U} and \eqref{eq:ai-U} induce a \emph{dynamic} competition between the human and the GenAI. The two agents' interactions hinge on GenAI's  generative mean   $\Tilde{\mu}_{\armtai}(t) = \mu_{\armtai} -g_{\armtai}(\pullsh{\armtai}(t-1))  $ of   arm $\armtai$, which depends on  human's past actions. 


We make a few remarks about the link function $\probc$. User's probabilistic choice is motivated by reward noise, which is standard in choice models \cite{bradley1952rank,davidson1976bibliography}. That is, both $\mu_{\armth}, \Tilde{\mu}_{\armtai}(t)$ should be viewed as expected rewards of the contents whereas the realized true reward should be $\mu_{\armth} + \epsilon_1, \Tilde{\mu}_{\armtai}(t) +\epsilon_2$. The randomness in user choice comes from the random noise term $\epsilon_1, \epsilon_2$. For instance, if they are logit distributions, we arrive at the standard Bradley–Terry model with $\probc(\mu_{\armth}, \Tilde{\mu}_{\armtai}(t)) = \frac{e^{\mu_{\armth}}}{e^{\mu_{\armth}} + e^{\Tilde{\mu}_{\armtai}(t)}}$. If they are i.i.d Gaussian noise with zero mean and $\sigma^2$ variance, then  $\probc(\mu_{\armth}, \Tilde{\mu}_{\armtai}(t))$ equals the probability that Gaussian $\mathcal{N}(\mu_{\armth} -  \Tilde{\mu}_{\armtai}(t),2\sigma^2)$ is positive \cite{yue2012k}.

\vspace{2mm}
\textbf{Final Problem Formulation and Preliminary Model Analysis.} Standing in the human's shoes, we study the human's optimal dynamic competition with GenAI. One last key modeling piece is the content creation strategy of the GenAI. In the present work, we consider the competition with GenAI that has myopic content creation strategies as formalized in the following assumption. 
\begin{restatable}{assumption}{mgmmaximizesprob}\label{prop:mgm_maximizes_prob}
At a given round $t$ the GenAI selects the arm with the current highest generative mean $\mgmt = \max_{i \in [k]} \genmu{i}(t)$ to maximize  the probability of being chosen by the user for this round.\footnote{The fact that the arm with the highest  generative mean at round $t$ maximizes GenAI's probability of being chosen by the user at $t$ follows from the monotonicity   of link function $\probc$. See Appendix \ref{append:model-proofs} for details. }   
\end{restatable} 
 Let vector $\astar \in [k]^T$  denote the sequence of GenAI's arm choices. Notably, $\astar$ is not fixed and  is a function of the human pulls $\bbold$. To emphasize this dependence, we also write $\astar (\bbold)$. Hence, this gives rise to our final dynamic optimization problem for the human: 
 \begin{align} 
    & \textbf{Human's Dynamic Optimization Problem:} \nonumber \\
    & \quad \quad \quad  \max_{\bbold \in [k]^T} \quad \uh_T(\mathbf{b}; \astar (\bbold)) \label{fig:prob-formulation} \\
    & \quad \quad \quad \text{where $\mgmt= \max_{i \in [k]} \ \genmu{i}(t)$} \nonumber
\end{align}

In Appendix \ref{app:basic_stg}, we also present an equivalent formulation of the above dynamic optimization problem as searching for the longest path on certain ``state transition graph''. This formulation will be useful in constructing some technical arguments. We conclude this  section with some preliminary analysis about the above optimization problem. We start by illustrating the behavioral assumption of the GenAI, as detailed in Assumption \ref{prop:mgm_maximizes_prob}. Recall that our goal is to optimize human's utility in the competition with GenAI. The following result shows that optimizing against the GenAI  strategy  in Assumption \ref{prop:mgm_maximizes_prob} is equivalent to robust optimization of human's worst-case utility. 
\begin{restatable}{proposition}{leastutil}\label{theorem:utility_lower_bound}
For any $\bbold$, we have $\uh_T(\bbold ; \astar (\bbold)) \leq   \uh_T(\bbold ; \abold) $ for any $ \abold$.
\end{restatable}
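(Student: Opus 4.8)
The plan is to exploit the defining structural feature of the model: for a fixed human action sequence $\bbold$, the generative means $\genmu{i}(t) = \mu_i - g_i(\pullsh{i}(t-1))$ are completely determined, since $\pullsh{i}(t-1)$ is a function of $b_1,\dots,b_{t-1}$ only and never depends on the GenAI's past actions. Hence, as we vary the GenAI's action sequence $\abold$ while holding $\bbold$ fixed, the only quantity that changes in the human's utility \eqref{eq:human-U} is which generative mean $\genmu{\armtai}(t)$ enters as the second argument of $\probc$ in each round. This reduces the statement to a round-by-round comparison against a common, $\abold$-independent profile of generative means.

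Concretely, I would write
\[
\uh_T(\bbold;\abold) \;=\; \sum_{t=1}^T \Big[ \probc\big(\mu_{\armth},\, \genmu{\armtai}(t)\big) - c_{\armth} \Big],
\]
and note that the cost terms $c_{\armth}$ depend on $\bbold$ alone, so it suffices to compare the per-round rewards $\probc(\mu_{\armth}, \genmu{\armtai}(t))$. Fix a round $t$. For any GenAI action $a_t$ we have $\genmu{\armtai}(t) \le \max_{i\in[k]}\genmu{i}(t)$, and by Assumption \ref{prop:mgm_maximizes_prob} the myopic GenAI's choice $\mgmt$ attains this maximum, i.e. $\genmu{\mgmt}(t) = \max_{i\in[k]}\genmu{i}(t)$. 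Since the link function $\probc(\cdot,\cdot)$ is monotonically decreasing in its second argument, this yields
\[
\probc\big(\mu_{\armth}, \genmu{\armtai}(t)\big) \;\ge\; \probc\big(\mu_{\armth}, \genmu{\mgmt}(t)\big).
\]
(The degenerate case $\armth = \emptyset$ is handled automatically: both sides equal $0$ by the convention $\probc(\emptyset,\mu)=0$.) Summing these inequalities over $t=1,\dots,T$ and re-attaching the common cost terms gives $\uh_T(\bbold;\astar(\bbold)) \le \uh_T(\bbold;\abold)$, which is the claim. If the GenAI is allowed to randomize its action sequence, one takes the expectation over $\abold$ on the right-hand side and applies the deterministic bound pointwise.

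I do not anticipate a genuine obstacle in this argument; the single point that must be stated carefully is the independence of $\genmu{i}(t)$ from the GenAI's history --- precisely the modeling choice that GenAI-generated content does not feed back into GenAI capability --- since without it the generative means would shift as $\abold$ varies and the per-round comparison could not be carried out against a fixed profile.
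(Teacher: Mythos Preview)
Your proposal is correct and takes essentially the same approach as the paper: both arguments fix $\bbold$, observe that the generative means $\genmu{i}(t)$ are then determined independently of $\abold$, and apply the monotonicity of $\probc$ in its second argument round by round to conclude. Your write-up is somewhat more explicit about why the generative-mean profile is $\abold$-independent and about the $\armth=\emptyset$ edge case, but the underlying argument is identical.
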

Proposition \ref{theorem:utility_lower_bound} implies that by assuming GenAI strategy $\astar$, the human can establish a guarantee on her worst-case utility  even if GenAI does not follow the strategy at the end. This offers a justification of GenAI's strategy assumption through the lens of robust optimization.

Next we offer an alternative view of the GenAI's strategy through a game-theoretic lens. Notably, if the GenAI is forward-looking, its strategy generally could have been much more complex  since it may intentionally pull sub-optimal arms at times in order to encourage human content creation on certain arms so that it would benefit from this creation in the future. Such a strategic consideration will lead to a general dynamic game between the GenAI and the human, and finding an equilibrium in such dynamic games is notoriously hard \cite{koller1992complexity,letchford2010computing,hansen2007finding}. However, our following result shows that, if the human's strategy is independent of the GenAI's choices, the GenAI strategy in Assumption \ref{prop:mgm_maximizes_prob} is actually an optimal GenAI strategy over the entire horizon. Formally, we say $\bbold$ is \emph{oblivious}, if each $b_t$ is a function only of its own history $b_1, \cdots, b_{t-1}$.\footnote{This can be equivalently viewed as the human committing to a long-horizon strategy $\bbold$ at the beginning of the competition, after which the GenAI best responds.}  
 
\begin{restatable}{proposition}{myopicoptobv}\label{theorem:ai_best_resp_obv}
If the human uses an oblivious strategy $\bbold$ then the GenAI myopic strategy $\astar(\bbold)$ as desribed in Assumption \ref{prop:mgm_maximizes_prob} is also an optimal forward-looking strategy that maximizes its total utility over the entire time horizon. 
\end{restatable}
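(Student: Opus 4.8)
The plan is to show that, once the human commits to an oblivious strategy $\bbold$, the GenAI's horizon‑long optimization problem degenerates into $T$ entirely independent single‑round problems, each of which is solved precisely by the myopic rule of Assumption \ref{prop:mgm_maximizes_prob}. The starting point is the observation that the generative mean $\genmu{i}(t) = \mu_i - g_i(\pullsh{i}(t-1))$ depends only on the discounted count $\pullsh{i}(t-1) = \sum_{s=1}^{t-1}(\gamma_i)^{t-s}\mathbb{I}(b_s = i)$ of the human's past pulls of arm $i$, together with the fixed primitives $\mu_i, g_i, \gamma_i$. Since $\bbold$ is oblivious, the prefix $b_1,\dots,b_{t-1}$ is determined by the human alone and in particular is not a function of any GenAI action; and by the modeling stipulation that GenAI's own pulls never improve its capability, $\genmu{i}(t)$ also does not depend on $a_1,\dots,a_{t-1}$. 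Hence, once $\bbold$ is fixed, every number $\genmu{i}(t)$, for all rounds $t$ and arms $i\in\arms$, is a fixed constant, outside the GenAI's control.

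Next I would write the GenAI's accumulated utility as $\uai_T(\abold;\bbold) = \sum_{t=1}^{T}\probc\big(\genmu{a_t}(t),\mu_{b_t}\big)$ and note that the $t$‑th summand depends on the choice vector $\abold$ only through its $t$‑th coordinate $a_t$ — everything else in it (the values $\genmu{i}(t)$ for all $i$, and $\mu_{b_t}$) having just been shown to be fixed. The objective is therefore separable across rounds, so a forward‑looking GenAI that optimizes the entire horizon can do no better than to maximize each summand on its own: an optimal round‑$t$ choice is any $a_t \in \argmax_{i\in\arms}\probc\big(\genmu{i}(t),\mu_{b_t}\big)$. Invoking the monotonicity of $\probc$ in its first argument (the same property underlying Assumption \ref{prop:mgm_maximizes_prob}), this set equals $\argmax_{i\in\arms}\genmu{i}(t)$, which is exactly the myopic choice $\mgmt$. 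Thus $\astar(\bbold)$ is at once the myopic strategy and an optimal forward‑looking strategy, and summing the per‑round optimality gives horizon‑long optimality.

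The step that needs genuine care is the claim that the GenAI faces no intertemporal trade‑off whatsoever — i.e. that there is truly no channel through which $a_t$ can influence any future reward $\genmu{\cdot}(t')$, $t'>t$. This is where the contrast with a general dynamic game lives: it rests on two separate modeling choices (GenAI's training signal is human content only, and the human, being oblivious, does not react to the GenAI), and the bulk of the write‑up is simply making this "no feedback loop" statement precise. Once it is in place, the separability of $\uai_T$ and the single‑round argument are immediate. I would close with a one‑line remark that the same argument carries over verbatim to a randomized GenAI strategy: expectation is linear, each summand still depends only on the round‑$t$ randomization, so maximizing the expected objective again reduces to maximizing each expected summand, which the (possibly deterministic) myopic rule already does.
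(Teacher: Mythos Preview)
Your argument is correct and rests on the same two observations as the paper's proof: (i) because $\bbold$ is oblivious, the human's pulls and hence every generative mean $\genmu{i}(t)$ are independent of the GenAI's actions, and (ii) the GenAI's own pulls do not feed back into the generative means. The paper, however, packages these observations as an induction on $t$: it writes $\uairange{1}{t}(\abold;\bbold) = \uairange{1}{t-1}(\abold;\bbold) + \Prob(a_t \succ b_t)$, invokes the inductive hypothesis on the first term, and Assumption~\ref{prop:mgm_maximizes_prob} on the second. Your route is more direct --- you state the separability of $\uai_T$ outright and optimize each summand in one shot --- which arguably makes the absence of any intertemporal trade-off more transparent. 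The induction in the paper is not doing any real work beyond what your separability statement already captures; conversely, the paper's framing accommodates a randomized oblivious human (it works with the marginal $\bvec_t(i)$ at each round), whereas your write-up treats $b_1,\dots,b_{t-1}$ as a fixed realized prefix. If you want to match that generality, just take expectation over the human's randomness before invoking separability; linearity of expectation and the fact that $\bvec_t$ does not depend on $\abold$ give the same conclusion.
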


Finally, a minor concern one might have is whether a randomized human strategy $\bbold$ could be strictly beneficial for the human since randomization has been proven to be powerful in strategic plays. Our next lemma shows that the answer is No. 
\begin{restatable}{lemma}{pureoptexists}\label{lemma:pure_opt_exists}
Randomization is not more beneficial: there always exists an optimal  strategy for the human that is deterministic.  
\end{restatable}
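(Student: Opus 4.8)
The plan is to show that for any (possibly randomized) human strategy, there is a deterministic one that does at least as well, by a standard dynamic-programming / backward-induction argument exploiting the fact that the GenAI's behavior is a fixed deterministic function of the human's past pulls. First I would set up the relevant state space: by the observation in Appendix~\ref{app:basic_stg}, the human's decision problem can be phrased over a state-transition graph whose state at round $t$ is the vector of discounted pull counts $(\pullsh{1}(t-1), \dots, \pullsh{k}(t-1))$ (equivalently, the full history $b_1,\dots,b_{t-1}$), and from a given state the one-step reward $\probc(\mu_{b_t}, \genmu{\armtai}(t)) - c_{b_t}$ is determined once $b_t$ is chosen, because $\genmu{\armtai}(t)$ depends only on the current state via Assumption~\ref{prop:mgm_maximizes_prob}. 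The key structural point is that the human's utility \eqref{eq:human-U} is \emph{additively separable} over rounds and the transition is deterministic given the action, so the problem is a finite-horizon MDP with deterministic transitions.

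Next I would argue by backward induction on $t = T, T-1, \dots, 1$ that at every reachable state there is a deterministic optimal continuation. At round $T$, from any state the human simply picks $b_T \in [k] \cup \{\emptyset\}$ maximizing the (deterministic) one-step payoff; ties can be broken arbitrarily, giving a deterministic choice. For the inductive step, suppose for every state reachable at round $t+1$ there is a deterministic optimal continuation with some value $V_{t+1}(\text{state})$. Then at a state $s$ at round $t$, a randomized action is a distribution over $b_t$, and by linearity of expectation the resulting continuation value is a convex combination of the quantities $\big[\probc(\mu_{b_t}, \genmu{\armtai}(t)) - c_{b_t}\big] + V_{t+1}(s \oplus b_t)$ over the choices $b_t$ in the support; a convex combination is at most its maximum, so picking the single best $b_t$ deterministically does at least as well. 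This yields a deterministic optimal policy; unrolling it from the (fixed, deterministic) initial state $s_1$ produces a deterministic sequence $\bbold$ with $\uh_T(\bbold;\astar(\bbold))$ at least the value of any randomized strategy.

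One subtlety to address carefully is that the comparison in the lemma is against the GenAI strategy $\astar(\bbold)$, which itself depends on $\bbold$ — but this is exactly why the argument goes through: because the human strategies we consider are oblivious in the sense that $\astar$ is a deterministic function of the realized history of human pulls, once the human's randomness is resolved into a particular realized sequence $\bbold$, the GenAI's sequence $\astar(\bbold)$ and hence the realized total payoff are determined, so a randomized human strategy is literally a distribution over deterministic ones, and its expected payoff cannot exceed the best realized payoff in its support. I would phrase this either via the backward-induction argument above or, even more simply, via this ``support'' observation: $\E_{\bbold \sim \pi}[\uh_T(\bbold;\astar(\bbold))] \le \max_{\bbold \in \mathrm{supp}(\pi)} \uh_T(\bbold;\astar(\bbold))$, and the maximizing $\bbold$ is a deterministic strategy achieving at least as much.

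The main obstacle — really the only thing needing care — is making sure the notion of ``randomized strategy'' is pinned down so that the above reduction is valid: in particular, that the human's randomness at round $t$ may depend on past realized actions but the GenAI's response remains a function of the realized (not distributional) history, so that conditioning on the full realization of the human's coin flips collapses everything to a deterministic trajectory. Once that modeling point is nailed down, both the MDP argument and the one-line support argument go through without further calculation; I would present the support argument as the clean proof and perhaps remark that the MDP view also gives a constructive algorithm for the optimal deterministic policy over the state graph.
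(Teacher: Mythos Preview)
Your proposal is correct and takes essentially the same approach as the paper: the paper's proof is precisely the backward-induction derandomization you describe (replace the randomized choice at round $T$ by the argmax conditional on the realized history, then iterate for $T-1,\dots,1$), relying on exactly the structural point you identify --- that $\astar$ is a deterministic function of the realized human pull history so the per-round payoff is determined once $b_t$ is fixed. Your additional ``support'' observation is a clean one-line variant the paper does not spell out, but it is equivalent in spirit.
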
 

Lemma \ref{lemma:pure_opt_exists} also simplifies our formulation. For the rest of the paper we will focus on finding the optimal solution $\bstar$ to Problem \eqref{fig:prob-formulation}. Since $\astar(\bbold)$ is determined by $\bbold$, we will drop the argument $\astar$ for the GenAI strategy when it is clear in the context and simply write $\uh_T(\bbold)$ for $= \uh_T(\bbold;\astar)$.

\section{Time-Sensitive Domains: $\gamma < 1$}\label{sec:discounted}
We start our algorithmic study for  time-sensitive content domains, in which the relevance   or usefulness of the content diminishes   over time  (e.g., news or fashion contents). We capture this domain by considering discounting factors $\gamma_i$'s  that are all  bounded below $1$, i.e., $\forall i \in \arms, \gamma_i \leq \gmax <1$ for some $\gmax$. Based on Lemma \ref{lemma:pure_opt_exists} we restrict our analysis to deterministic  strategies. 
In Subsection \ref{subsec:discount_hardness} we prove the computational hardness of our problem. Specifically, unless the  well-believed Randomized Exponential Time Hypothesis (\reth{}) is false \cite{calabro2008complexity}\footnote{ \reth{} is a popular assumption in complexity theory and widely believed to be true \cite{roughgarden2020algorithms,carmosino2016nondeterministic,braverman2014approximating,dell2014exponential,williams2018some}.}, we cannot maximizie the human's utility in polynomial time. 
In fact, this intractability holds even in simpler special situations when all arms have the same cost and, additionally, have the same discounting factor (i.e., $\gamma_i=\gamma, \ \forall i \in \arms$) or the same shrinkage function (i.e., $g_i(.)=g(.), \ \forall i \in \arms$).  
Therefore, in Subsection \ref{subsec:approx_discounting} we turn to polynomial time approximation algorithms and design  a new $O(T + k^{O(\log \frac{1}{\epsilon})} )$ algorithm with approximation ratio   $\frac{1-\epsilon}{2}$ for any given constant $\epsilon \in (0,\frac{1}{2})$. Our algorithm is based on a tailored design of ``window switching'', it runs a myopic optimization algorithm for a consecutive time window (interval) of size $\tau$, then switches to ``pausing'' (i.e., not playing any arm) for an equally sized time window $\tau$. 


\subsection{The Hardness of Human Utility Maximization}\label{subsec:discount_hardness}
The hardness proof for our problem is based on a connection to the blocking bandits problem of \citet{basu2019blocking}. Like in the standard stochastic bandit setting, in blocking bandits we have $k$   arms each having its own mean $\mu_i$. However, each arm also has a delay value $D_i$ which is a positive integer. If an arm $i$ is pulled at round $t$ then it cannot be pulled again for $D_i-1$ many consecutive rounds after $t$, making $t+D_i$ the first subsequent round where $i$ maybe pulled again. 
Interestingly, \citet{basu2019blocking} show that even when the delay values $D_i$ and means $\mu_i$ of each arm are known, the pure maximization problem\footnote{By pure here we mean that no learning is involved and optimization is done with all parameters known.} of selecting the arms to maximize the total accumulated rewards through the horizon subject to not violating the delay constraints of the arms is computationally hard. Specifically, they show that the problem does not admit a pseudo-polynomial time algorithm in the number of arms unless the randomized exponential time hypothesis is false. We prove the hardness of our problem by reducing a special instance of blocking bandits (shown to be computationally hard) to the problem of finding a deterministic optimal strategy in our discounted contents setting. Our main result   is as follows:
\begin{theorem}\label{th:gen_hardness}
Let $\gamma_i < 1, \ \forall i \in \arms$, unless the randomized exponential time hypothesis (\texttt{rETH}) is false, there is no polynomial time algorithm for finding a deterministic optimal strategy for the human even if all arms have the same cost $c$ and the same discounting factor $\gamma$ or if all arms have the same cost $c$ and the same shrinkage function $g(.)$.
\end{theorem}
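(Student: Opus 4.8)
The plan is to reduce from the hard instances of blocking bandits of \citet{basu2019blocking} to our discounted human‑utility maximization problem, establishing the two claimed special cases (uniform cost plus uniform $\gamma$, and uniform cost plus uniform shrinkage $g$) essentially separately, since the difficulty is to encode an integer delay $D_i$ using our continuous machinery. The key observation that makes this possible is that the generative mean $\genmu{i}(t) = \mu_i - g_i(\pullsh{i}(t-1))$ and the discounted count $\pullsh{i}(t)=\sum_{s\le t}\gamma_i^{t+1-s}\mathbb{I}(b_s=i)$ only depend on the human's own past pulls, so by choosing the $g_i$'s appropriately we can make pulling arm $i$ temporarily \emph{unprofitable} for a controlled number of rounds after a pull --- which is exactly a soft version of a blocking constraint. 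So the first step is: given a hard blocking‑bandits instance with arms $i\in[k]$, means $\nu_i$ and integer delays $D_i$ and horizon $T$, build an instance of our model with the same $k$ and (a slightly inflated) $T$, and with link function fixed to a simple canonical choice (e.g. a step/threshold link, or any fixed monotone $\probc$), so that the only free parameters we tune are $\mu_i$, $c_i$, $\gamma_i$, and $g_i(\cdot)$.

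For the case of a common shrinkage function $g(\cdot)$ and common cost $c$: here the delay information $D_i$ must be carried by the discounting factor $\gamma_i$. The idea is to pick $g$ so that one human pull of arm $i$ raises $\pullsh{i}$ to roughly $1$, which drives $\genmu{i}$ down far enough that the GenAI, which by Assumption~\ref{prop:mgm_maximizes_prob} always plays the arm of current highest generative mean, will \emph{not} pick arm $i$ --- i.e. the human wins arm $i$ with high probability only when $\pullsh{i}$ is below some threshold. Because $\pullsh{i}$ decays geometrically at rate $\gamma_i$, the number of rounds the human must wait before a fresh pull of $i$ is again ``clean'' is $\Theta(\log_{1/\gamma_i}(\cdot))$, so setting $\gamma_i$ so that this quantity equals $D_i$ encodes the delay. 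One then argues that an optimal human strategy behaves as follows: it ``keeps'' arm $i$ fresh by re‑pulling every $D_i$ rounds, collecting (up to the per‑round cost $c$) a fixed reward each time, and the total human utility is an affine, strictly increasing function of the total blocking‑bandits reward $\sum_t \nu_{b_t}\mathbb{I}(\text{valid pull})$. Some care is needed because our blocking is ``soft'' (the human could choose to pull a not‑yet‑recovered arm, getting less reward rather than a hard zero) and the decay is not exactly integer‑supported, so the reduction needs a gadget that makes such off‑schedule pulls strictly dominated --- e.g. by choosing the per‑pull payoff profile of arm $i$ so that an early re‑pull both wastes cost $c$ and delays the arm's recovery, making it never worthwhile.

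For the case of a common discounting factor $\gamma$ and common cost $c$: now $\gamma$ is the same for all arms, so the delay $D_i$ has to be encoded by the (arm‑specific) shrinkage function $g_i(\cdot)$. This is actually cleaner: choose $g_i$ to be (close to) a step function whose threshold is calibrated to $D_i$ via the fixed decay rate $\gamma$ --- concretely, after a pull of arm $i$ at round $t$ we have $\pullsh{i}(t+\Delta-1)=\gamma^{\Delta}(1+\dots)$, and we set $g_i$ so that $\genmu{i}$ is ``bad'' (GenAI wins) precisely while $\pullsh{i}$ exceeds $\gamma^{D_i}$, i.e. for the $D_i-1$ rounds following a pull, and ``good'' (human wins) after. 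The bounded‑reward technical assumption $\mu_u\ge\mu_i\ge\mu_i-g_i(T)\ge\mu_l\ge 0$ is respected by taking the ``bad'' generative mean to still lie in $[\mu_l,\mu_u]$. Then the identical argument gives an affine correspondence between human utility and blocking‑bandits reward, so a polynomial‑time optimizer for our problem yields one for the hard blocking‑bandits instance, contradicting \reth{} unless it is false. Since the reduction blows up $T$ by at most a constant (or polynomial) factor and uses the same number of arms, ``polynomial in the number of arms'' is preserved, which is exactly the regime ruled out by \citet{basu2019blocking}.

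The main obstacle I expect is the mismatch between the \emph{hard} blocking constraint in blocking bandits (an arm is literally unavailable for $D_i-1$ rounds) and the \emph{soft} effective constraint in our model (the human may always pull any arm, just with a degraded generative‑mean gap and thus a degraded win probability, and moreover a premature pull perturbs the geometric decay so the ``delay'' is not crisp). Making an off‑schedule pull provably suboptimal --- rather than merely weakly dominated --- requires a careful gadget design: one wants the reward‑vs‑freshness curve of each arm to be such that the unique optimal play is to re‑pull exactly on the $D_i$‑schedule, and to show this robustly against all $k^T$ human strategies, presumably via an exchange/interval‑scheduling argument that rewrites any human strategy as a feasible blocking‑bandits schedule of no smaller value. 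A secondary nuisance is handling rounding: $\log_{1/\gamma_i}$ of the relevant thresholds must land on integers (or be made to via an $\epsilon$‑slack and a mild inflation of $T$), and one must verify all constructed parameters stay within the model's normalization ($c_i\in[0,1]$, rewards in $[\mu_l,\mu_u]$, $\gamma_i\in[0,1]$) while keeping the description length polynomial so the reduction is genuinely polynomial‑time.
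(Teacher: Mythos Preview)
Your high–level plan --- reduce from the hard \maxreward{} instances of \citet{basu2019blocking} and encode the delay $D_i$ via the arm–specific $g_i$ (uniform $\gamma$) or via $\gamma_i$ (uniform $g$) --- matches the paper. But the mechanism you describe would not work as written, for two reasons. First, the sign is inverted: pulling arm $i$ \emph{raises} $\genmu{i}=\mu_i-g_i(\pullsh{i})$ (since $g_i$ is decreasing), so it makes the GenAI \emph{more} likely to sit on arm $i$, not less. Second, and more importantly, the human's per–round utility is $\probc(\mu_{b_t},\max_j\genmu{j}(t))-c$, i.e.\ it depends on the \emph{maximum} generative mean over all arms, not on whether the GenAI happens to select the same arm as the human. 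Your gadget (``one pull makes $\genmu{i}$ bad for $D_i-1$ rounds'') would therefore depress the human's utility for \emph{every} arm during those rounds, even on a perfectly delay–respecting schedule, and the affine correspondence with the blocking–bandits value is lost.

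The idea the paper uses to fix this is to put the threshold on $\pullsh{i}$ so that it separates ``delay \emph{ever} violated'' from ``delay \emph{always} respected'', rather than ``recently pulled'' from ``not recently pulled''. Concretely: if the human always waits at least $D_i$ rounds between pulls of $i$, then $\pullsh{i}(t)\le \gamma\sum_{\ell\ge 0}\gamma^{D_i\ell}=\gamma/(1-\gamma^{D_i})$ for every $t$; whereas the very first violation forces $\pullsh{i}(t)\ge \gamma+\gamma^{D_i}$ at that round. For any $\gamma\le 1/2$ and $D_i\ge 2$ one has $\gamma/(1-\gamma^{D_i})<\gamma+\gamma^{D_i}$, so there is a clean numerical gap. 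The shrinkage is then tuned so that $\genmu{i}$ crosses a fixed level exactly in that gap: in the uniform–$\gamma$ case via a per–arm scale $\lambda_i=p^\ast/(\gamma+\gamma^{D_i})$ inside a common linear $g$; in the uniform–$g$ case by choosing $\gamma_i\in(0,1/2]$ (via binary search, which is where the polynomial–size description constraint bites) so that $\gamma_i+\gamma_i^{D_i}$ hits a fixed target $\phi$. With this design a single pull never pushes the max generative mean above the ``bad'' level, so a delay–respecting schedule earns exactly $\thetau-c$ every round, any violation drops one round to $\thetal-c<\thetau-c$, and the YES/NO question for \maxreward{} becomes ``can the human achieve $(\thetau-c)(T+1)$?''. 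Your exchange argument is then unnecessary: off–schedule pulls are strictly dominated by construction, not merely weakly.
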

\vspace{-0.25cm}
The key idea in the reduction is to make \emph{violating the delay $D_i$ for an arm $i$ costly} by making the resulting generative mean too high and therefore leading to small utility. Based on the delay value $D_i$ for an arm $i$ we essentially will either tune its shrinkage function $g_i(.)$ when all arms have the same discount factor $\gamma$ or tune its discount factor $\gamma_i$ when all arms have the same shrinkage function $g(.)$ so that the final generative mean is too high once the delay is violated but remains below a certain threshold if the delay is not violated. Therefore, the only way to maximize utility is to pull the arms but without violating their delay. The full details of the proof however require tuning a large number of parameters as well as careful bounding arguments leading to a tedious derivation. 




\subsection{An (Almost) $1/2$-Approximation}\label{subsec:approx_discounting}
First, to establish our bounds, we make natural and standard assumptions about our setting. Specifically, we assume that all shrinkage functions have a Lipshitz constant upper bound $L_g$, i.e., $\forall i \in \arms: |g_i(n_1)-g_i(n_2)| \leq L_g |n_1-n_2|$. Similarly, we assume that the link function is also $L_{\probc}$ Lipshitz, meaning that $|\probc(\mu, \mu_1)-\probc(\mu, \mu_2)| \leq L_{\probc} |\mu_1-\mu_2|$. Further, we assume that the instances we consider have a lower bound on the maximum utility we can gain in the first round, i.e., $\max\limits_{i \in \arms} \big(\probc(\mu_i,\maxgenmu{1})- c_i \big) \ge \uo>0$.\footnote{Note this is a mild assumption as it can be established by assuming the instances have a lower bound on the maximum mean, the shrinkage function at value $0$, and an upper bound on the cost. In fact, the computationally hard instances in the proof of Theorem \ref{th:gen_hardness} fall under this category.}

Our approximation algorithm  follows a natural structure of cycling between a short window (interval) where it pulls arms based on a \emph{myopically} optimal strategy and an equally sized window where it pauses (does not pull arms). Before  introducing  our algorithm, we start with a useful observation. Denote by \myopicopttau{} the optimal value we obtain by running a strategy which optimizes utility from the first round $t=1$ to round $t=\tauwindow$. The following lemma states that the utility we obtain over this window is larger than the utility we obtain from a full horizon optimal strategy (optimizing until $t=T$) over any equally sized window. 
\begin{restatable}{lemma}{lemmamyopic}\label{lemma:myopic_opt_tauwindow}
Let $\bstar \in \arms^T$ be a deterministic optimal strategy for the human, then for any  $\tauwindow$ consecutive rounds  from any $t$ to $t+(\tauwindow-1)$ we have 
\begin{align}
    \myopicopttau \ge u_{t: t+ \tauwindow -1}(\bstar)
\end{align}
\end{restatable}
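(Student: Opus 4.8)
The guiding intuition is monotonicity in the human's accumulated ``backlog'' of pulls: the only way the human's past actions influence her payoff in any round is through the vector of discounted counts $\big(\pullsh{i}(\cdot)\big)_{i\in\arms}$, and a \emph{larger} backlog can only \emph{hurt} her — it shrinks each gap $g_i$, raises each generative mean $\genmu{i}$, and hence (since $\probc$ is decreasing in its second argument) raises the chance the user picks the GenAI. So the plan is to prove the inequality by ``replaying'' the relevant block of $\bstar$ from a clean slate and showing this does at least as well as running that block in the middle of the horizon, then appealing to the definition of \myopicopttau{}. Concretely: let $\tilde{\mathbf{b}} := (b^*_t, b^*_{t+1}, \dots, b^*_{t+\tauwindow-1})$ and regard it as a length-$\tauwindow$ strategy played on rounds $1,\dots,\tauwindow$ starting from empty history. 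Since \myopicopttau{} is by definition the optimal utility of \emph{any} length-$\tauwindow$ strategy run from round $1$, it then suffices to show $u_{1:\tauwindow}(\tilde{\mathbf{b}}) \ge u_{t:t+\tauwindow-1}(\bstar)$.

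The heart of the argument is a round-by-round, arm-by-arm domination of the discounted counts. Aligning round $j$ of the replay with round $t+j-1$ of the original play (so $\tilde b_j = b^*_{t+j-1}$), the discounted count of arm $i$ entering round $t+j-1$ under $\bstar$ equals the discounted count of arm $i$ entering round $j$ under $\tilde{\mathbf{b}}$ \emph{plus} the (nonnegative) discounted contribution of the pulls $b^*_1,\dots,b^*_{t-1}$ made before the window; this is simply the change of variables $s \mapsto s-(t-1)$ in the defining sum of $\pullsh{i}$, together with the observation that the dropped terms are $\ge 0$. Hence, for every arm $i$ and every $j\in\{1,\dots,\tauwindow\}$, the value of $\pullsh{i}$ seen by $\bstar$ at round $t+j-1$ is at least the value of $\pullsh{i}$ seen by $\tilde{\mathbf{b}}$ at round $j$. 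I expect this index bookkeeping — pinning down the exponents and the off-by-one shifts exactly — to be the fiddliest part, though it is conceptually routine; the real content is the monotonicity observation above.

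From count domination everything cascades via the monotonicity already assumed in the model. Each $g_i$ is decreasing, so at the aligned rounds every generative mean $\genmu{i}$ is weakly larger under $\bstar$ than under $\tilde{\mathbf{b}}$; taking $\max_{i\in\arms}$ preserves this, so the generative mean $\max_{i\in\arms}\genmu{i}$ of the arm the GenAI myopically selects (Assumption \ref{prop:mgm_maximizes_prob}) is also weakly larger under $\bstar$. Since the human plays the \emph{same} arm $b^*_{t+j-1}$ in both executions and $\probc(\mu,\cdot)$ is decreasing, the per-round payoff $\probc\big(\mu_{b^*_{t+j-1}},\, \max_{i\in\arms}\genmu{i}\big) - c_{b^*_{t+j-1}}$ is weakly larger in the replay at round $j$ than in the original play at round $t+j-1$; the pausing action $b^*_{t+j-1}=\emptyset$ is the trivial case where both payoffs equal $0$. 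Summing over $j=1,\dots,\tauwindow$ gives $u_{1:\tauwindow}(\tilde{\mathbf{b}}) \ge u_{t:t+\tauwindow-1}(\bstar)$, and combining with $\myopicopttau \ge u_{1:\tauwindow}(\tilde{\mathbf{b}})$ completes the proof.
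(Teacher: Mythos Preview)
Your proposal is correct and follows essentially the same approach as the paper: replay the block $(b^*_t,\dots,b^*_{t+\tauwindow-1})$ from a clean slate, show that the discounted pull counts are pointwise dominated (the paper states this as $N'_i(r)\le N_i(r+t-1)$), push this through the monotonicity of $g_i$ and $\probc$ to get per-round payoff domination, sum, and invoke the definition of \myopicopttau{}. Your write-up is somewhat more explicit about the index bookkeeping and the $\emptyset$ edge case, but the structure is identical.
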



It is easy to see that an algorithm with running time exponential in $\tau$ (specifically, $O(k^\tau)$) can be easily designed (e.g., by simply running a single source longest path algorithm in a directed acyclic graph as detailed in Appendix \ref{app:basic_stg}) for finding the optimal human strategy $\bbold$ for \myopicopttau{}. We denote this strategy by $\bmyopic(\tauwindow)$. Although the run-time is exponential in $\tau$, when $\tau$ is a small constant (as will be in our design), this leads to a polynomial time algorithm for finding $\bmyopic(\tauwindow)$.


Using $\bmyopic(\tauwindow)$, we construct the strategy \textsc{Myopically-Optimize-then-Pause} as shown in Algorithm \ref{alg:myopic_then_pause_discount} which alternates between running the same actions from $\bmyopic(\tauwindow)$   and pausing for $\tauwindow$ rounds.  Theorem \ref{th:optimzethenpause} shows that we can   obtain a $\frac{1-\epsilon}{2}$ approximation if we set $\tauwindow =\ceil{\log_{(\frac{1}{\gmax})} \Big( \frac{\gmax}{(1-\gmax)^2} \cdot \frac{L_g L_{\probc}}{\uo} \cdot \frac{1}{\epsilon} \Big) }=O(\log \frac{1}{\epsilon})$. Note that this does not require $\tauwindow$ to have a dependence on the input parameters $k$ or $T$. The key intuition behind our algorithm is that pausing would enable us to essentially force the GenAI to ``forget'' the contents since their value will be discounted. Therefore, when we run $\bmyopic(\tauwindow)$ again we would obtain a large utility due to Lemma \ref{lemma:myopic_opt_tauwindow}. However, since we are pausing for only a finite duration, the number of contents does not truly go back to $0$. Therefore, the core technical challenges in the proof lie at analyzing the utility loss during the pause window and the utility degradation during the $\bmyopic(\tauwindow)$ window, and carefully balancing them. 



%
\begin{algorithm}[h!] 
    \caption{\textsc{Myopically-Optimize-then-Pause}}
    \label{alg:myopic_then_pause_discount}
    \begin{algorithmic}[1]
    
    \STATE Set $\tauwindow = \ceil{\log_{(\frac{1}{\gmax})} \Big( \frac{\gmax}{(1-\gmax)^2} \cdot \frac{L_g L_{\probc}}{\uo} \cdot \frac{1}{\epsilon} \Big) }$
    \STATE  Find the optimal human strategy for the first $\tauwindow$ rounds and denote it by $\bmyopic(\tauwindow)$.
    \WHILE{$t \leq T$} 
    \STATE Alternate between following strategy $\bmyopic(\tauwindow)$ and pausing for $\tauwindow$  many rounds. 
    \ENDWHILE
    \end{algorithmic}
\end{algorithm}

\begin{restatable}{theorem}{optimzethenpause}\label{th:optimzethenpause}
Suppose  $ \gmax, L_g, L_{\probc}, \uo$ are constants, then \textsc{Myopically-Optimize-then-Pause} (Algorithm \ref{alg:myopic_then_pause_discount}) runs in $O(T + k^{O(\log \frac{1}{\epsilon})} )$ and achieves an approximation ratio of $\frac{1-\epsilon}{2} \frac{\floor{\frac{T}{2 \cdot \tauwindow}}}{\floor{\frac{T}{2 \cdot \tauwindow}}+1}$. 
\end{restatable}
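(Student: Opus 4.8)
The plan is to prove Theorem~\ref{th:optimzethenpause} in three pieces: a running-time bound, a ``recovery after pausing'' estimate, and a global telescoping comparison against the full-horizon optimum $\bstar$.

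\textbf{Running time.} First I would observe that with $\gmax, L_g, L_{\probc}, \uo$ all constants, the window length $\tauwindow = \ceil{\log_{(1/\gmax)}\!\big(\tfrac{\gmax}{(1-\gmax)^2}\cdot\tfrac{L_g L_{\probc}}{\uo}\cdot\tfrac1\epsilon\big)} = O(\log\tfrac1\epsilon)$ depends only on $\epsilon$, not on $k$ or $T$. Computing $\bmyopic(\tauwindow)$ amounts to a longest-path computation on the state-transition DAG of depth $\tauwindow$ with branching $k$ (Appendix~\ref{app:basic_stg}), costing $O(k^{\tauwindow}) = k^{O(\log\frac1\epsilon)}$; executing the alternating schedule over $T$ rounds is $O(T)$. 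So total time is $O(T + k^{O(\log\frac1\epsilon)})$.

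\textbf{Recovery estimate (the main obstacle).} The heart of the argument is: in each ``play'' window (a copy of $\bmyopic(\tauwindow)$ run after a preceding pause window), the utility collected is at least $(1-\epsilon)\cdot\myopicopttau$. The point is that had the platform state been fully reset (all $\pullsh{i}=0$), the play window would earn exactly $\myopicopttau$ by definition; but after pausing for only $\tauwindow$ rounds the discounted pull counts $\pullsh{i}$ are not zero — they have merely been multiplied by $(\gmax)^{\tauwindow}$ relative to their value at the start of the pause. I would bound the residual counts by $\sum_{s}(\gmax)^{\text{age}} \le \frac{(\gmax)^{\tauwindow+1}}{1-\gmax}\cdot(\text{something}) \le \frac{(\gmax)^{\tauwindow}}{(1-\gmax)^2}$ using a geometric series (the counts before the pause are themselves $\le \frac{1}{1-\gmax}$). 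Then a small residual count changes each $g_i$ by at most $L_g\cdot\frac{(\gmax)^{\tauwindow}}{(1-\gmax)^2}$, hence each generative mean $\genmu{i}$ by the same amount, hence each link-function value $\probc$ by at most $L_{\probc}L_g\cdot\frac{(\gmax)^{\tauwindow}}{(1-\gmax)^2}$ per round, i.e. total additive loss over the window $\le \tauwindow\cdot L_{\probc}L_g\cdot\frac{(\gmax)^{\tauwindow}}{(1-\gmax)^2}$. Dividing by $\uo \le \myopicopttau$ (using the first-round lower bound assumption and monotonicity to get $\myopicopttau \ge \uo$) and using the definition of $\tauwindow$ — which is precisely chosen so that $(\gmax)^{\tauwindow}\cdot\frac{\gmax}{(1-\gmax)^2}\cdot\frac{L_gL_{\probc}}{\uo}\le\epsilon$ (up to the $\tauwindow$ factor, which is absorbed because $(\gmax)^{\tauwindow}$ shrinks geometrically while $\tauwindow$ grows only logarithmically — I would be slightly careful here, possibly strengthening the window by a constant factor or noting $\tauwindow(\gmax)^{\tauwindow}$ is still $O(\epsilon)$ for the chosen $\tauwindow$) — yields that the play-window utility is $\ge (1-\epsilon)\myopicopttau$. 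I also need to handle the subtlety that the GenAI state when entering a play window need not equal its state when $\bmyopic(\tauwindow)$ was computed; the monotone dependence of $\probc$ on the GenAI mean and the fact that pausing only raises $\pullsh{i}$ above zero (so $\genmu{i}$ can only be higher, hurting the human) must be reconciled — in fact the clean statement is that the adversarial GenAI best-responds to whatever state it sees, and I compare to the idealized reset run via the Lipschitz bound above rather than claiming a direct monotone inequality.

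\textbf{Global comparison.} With $B := \floor{\tfrac{T}{2\tauwindow}}$ complete play-and-pause cycles, the algorithm earns at least $B(1-\epsilon)\myopicopttau$ (ignoring, as a lower bound, any leftover rounds, which only help). For the optimum, partition the horizon $[1,T]$ into at most $B+1$ consecutive blocks of length $\le \tauwindow$ each — actually into $\le 2B+1$ windows of length $\tauwindow$ if we tile at granularity $\tauwindow$; then by Lemma~\ref{lemma:myopic_opt_tauwindow}, $u_T(\bstar) = \sum (\text{block utility of }\bstar) \le (2B+1)\,\myopicopttau$. Combining, $\tfrac{\uh_T(\text{ALG})}{\uh_T(\bstar)} \ge \tfrac{B(1-\epsilon)\myopicopttau}{(2B+1)\myopicopttau} = \tfrac{1-\epsilon}{2}\cdot\tfrac{B}{B+\frac12} \ge \tfrac{1-\epsilon}{2}\cdot\tfrac{B}{B+1}$, which is the claimed ratio with $B = \floor{\tfrac{T}{2\tauwindow}}$. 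I would double-check the exact bookkeeping on the number of length-$\tauwindow$ tiles covering $[1,T]$ versus the number of full cycles $B$ to land precisely on $\tfrac{\floor{T/(2\tauwindow)}}{\floor{T/(2\tauwindow)}+1}$; the routine edge cases (partial final window, $T$ not divisible by $2\tauwindow$) are handled by rounding down in the algorithm's count and up in the optimum's tiling.
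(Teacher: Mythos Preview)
Your proof follows the same three-part structure as the paper (running time, per-window recovery bound, global tiling against $\bstar$ via Lemma~\ref{lemma:myopic_opt_tauwindow}), and your global comparison lands on exactly the right ratio. The one place you diverge from the paper is the per-window loss estimate, and it is precisely the spot you flagged as uncertain.

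You bound the loss in each play window by $\tauwindow \cdot L_{\probc}L_g \cdot \frac{\gmax^{\tauwindow}}{(1-\gmax)^2}$, picking up an extra factor of $\tauwindow$ that you then have to argue away. The paper avoids this factor by observing that the residual count does not stay fixed during the play window: it continues to decay geometrically. Concretely, if the residual at the start of the play window is at most $\neps := \frac{\gmax}{1-\gmax}\gmax^{\tauwindow}$, then at round $r$ of the window the residual has shrunk to at most $\neps\,\gmax^{r-1}$, so the per-round loss is $L_{\probc}L_g\,\neps\,\gmax^{r-1}$ and the total window loss is
\[
\sum_{r=1}^{\tauwindow} L_{\probc}L_g\,\neps\,\gmax^{r-1} \;\le\; \frac{L_{\probc}L_g}{1-\gmax}\,\neps \;\le\; \epsilon\,\uo \;\le\; \epsilon\cdot\myopicopttau,
\]
where the second inequality is exactly what the chosen $\tauwindow$ was engineered to give. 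This replaces your multiplicative $\tauwindow$ by the convergent geometric sum $\frac{1}{1-\gmax}$, so no adjustment of the window length or absorption of a $\log(1/\epsilon)$ factor is needed. With this one change, your argument matches the paper's.
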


It follows that as $T \to \infty$ the approximation ratio is $\frac{1-\epsilon}{2}$. Moreover, it is easy to see that $\tauwindow$ in Algorithm \ref{alg:myopic_then_pause_discount} is increasing in $\gmax$. This implies that a smaller $\gmax$ leads to smaller $\tauwindow$ and a better approximation ratio for a fixed $T$.




A key feature of our algorithm is that it pauses for a large portion of the horizon. Although this seems counter-intuitive, we show that pausing is actually necessary for securing high human utilities. Specifically, by constructing a special example we obtain the below theorem which proves that if an algorithm does not pause and always pulls arms then it is possible for it to obtain an arbitrarily bad utility.

\begin{restatable}{theorem}{pauseth}\label{th:pauseth}
There   exists an instance such that human will receive negative utility if she pulls arms for more than $0.1T$ rounds (not necessarily consecutively).  
\end{restatable}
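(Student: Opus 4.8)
The plan is to exhibit an explicit single-arm instance in which the human can create content \emph{profitably} only about once every $J$ rounds for a large constant $J$, so that any strategy that creates content in more than $0.1T$ of the rounds is forced to make $\Omega(T)$ ``wasteful'' pulls whose accumulated losses dominate the gains from the few profitable ones. The underlying mechanism is precisely the ``pausing lets the GenAI forget'' effect used by Algorithm~\ref{alg:myopic_then_pause_discount}, pushed to an extreme: because of discounting, one human pull keeps the GenAI's generative mean high for many subsequent rounds, and we design the shrinkage function so that the human only beats the GenAI once that generative mean has decayed all the way back down.

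\emph{Construction.} Take $k=1$, any discount factor $\gamma\in(0,1)$, and a strictly monotone link function (e.g. Bradley--Terry). Set $\mu_l=0$ and $\mu=\mu_u$ moderate (say $\mu=2$), and choose the human cost $c$ with $\probc(\mu,\mu)<c<\probc(\mu,\mu_l)$ (e.g. $c=0.7$, since $\probc(2,2)=\tfrac12<0.7<\probc(2,0)$). For the shrinkage function use the step function $g(n)=\mu$ for $0\le n\le\delta$ and $g(n)=\epsilon_0$ for $n>\delta$, where $\epsilon_0>0$ is a tiny constant and $\delta:=\gamma^{J}$ with $J$ a large constant fixed below. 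One checks that the basic model constraints hold: $g$ is positive and (weakly) decreasing with $\mu-g(0)=\mu_l$, so $\mu_l\le\tilde{\mu}_1(t)\le\mu_u$ always; $c\in[0,1]$; and the largest first-round utility is $\probc(\mu,\mu_l)-c>0$ (so the $u_0$ assumption of Section~\ref{subsec:approx_discounting} holds). The point is that a pull at round $t$ yields per-round utility $v:=\probc(\mu,\mu_l)-c>0$ whenever $\pullsh{1}(t-1)\le\delta$ (then $\tilde{\mu}_1(t)=\mu_l$), and per-round utility at most $-w$ with $w:=\tfrac12\big(c-\probc(\mu,\mu)\big)>0$ whenever $\pullsh{1}(t-1)>\delta$ (then $\tilde{\mu}_1(t)=\mu-\epsilon_0$, taking $\epsilon_0$ small enough using continuity of $\probc$), while pausing yields $0$.

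\emph{Counting argument.} Fix any deterministic strategy, let $S\subseteq[T]$ be its set of pull-rounds with $|S|>0.1T$, and set $A:=\{t\in S:\pullsh{1}(t-1)\le\delta\}$. The key observation is that any two elements $t_1<t_2$ of $A$ satisfy $t_2-t_1\ge J$: since the human pulled at round $t_1$, that pull alone contributes $\gamma^{t_2-t_1}$ to $\pullsh{1}(t_2-1)$, hence $\gamma^{t_2-t_1}\le\pullsh{1}(t_2-1)\le\delta=\gamma^{J}$, and since $x\mapsto\gamma^{x}$ is strictly decreasing this forces $t_2-t_1\ge J$. Therefore $|A|\le T/J+1$, so the total utility is at most
\[
v\,|A|\;-\;w\,(|S|-|A|)\;=\;(v+w)\,|A|-w\,|S|\;\le\;(v+w)\Big(\tfrac{T}{J}+1\Big)-0.1\,w\,T ,
\]
which is negative as soon as $J>10(v+w)/w$ (a constant depending only on $\probc$ and $c$) and $T$ is large enough; fixing such a $J$ pins down $\delta=\gamma^{J}$ and completes the construction. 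For context, the instance is non-degenerate: pulling at rounds $1,J{+}1,2J{+}1,\dots$ keeps $\pullsh{1}(t-1)\le\delta$ at every pull and hence attains utility $\Theta(vT/J)>0$. Also, by Lemma~\ref{lemma:pure_opt_exists} restricting to deterministic strategies is without loss, and the statement extends to randomized strategies by taking expectations, since the inter-pull gap bound on $A$ holds pathwise.

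\emph{Anticipated main obstacle.} The content is entirely in the construction, not the counting. The delicate choices are (a) tuning the link function and cost so that the \emph{same} arm is worth pulling at a ``cold'' generative mean ($\probc(\mu,\mu_l)>c$) yet strictly harmful at a ``warm'' one ($\probc(\mu,\mu)<c$) --- which is exactly where the strict-monotonicity gap $\probc(\mu,\mu_l)-\probc(\mu,\mu)$ is consumed --- and (b) making the shrinkage drop sharp enough (threshold $\delta=\gamma^{J}$) that a single recent pull already pushes $\pullsh{1}$ above $\delta$ for a full $J$ rounds. If one prefers the instance to literally satisfy the Lipschitz regularity of Section~\ref{subsec:approx_discounting} (so that $L_g$ is finite), replace the step in $g$ by a linear ramp on $[\delta,2\delta]$ and rerun the count with the threshold $2\delta$ in place of $\delta$; this only enlarges the required gap bound by the additive constant $\log_{1/\gamma}2$, and $L_g=\Theta(\mu/\delta)$ remains a (large) constant.
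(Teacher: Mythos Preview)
Your main argument is correct and follows essentially the same threshold-plus-separation approach as the paper's proof: construct an instance where a pull is profitable only when the discounted count is below a small threshold, observe that profitable pulls must be well separated in time, and conclude that the many remaining unprofitable pulls dominate. The paper uses two identical arms and a pair of thresholds $N_l=\gamma^{\kappa+1}/(1-\gamma)$ and $N_u=\gamma^{\kappa}$, while you use a single arm and a single threshold $\delta=\gamma^{J}$; your version is arguably cleaner.

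One small slip in the side remark (not in the theorem itself): pulling at rounds $1,J{+}1,2J{+}1,\dots$ does \emph{not} keep $\pullsh{1}(t-1)\le\delta$ at every pull, since at the third pull $\pullsh{1}(2J)=\gamma^{J}+\gamma^{2J}>\gamma^{J}=\delta$. To make the ``$\OPT>0$'' claim go through, either space the pulls slightly farther apart or loosen the threshold to $\delta=\gamma^{J}/(1-\gamma^{J})$; as you already note for the Lipschitz variant, this only shifts the required gap bound by an additive $O(1)$ and the main counting argument is unaffected.
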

\section{Time-Insensitive Domains: $\gamma = 1$}\label{sec:long_horizon}
We now turn to time-insensitive domains, where contents remain relevant, accurate, and useful regardless of when they are accessed or addressed. This captures content generation for history knowledge, timeless literature, science and math knowledge, etc. 
We capture this situation by assuming all arms have discounting factors equal to $1$: $\forall i \in \arms: \gamma_i=1$. Clearly, the more pulls an arm receives by the human the higher the value of its generative mean $\genmu{i}(t)= \mu_i - g_i(\pullsh{i}(t-1))$. Since there is no discounting in this setting, the generative means $\genmu{i}$'s will be non-decreasing and in the long run GenAI will be equally competitive as the human for every topic, i.e.,  $\genmu{i} \to \mu_i , \ \forall i \in \arms$.
For our algorithmic result,  the concrete format of the $g_i$ functions and the link function $\sigma$ do not matter. What will matter is the following  ``thresholding'' value for the generative mean and number of pulls, which are determined by $g_i$, $\sigma$, and $c_i$. 
\begin{definition}[\textbf{Exit Mean} and \textbf{Exit Pull}]
For an arm $i \in \arms$, we define $\muexit{i}$ as the smallest value such that $\probc(\mu_i,\muexit{i}) - c_i <0$. Further, the smallest number $n$ such that $\mu_i - g_i(n) \ge \muexit{i}$ is called the \emph{exit pull} and denoted by $\nexit{i}$. We assume that  for each $i$ we have $\nexit{i} < \infty$. 
\end{definition}
In other words, the $\nexit{i}$'th pull of arm $i$ is the first time $i$ has a generative mean larger than $\muexit{i}$ and hence the first time  a negative utility will be generated if both the human and GenAI pull $i$ since $\probc(\mu_i,\muexit{i}) - c_i <0$. The assumption $\nexit{i} < \infty$ implies any arm has such a moment. 

Throughout this section, we naturally assume that the human will never pull an arm if it has negative utility, since it is surely worse than the trivial $0$-utility opt-out action even when the long-term utility is considered. We first characterize the set of ``plausible'' arms that can possibly be pulled, as summarized in the following observation.
\begin{observation}
Let $ \agmgm(t)$ denote the set of arms at round $t$ that have non-negative human utilities. Then we have \begin{align}
    \agmgm(t) = \{i \in \arms | \muexit{i} > \Tilde{\mu}_{\mgmt}\}.
\end{align} Moreover, $\agmgm(t) \subseteq \agmgm(t-1)$ for every $t=2,3,\cdots, T$.
\end{observation}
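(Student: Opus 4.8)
The plan is to read off both assertions directly from the definitions of the generative mean, the GenAI's myopic rule (Assumption~\ref{prop:mgm_maximizes_prob}), and the exit mean $\muexit{i}$; no extra machinery is needed.

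\emph{The set identity.} First I would fix a round $t$ and the human's realized history $b_1,\dots,b_{t-1}$. Each generative mean $\genmu{j}(t)=\mu_j-g_j(\pullsh{j}(t-1))$ depends only on $b_1,\dots,b_{t-1}$, hence so do the GenAI's arm $\mgmt$ and the value $\Tilde{\mu}_{\mgmt}(t)=\max_{j\in\arms}\genmu{j}(t)$; in particular, the GenAI's round-$t$ action does not depend on the human's round-$t$ action. Consequently, if the human pulls arm $i$ at round $t$ her instantaneous utility is exactly $\probc(\mu_i,\Tilde{\mu}_{\mgmt}(t))-c_i$, so ``arm $i$ has non-negative human utility at round $t$'' is the well-posed condition $\probc(\mu_i,\Tilde{\mu}_{\mgmt}(t))-c_i\ge 0$. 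It then suffices to show that for every $x$ one has $\probc(\mu_i,x)-c_i\ge 0 \iff x<\muexit{i}$: by the defining minimality of $\muexit{i}$ (the smallest value with $\probc(\mu_i,\muexit{i})-c_i<0$), every $x<\muexit{i}$ satisfies $\probc(\mu_i,x)-c_i\ge 0$, and since $\probc$ is decreasing in its second argument, every $x\ge\muexit{i}$ satisfies $\probc(\mu_i,x)-c_i\le \probc(\mu_i,\muexit{i})-c_i<0$. Taking $x=\Tilde{\mu}_{\mgmt}(t)$ yields $\agmgm(t)=\{i\in\arms\mid \muexit{i}>\Tilde{\mu}_{\mgmt}(t)\}$.

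\emph{Monotonicity $\agmgm(t)\subseteq\agmgm(t-1)$.} The key sub-step is that the maximum generative mean is non-decreasing in $t$. Indeed, with $\gamma_i=1$ the quantity $\pullsh{i}(t-1)=\sum_{s=1}^{t-1}\mathbb{I}(b_s=i)$ is a non-negative integer counter, hence non-decreasing in $t$; since each $g_i$ is decreasing, each $\genmu{i}(t)=\mu_i-g_i(\pullsh{i}(t-1))$ is non-decreasing in $t$, and therefore so is $\Tilde{\mu}_{\mgmt}(t)=\max_{j\in\arms}\genmu{j}(t)$. Then for any $i\in\agmgm(t)$ the set identity gives $\muexit{i}>\Tilde{\mu}_{\mgmt}(t)\ge \max_{j\in\arms}\genmu{j}(t-1)=\Tilde{\mu}_{a^*_{t-1}}(t-1)$, which is precisely the membership condition $i\in\agmgm(t-1)$.

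\textbf{Main obstacle.} There is no deep difficulty here; the only points needing care are (i) the strict-versus-weak inequality bookkeeping around $\muexit{i}$, i.e.\ that the non-negativity region of $x\mapsto\probc(\mu_i,x)-c_i$ is the open half-line $\{x<\muexit{i}\}$ and not its closure, and (ii) making explicit that the GenAI's round-$t$ choice is determined by the history through round $t-1$, so that ``arm $i$ has non-negative utility at round $t$'' is a statement about a single-round quantity and is independent of the rest of the human's strategy.
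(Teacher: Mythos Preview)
Your proposal is correct and follows essentially the same route as the paper's own justification: both directions of the set identity come from the definition of $\muexit{i}$ together with the monotonicity of $\probc$ in its second argument, and the nesting $\agmgm(t)\subseteq\agmgm(t-1)$ comes from the non-decreasing maximum generative mean in the $\gamma_i=1$ regime. Your write-up is in fact more careful than the paper's about the strict-versus-weak inequality at $\muexit{i}$ and about why the GenAI's round-$t$ choice is history-determined, but the underlying argument is the same.
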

To develop intuition about our problem, it is useful to  understand  why the observation holds. First, any $i \in \agmgm(t) $ must give non-negative utility. Indeed, when pulling $i$, the human's utility  $\probc(\mu_i,\Tilde{\mu}_{\mgmt}) - c_i  $ is at least $0$ because $ \Tilde{\mu}_{\mgmt} < \muexit{i}$ and $\mu = \muexit{i}$ is the smallest value to make $\probc(\mu_i, \mu) - c_i < 0$. Second, any $i \not \in \agmgm(t) $ must have negative human utility because such $i$ satisfies $ \Tilde{\mu}_{\mgmt} \geq   \muexit{i} $, hence pulling $i$ leads to human utility  $\probc(\mu_i,\Tilde{\mu}_{\mgmt}) - c_i \leq \probc(\mu_i,\muexit{i}) - c_i <0 $. Finally, $\agmgm(t) \subseteq \agmgm(t-1)$ simply because $\Tilde{\mu}_{\mgmt}$ is non-decreasing in $t$ whereas $\muexit{i} $ does not change for each $i$. As a consequence of the above observation, we know that $\agmgm(1)$ contains all the arms that can possibly have positive utilities for human. Now we define the long-horizon setting that we focus on in order to understand humanity's long-term welfare. 
\begin{definition}[\textbf{Long Horizon Setting}]\label{def:long horizon}
The horizon $T$ is long if $T \ge \sum_{i \in \agmgm(1)} \nexit{i}$. 
\end{definition} 
Notably, when the horizon is long, the human will eventually have negative utilities at every arm. Since each arm $i$ can be pulled at most $\nexit{i}$ times as the human never pulls an arm with negative utility due to the existence of $0$-utility opt-out choice, and only arms in $\agmgm(1)$ can possibly be pulled by the human.
Consequently, under the long horizon setting the human's optimization reduces to collecting as much utility as possible before being fully driven out from the competition.\footnote{Note that this is different from staying within the competition for as long as possible since longest stay may not necessarily lead to maximum accumulated utility.} Despite appearing frustrating at a first glance, this should be expected to hold in the long run for time-insensitive content domains such as history knowledge or basic math where creating good content is  costly for the human whereas on the other hand GenAI is increasingly better at answering these factual questions. However, maximizing the human's utility in long horizon settings is highly non-trivial. The theorem below shows our main result for this setting, which is an polynomial time algorithm, which sharply contrasts the computational hardness of Section \ref{sec:discounted}.  

   

\begin{theorem}\label{th:long_horizon_max_utility}
In the long horizon setting, there is an $O(T k^3)$ algorithm that maximizes the human's utility.   
\end{theorem}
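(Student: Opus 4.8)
The plan is to reduce the human's dynamic optimization problem~\eqref{fig:prob-formulation} to a combinatorial scheduling problem over arm pulls and to solve that problem by dynamic programming, which will collapse the exponential-size state‑transition graph of Appendix~\ref{app:basic_stg}. First the reductions. Because $\gamma_i=1$ for every arm, a round in which the human pauses leaves every count $\pullsh{i}$, and hence every future generative mean $\genmu{i}$, unchanged, so inserting a pause merely delays the remaining play by one round; since the horizon is long (Definition~\ref{def:long horizon}) the human can afford every pull she could ever benefit from, so pausing is never strictly beneficial and an optimal strategy is, without loss of generality, a finite sequence of at most $T$ pulls followed by pausing forever, and (by assumption) it never includes a negative‑utility pull. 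Moreover, with $\gamma_i=1$ the generative mean $\genmu{i}(t)=\mu_i-g_i(\pullsh{i}(t-1))$, and thus the competition level $M(t):=\genmu{\mgmt}(t)=\max_i\big(\mu_i-g_i(\pullsh{i}(t-1))\big)$, depends only on the current vector of pull counts rather than on the order of the pulls, and $M(t)$ is non‑decreasing in $t$. Hence the human's payoff is determined by, for each pull, which arm it is and the current count vector.

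\textbf{Structural characterization (the crux).} For arm $i$ write $\theta_{i,r}:=\mu_i-g_i(r)$ for the generative mean reached right after arm $i$'s $r$‑th pull; since $g_i$ is decreasing, $\theta_{i,1}<\theta_{i,2}<\cdots$. I would establish a structural lemma: there is an optimal pull sequence in which the competition level traces a non‑decreasing staircase of levels, each level being one of the $O(T)$ values $\{\theta_{i,r}\}$ (or the initial level $M_0:=\max_i(\mu_i-g_i(0))$), such that (a) at each level the human first performs \emph{every} pull that is affordable there — one whose resulting generative mean does not exceed that level, hence which does not move the competition — before making a single ``advancing'' pull that raises the competition to the next level, and (b) the number of times each arm has been pulled when the competition first reaches a level $M$ is pinned down by $M$ alone, independent of the path used to reach $M$, namely essentially $\max\{n:\mu_i-g_i(n)\le M\}$ capped by the arm's exit pull $\nexit{i}$. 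The proof goes through exchange arguments: an advancing pull can always be deferred behind all the affordable pulls at the current level, since affordable pulls leave the competition unchanged and are only cheaper earlier; and which arm one advances through influences the future only through the resulting competition level and the (path‑independent) counts, so at each level it suffices to branch over the $k$ candidate advancing arms. Handling ties (when $g_i$ is merely non‑increasing) needs a little extra care.

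\textbf{The dynamic program.} Using the lemma, build a layered DAG whose nodes are the $O(T)$ candidate competition levels, carrying whatever $O(k)$‑size bookkeeping the lemma dictates. There is an edge from level $M$ to level $M'$ for each arm $i$ whose next pull (beyond the pinned‑down count at $M$) raises the competition exactly to $M'$; its weight is the immediate utility $\probc(\mu_i,M)-c_i$ of that advancing pull plus the summed utility of the pulls that newly become affordable at $M'$, all evaluated from $\probc$, the costs, and precomputed per‑arm thresholds $\max\{n:\mu_i-g_i(n)\le M\}$ and exit pulls $\nexit{i}$. Because $M$ strictly increases along every edge the graph is acyclic, and a single sweep in increasing order of $M$ yields, for every level, the maximum weight of a path out of it (taking $0$ for the option of stopping); the optimum human utility is then the utility of the pulls affordable at $M_0$ plus this value at $M_0$. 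With $O(T)$ levels, $O(k)$ auxiliary states each, $O(k)$ outgoing edges per state, and $O(k)$ work per edge weight (a sum over arms), this runs in $O(Tk^3)$, and correctness is immediate since optimal staircases correspond exactly to maximum‑weight paths.

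\textbf{Main obstacle.} The bottleneck is the structural lemma — in particular, reorganizing an arbitrary optimal sequence into staircase form and proving that the per‑arm pull counts at a given competition level are path‑independent. The difficulty is that an individually profitable pull can nonetheless hurt by raising the competition that later pulls face, so the interchange arguments do not reduce to adjacent transpositions and must be run globally, leaning only on the monotonicity of $\probc$ and of the $g_i$'s and on the fact that no negative‑utility pull is ever made. Once that is in hand, the DP and its $O(Tk^3)$ running time are routine.
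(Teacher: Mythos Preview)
Your proposal is correct and is essentially the paper's proof: what you call the ``staircase'' form is the paper's \emph{synchronizing strategy}, your ``affordable''/``advancing'' pulls are the paper's synchronizing pulls (arms in $\hold(t)$) and gain pulls, and your exchange argument is exactly Lemma~\ref{lemma:sync}. Your DP over competition levels with $O(k)$ auxiliary state is the paper's reduced DAG over gain nodes $(i,r)$ (arm $i$ just made its $r$-th pull as the level-raising pull), with the same edge weights and the same $O(Tk)\cdot O(k)\cdot O(k)=O(Tk^3)$ accounting; the path-independence of active-arm counts at a given level is the paper's formula $n^{(i,r)}_j=p^{-1}_j(\genmu{i}(p=r),0)$.
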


We note that simple baseline algorithms would either run in exponential time or not lead to an optimal strategy, as detailed in Appendix \ref{app:long_horizon_greedy_subopt}.
Through a collection of lemmas we define a class of strategies which we call \emph{synchronizing} strategies. Specifically, at any round $t$ if there exists an arm whose pull would not increase the maximum generative mean value $\maxgenmu{t}$, then it would be pulled by a synchronizing strategy. We show that there always exists an optimal strategy that is synchronizing and that synchronizing strategies follow a much simpler structure. This enables us to obtain an optimal synchronizing strategy by solving the longest path problem in a \emph{reduced} DAG having at most $O(Tk)$ nodes and can be constructed in $O(T k^3)$ time.

\section{Experiments}\label{sec:experiments}



In this section we carry out simulations in the time-sensitive domains (Subsection \ref{subsec:time_senstive}) and time-insensitive domains (Subsection \ref{subsec:time_insenstive}) comparing the performance of our algorithms to a collection of baselines in each setting. Throughout the experiments we set the link function $\sigma$ according to the standard Bradley–Terry model \cite{bradley1952rank}, i.e., $\probc(\mu_{i}, \mu_j ):= \frac{e^{\mu_i}}{e^{\mu_i} + e^{\mu_j}}$. Further, we set the shrinkage function for a given topic $i$ to $g_i(N):= \frac{q_i}{\sqrt{N + (\frac{q_i}{h_i})^2}}$ where $q_i$ and $h_i$ essentially control the vertical scaling and horizontal shift of the function with $g_i(0) = h_i$. Note that based on the above it follows that Lipshitz constants would be $L_{\probc} = \frac{1}{4}$ and $L_g = \max_{i \in [k]} \frac{h_i^3}{2q^2}$. In Appendix \ref{app:experiments} we conduct additional experiments. 


\subsection{Time-Sensitive Domains}\label{subsec:time_senstive}
Here we test the performance of our \textbf{Myopic-then-Pause} (Algorithm \ref{alg:myopic_then_pause_discount}) against a collection of baselines. We consider the \textbf{Greedy} baseline where at each round $t$ the human pulls the utility-maximizing arm, i.e., $b_t =\argmax_{i\in[k]} \big(\sigma(\mu_{i},\maxgenmu{t})-c_{i}\big)$. We also consider a ``pure'' myopic algorithm (\textbf{Pure-Myopic}) where the \textbf{Myopic-then-Pause} algorithm is run without an enforced pausing interval. Furthermore, we consider a fourth algorithm \textbf{BT-Pull} based on the Bradley-Terry choice model where the human randomly pulls the arms according to their means --- specifically, at round $t$ an arm $i$ is pulled with probability $p_i 
= \frac{j_t(i)\mu_i}{\sum_{i' \in [k]}j_t(i')\mu_i}$, where $j_t(i) = 1$ if pulling arm $i$ at round $t$ would result in nonnegative utility and $0$ otherwise. If all arms result in negative utility, the human instead pauses for one round. We consider a final baseline \textbf{Cycle} where the human cycles between arms, skipping arms that would result in negative utility, and pausing if all arms would have resulted in negative utility. 

For this experiment, we have 4 arms ($k=4$) and means $\muvec= [.73, .85, .9, .95]$,  discounting factors $\gammavec = [.5, .48, .47, .45]$, costs $\cvec = [.53, .56, .59, .58]$, shrinkage parameters $\qvec = [.1, .2, .5, .2]$, and $\hvec= [.3, .45, .6, .6]$. The horizon is set to $T=1,000$, and the approximation constant for \textbf{Myopic-then-Pause} is set to $\epsilon=0.1$.


Figure \ref{fig:experiments} (top) shows the performance of all of these algorithms. In particular, the figure shows the total accumulated utility at a given round $t$, i.e., $u_{1:t}$. Clearly, our algorithm outperforms the other baselines by a significant margin. 



\begin{figure}[h!]
    \centering
    \includegraphics[width=.5\linewidth]{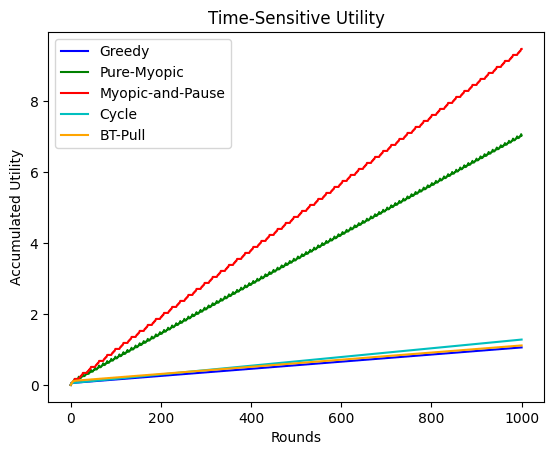}
    \includegraphics[width=.5\linewidth]{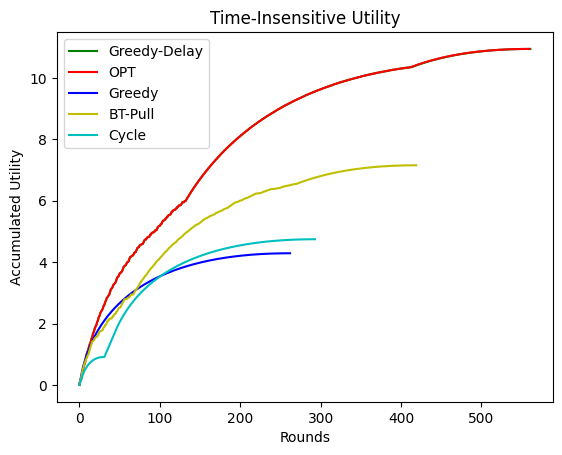}
    \caption{Accumulated utility vs round for each algorithm in: (top) time-sensitive domain. (bottom) time-insensitive domain.}  
    \label{fig:experiments}
\end{figure}




\subsection{Time-Insensitive Domains}\label{subsec:time_insenstive}
Our algorithm which we call \textbf{OPT} for this setting is the reduced DAG based algorithm discussed in Section \ref{sec:long_horizon} and Theorem \ref{th:long_horizon_max_utility} which is always guaranteed to achieve optimal performance. We include some of the baselines from the time-sensitive domain setting. In particular, we still compare against \textbf{Greedy}, \textbf{BT-Pull}, and \textbf{Cycle}. In addition, a reasonable baseline in this setting is \textbf{Greedy-Delay} where in each round $t$ the human pulls the arm that gives the smallest increase in the GenAI's maximum generative mean. At an intuitive level, this baseline would enable the human to stay longer in before exiting the platform where gaining non-negative utility is impossible.

We test an instance with $4$ arms where we set the means to $\muvec=[.86, .95, .87, .95]$, costs to $\cvec=[.56, .52, .55, .53]$, and have shrinkage parameters $\qvec = [1.4, 1.3, 1.5, 1.8]$ and $\hvec = [.75, .8, .6, .78]$. The horizon is set to $T=561$, which is the maximum number of rounds the human can pull arms for positive utility. 




Figure \ref{fig:experiments} (bottom) shows the performance of our algorithm along with the baselines. Clearly, our algorithm achieves the best performance significantly outperforming \textbf{Greedy}, \textbf{BT-Pull}, and \textbf{Cycle}. Interestingly, \textbf{Greedy-Delay} achieves an almost identical performance to our algorithm and has a nearly identical and overlapping trajectory. In fact, the difference in total utility is very small with \textbf{OPT} achieving a total of $10.9460$ while \textbf{Greedy-Delay} achieves $10.9459$. However, we show examples in Appendix \ref{app:experiments} where \textbf{OPT} outperforms \textbf{Greedy-Delay} by more significant margins. 
\section{Discussion, Conclusion, and Future Work}\label{sec:conclusion}

In this paper, we studied the dynamic interaction between a human contributor and a GenAI agent in an online content creation platform. We introduced a model that captures the low production cost of GenAI as well as its need for human content to improve its performance. Our focus was on designing algorithms for optimizing the utility of the human agent. Interestingly, we showed that time-sensitive domains are computationally difficult; not allowing polynomial time algorithms unless the randomized exponential time hypothesis is false. Therefore, we designed an approximation algorithm for this setting. Our algorithm is intuitive and simple, cycling between running a myopic optimization algorithm for a window and pausing for an equally sized window. Finally, we showed that time-insensitive domains with large horizon values permit optimal algorithms that follow a simple structure. Though we have a computational advantage in the time-insensitive domain over the time-sensitive domain, in the time-insensitive domain the human ends up necessarily exiting the platform whereas in the time-sensitive domain the human can essentially remain in the system for an arbitrarily long horizon.

Our work gives rise to many open questions for future investigation. The immediate follow-up question is to improve our algorithmic results, pinning down tight approximate ratios and hardness lower bound. 
Second, our current model has two players (human and GenAI) competing with each other. Future works can study competition among many players and alternative modes of competion in which humans use GenAI as an additional tool for content creation rather than another competitor. Third, we assumed that human and GenAI create contents at the same pace. It is interesting to investigate situations where GenAI content creation is much faster (e.g., can be instant) but is limited by the total number of creatable contents.     


Further, an important assumption implicit in this model is that the GenAI's capabilities at different topics are orthogonal to each other hence topic-$i$ contents do not affect AI's capability at another topic $j$. This assumption is also adopted in multiple recent works   \cite{yao2024human,immorlica2018incentivizing,dai2024can} for model simplicity, but it is an interesting future direction to study situations with cross-topic interactions (akin to the generalization from $i$-armed bandits \citep{auer2002finite} to linear stochastic bandits \citep{abbasi2011improved}).

Finally, we have not discussed the impacts on social welfare which would involve not only the human contributors but also the users. How does GenAI affect social welfare? Our time-insensitive (long horizon) setting shows that human contributors will eventually have to exit the platform, thereby degrading contributor welfare --- but could this degradation be balanced by users having an abundance of GenAI content? Are there interventions which the principal (platform operator) can employ to improve the overall welfare?  

\section{Impact Statement}
In this paper we focused on analyzing the competition between a GenAI and a human agent. Although our work could shed light on potential consequences or behaviors that may arise in real life because of this competition, we do not see any potential negative societal consequences on our work. Furthermore, our work focused on mathematical analysis and simulations instead of designing software that can be deployed, hence we do not foresee any immediate effects on real life scenarios.




\bibliography{refs}
\bibliographystyle{unsrtnat}

\clearpage
\newpage
\onecolumn
\appendix
\section{Useful Mathematical Facts and Lemmas}
\begin{fact}\label{fact:binomial_upper_bound}
For any two number $a,b \ge 0$ and $n \in \mathbb{N}$ we have
\begin{align}
    (a+b)^n \leq a^n + n b (a+b)^{n-1}
\end{align}
\end{fact}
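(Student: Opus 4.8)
The plan is to expand both sides with the binomial theorem and compare them term by term. Writing $(a+b)^n = a^n + \sum_{k=1}^n \binom{n}{k} a^{n-k} b^k$, it suffices to bound the tail sum $\sum_{k=1}^n \binom{n}{k} a^{n-k} b^k$ by $nb(a+b)^{n-1}$. Expanding the right-hand side in the same way, starting from $(a+b)^{n-1} = \sum_{j=0}^{n-1}\binom{n-1}{j}a^{n-1-j}b^j$ and substituting $k = j+1$, one gets $nb(a+b)^{n-1} = \sum_{k=1}^n n\binom{n-1}{k-1} a^{n-k} b^k$.

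So the statement reduces to checking the coefficientwise inequality $\binom{n}{k} \le n\binom{n-1}{k-1}$ for every $1 \le k \le n$, which then transfers to the sums because all the monomials $a^{n-k}b^k$ are nonnegative under the hypothesis $a,b \ge 0$. This last inequality is immediate from the identity $\binom{n}{k} = \tfrac{n}{k}\binom{n-1}{k-1}$ combined with $k \ge 1$, so $\binom{n}{k} = \tfrac{n}{k}\binom{n-1}{k-1} \le n\binom{n-1}{k-1}$.

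An alternative route that avoids binomial bookkeeping is the integral representation $(a+b)^n - a^n = \int_a^{a+b} n t^{n-1}\, dt$; since $t \mapsto t^{n-1}$ is nondecreasing on $[0,\infty)$, the integrand is at most $n(a+b)^{n-1}$ on the whole interval $[a,a+b]$, whose length is $b$, giving $(a+b)^n - a^n \le nb(a+b)^{n-1}$ directly. Either way the argument is entirely elementary; there is no genuine obstacle here, only the choice of which routine computation to present, and the degenerate case $n=0$ (or $n=1$, depending on the convention for $\mathbb{N}$) is checked trivially by hand.
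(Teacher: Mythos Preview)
Your proof is correct and follows essentially the same approach as the paper, which simply states that ``a straightforward proof can be given based on the binomial expansion'' without further detail. Your coefficientwise comparison via $\binom{n}{k} = \tfrac{n}{k}\binom{n-1}{k-1} \le n\binom{n-1}{k-1}$ is exactly the kind of routine verification the paper alludes to, and your integral alternative is a nice bonus.
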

\begin{proof}
A straightforward proof can be given based on the binomial expansion. 
\end{proof}

\begin{lemma}\label{lemma:bound_f_gamma_D}
For $D \ge 2$ and $\gamma \in [0,\frac{1}{2}]$ we have 
\begin{align}
    \gamma \cdot \frac{1}{1-\gamma^{D}} < \gamma + \gamma^{D}   \label{eq:discount_ineq}
\end{align}
\end{lemma}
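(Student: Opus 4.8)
The plan is to clear the denominator and reduce \eqref{eq:discount_ineq} to an elementary polynomial inequality. Since $\gamma \le \tfrac12 < 1$ we have $\gamma^D < 1$, hence $1-\gamma^D > 0$, and multiplying both sides of \eqref{eq:discount_ineq} by this positive quantity shows the claim is equivalent to
\[
\gamma < (\gamma + \gamma^D)(1-\gamma^D).
\]
Expanding the right-hand side gives $\gamma + \gamma^D - \gamma^{D+1} - \gamma^{2D}$, so after cancelling the $\gamma$ on both sides the claim becomes $0 < \gamma^D - \gamma^{D+1} - \gamma^{2D} = \gamma^D\,(1 - \gamma - \gamma^D)$.

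Next I would dispose of the endpoint and then prove the reduced inequality. For $\gamma = 0$ both sides of \eqref{eq:discount_ineq} equal $0$, so strictness fails; this case is degenerate and irrelevant for our applications (there is effectively no discounting), so I would read the lemma as stated for $\gamma \in (0,\tfrac12]$. For $\gamma \in (0,\tfrac12]$ we have $\gamma^D > 0$, so it suffices to show $1 - \gamma - \gamma^D > 0$, i.e. $\gamma + \gamma^D < 1$. Since $D \ge 2$ and $0 < \gamma \le \tfrac12 \le 1$, monotonicity of $x \mapsto \gamma^x$ gives $\gamma^D \le \gamma^2 \le \tfrac14$, hence $\gamma + \gamma^D \le \tfrac12 + \tfrac14 = \tfrac34 < 1$, which completes the argument.

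There is essentially no technical obstacle here. The only points requiring (minor) care are that $1-\gamma^D > 0$, so that multiplying across the inequality preserves its direction, and the bound $\gamma^D \le \gamma^2$, which uses $D \ge 2$ together with $0 \le \gamma \le 1$. The comfortable slack $\tfrac34 < 1$ also indicates the estimate is far from tight, so no sharper case analysis is needed.
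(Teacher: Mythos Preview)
Your proof is correct and follows essentially the same route as the paper: both reduce \eqref{eq:discount_ineq} to the equivalent polynomial inequality $\gamma^{D}+\gamma-1<0$ (after clearing the positive denominator and factoring out $\gamma^{D}$), and both then bound using $\gamma^{D}\le\gamma^{2}$ for $D\ge 2$. Your final step ($\gamma+\gamma^{D}\le\tfrac12+\tfrac14<1$) is a bit more direct than the paper's, which instead checks that $\gamma^{2}+\gamma-1$ is negative at the endpoints $0$ and $\tfrac12$ and has positive derivative on $(0,\tfrac12)$; you also make explicit the $\gamma=0$ degeneracy that the paper's proof glosses over.
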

\begin{proof}
Using algebraic manipulations it is straightforward to show that the given inequality holds if and only if the following inequality holds 
\begin{align}
    \gamma^{D}  + \gamma -  1 < 0 \\
\end{align}
For all $\gamma \in [0,1]$ we have $\gamma^D + \gamma - 1 \leq \gamma^2 + \gamma - 1$ since $D\ge 2$. Therefore, it is sufficient to show that $\gamma^2 + \gamma - 1<0$ for $\gamma \in (0,\frac{1}{2}]$. 

First, we have $\gamma^2 + \gamma - 1\Big|_{\gamma=0} = -1$. Further, $\gamma^2 + \gamma - 1\Big|_{\gamma=\frac{1}{2}} = -\frac{1}{4}$. Therefore, it is sufficient to show that the derivative is positive (so that the function is always increasing) in the interval $(0,\frac{1}{2})$. The derivative is $\frac{d}{d \gamma} (\gamma^2 + \gamma - 1)= 2\gamma +1 > 0, \ \forall \gamma \ge -\frac{1}{2}$. Thus, the derivative is positive in $(0,\frac{1}{2})$. 
\end{proof}

\begin{lemma}\label{lemma:lipshitz_f}
For an integer $D \ge 2$ and $\gamma \in [0,\frac{1}{2}]$ and $\epsilon \in [0,\frac{1}{2}]$ we have for the function $f_D(x)=x^D+x$
\begin{align}
    f_D(\gamma+\epsilon)-f_D(\gamma) \leq (D+1) \epsilon
\end{align}
\end{lemma}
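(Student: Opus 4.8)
The plan is to reduce the claimed inequality to a bound on the difference of the power terms and then exploit the fact that the relevant base stays below $1$. Expanding the definition $f_D(x) = x^D + x$, the left-hand side is
\[
f_D(\gamma+\epsilon) - f_D(\gamma) = \bigl[(\gamma+\epsilon)^D - \gamma^D\bigr] + \epsilon ,
\]
so it suffices to show $(\gamma+\epsilon)^D - \gamma^D \le D\epsilon$; adding back the lone $\epsilon$ then yields the $(D+1)\epsilon$ bound.

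The key observation I would use is that $\gamma \le \tfrac12$ and $\epsilon \le \tfrac12$ force $\gamma + \epsilon \le 1$, which keeps the power function in the regime where its derivative is at most $D$. Concretely, I would apply Fact \ref{fact:binomial_upper_bound} with $a = \gamma$, $b = \epsilon$, $n = D$, giving
\[
(\gamma+\epsilon)^D \le \gamma^D + D\epsilon\,(\gamma+\epsilon)^{D-1}.
\]
Since $0 \le \gamma+\epsilon \le 1$ and $D - 1 \ge 1$, we have $(\gamma+\epsilon)^{D-1} \le 1$, hence $(\gamma+\epsilon)^D - \gamma^D \le D\epsilon$. Combining with the display above, $f_D(\gamma+\epsilon) - f_D(\gamma) \le D\epsilon + \epsilon = (D+1)\epsilon$, as desired. (Alternatively one could invoke the mean value theorem: $(\gamma+\epsilon)^D - \gamma^D = D\xi^{D-1}\epsilon$ for some $\xi \in [\gamma,\gamma+\epsilon] \subseteq [0,1]$, and $\xi^{D-1}\le 1$ finishes it identically.)

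There is no real obstacle here — the argument is a one-line consequence of Fact \ref{fact:binomial_upper_bound} — and the only point worth stating explicitly is that the hypotheses $\gamma,\epsilon \in [0,\tfrac12]$ are exactly what is needed to ensure $\gamma+\epsilon \le 1$, without which the factor $(\gamma+\epsilon)^{D-1}$ could exceed $1$ and the bound would fail.
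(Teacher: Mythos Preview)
Your proposal is correct and follows essentially the same approach as the paper: both apply Fact~\ref{fact:binomial_upper_bound} to bound $(\gamma+\epsilon)^D$ and then use $\gamma+\epsilon \le 1$ to control the factor $(\gamma+\epsilon)^{D-1}$. The only cosmetic difference is that you isolate the lone $\epsilon$ first and reduce to bounding the power difference, whereas the paper carries the full expansion through; the logic is identical.
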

\begin{proof}
\begin{align*}
    f_D(\gamma+\epsilon)-f_D(\gamma) & = (\gamma+\epsilon)^D+(\gamma+\epsilon) - (\gamma^D+\gamma) \\ 
                                   & \leq \gamma^D + D \epsilon (\gamma+\epsilon)^{D-1} + (\gamma+\epsilon)- (\gamma^D+\gamma)  && \text{(By Fact \ref{fact:binomial_upper_bound})} \\
                                   & \leq \gamma^D + D \epsilon (1)^{D-1} + \gamma + \epsilon - \gamma^D - \gamma \\ 
                                   & \leq  D \epsilon + \epsilon  \\
                                   & = (D+1) \epsilon 
\end{align*}
\end{proof}
\section{Proofs for Section \ref{sec:model}}\label{append:model-proofs}
\textbf{Validty of Assumption \ref{prop:mgm_maximizes_prob}.}  For completeness, we start by justifying the statement in Assumption \ref{prop:mgm_maximizes_prob}; that is, at a given round $t$ the GenAI maximizes its probability of being chosen by the user in that round by selecting the arm with the current highest generative mean $\mgmt = \max_{i \in [k]} \genmu{i}(t)$.

We denote by $\bvec_t(i)$ and $\avec_t(j)$ the probability that the human and the GenAI will select arms $i$ and $j$ in round $t$, respectively, then we have 
\begin{align}
    \Prob(a_t > b_t) & = \sum_{i,j \in [k]} \bvec_t(j) \avec_t(i) \probc(\Tilde{\mu}_i,\mu_j) \\ 
                     & = \sum_{i \in [k]} \avec_t(i) \Big( \sum_{j \in [k]} \bvec_t(j) \probc(\Tilde{\mu}_i,\mu_j) \Big) \\
                     & \leq \sum_{j \in [k]} \bvec_t(j) \probc(\Tilde{\mu}_{\mgmt},\mu_j) && \text{(by the mean monotonicity property of $\probc$)}
\end{align}

We restate the next proposition and give its proof 
\leastutil*
\begin{proof}
Given strategy $\bbold$ which is possibly randomized (i.e., randomizing over the selected arms in a given round $t$), we can prove the following
\begin{align*}
        \uh_T(\bbold;\astar(\bbold)   & = \E[\sum_{t=1}^T \big[  \probc (\mu_{b_t}, \maxgenmu{t})    -   c_{b_t} \big]  ] \\ 
                               & \leq \E[\sum_{t=1}^T \big[  \probc (\mu_{b_t}, \Tilde{\mu}_{\armtai})    -   c_{b_t} \big]  ] && \text{(By the monotonicity of $\probc$)}\\ 
                               & = \uh_T(\bbold;\abold) 
\end{align*}
\end{proof}

We restate the next proposition and give its proof 
\myopicoptobv* 
\begin{proof}
We denote the human's oblivious strategy by $\bobv$. Let $\aprime$ denote some arbitrary GenAI strategy. Note that both $\bobv$ and $\aprime$ could possibly be randomized. 
We follow a proof by induction and accordingly start with the base case (utility over round $1$) which follows immediately by the monotonicity of $\probc$. 
\paragraph{Base Case:} $\uairange{1}{1}(\smgm;\sobv) \ge \uairange{1}{1}(\aprime;\sobv)$. 


\paragraph{Inductive Step:} 
Note that for any two strategies (even if randomized) $\bbold$ for the human and $\abold$ for the GenAI, by the linearity of the expectations the utility brakes down as 
\begin{align}
    \uairange{1}{t}(\abold;\bbold) = \uairange{1}{t-1}(\abold;\bbold) + \Prob(a_t > b_t;\abold;\bbold)  
\end{align}
Since the human's strategy is oblivious then it is independent of the GenAI's strategy and therefore the probability that the human will pull arm $i$ at round $t$ will be $\bvec_t(i,\sobv)$ for both strategies $\smgm$ and $\aprime$. Therefore, it follows by Assumption \ref{prop:mgm_maximizes_prob} that $\Prob(\mgmt > b_t; \smgm,\sobv) \ge \Prob(a_t > b_t; \aprime, \sobv)$. Further, since $\uairange{1}{t-1}(\smgm;\sobv) \ge \uairange{1}{t-1}(\aprime;\sobv)$ by the inductive hypothesis then $\uairange{1}{t}(\smgm;\sobv) \ge \uairange{1}{t}(\aprime;\sobv)$. By finally setting $t=T$ the proposition is complete. 
\end{proof}

We restate the next lemma and give its proof
\pureoptexists*
\begin{proof}
Our proof here is non-constructive. Consider an optimal randomized strategy. Since the GenAI's actions are dependent on the pulls the human makes since the pulls decide the values of $\genmu{i}(t)$ and since the human's decisions are independent of the realized rewards, the probability of strategy $\bbold$ pulling an arm $i$ at round $t$ is dependent on the previous pulls made by the human $\{b_1,b_2,\dots,b_{t-1}\}$.

We will convert strategy $\bbold$ to another strategy that does not randomize at the last round $t=T$. Consider some realization where the pulls are $\{b'_1,\dots,b'_{t-1}\}$, then at the last round the strategy will pull each arm $i \in \arms$ with probability $p_T(i|\{b'_1,\dots,b'_{t-1}\})$. Selecting no arm, i.e., playing $\emptyset$ happens with probability $p_T(\emptyset|\{b'_1,\dots,b'_{t-1}\})$. The incremental utility will be $\sum_{i \in \arms} p_T(i|\{b'_1,\dots,b'_{t-1}\}) (\probc(\mu_i,\maxgenmu{T})-c_i)$ by choosing instead arm $i = \argmax\limits_{j \in \arms \cup \emptyset} (\probc(\mu_j,\maxgenmu{T})-c_j)$ then we have a strategy that achieves the same utility but does not randomize in the last round. By successively invoking the same argument for rounds $T-1, T-2, \dots, 1$ we will have a deterministic strategy which achieves the same utility. 
\end{proof}

\section{Dynamic Optimization as Longest Path in State Transition Graphs}\label{app:basic_stg} 
Here we show how to re-formulate human's dynamic optimization problem in Problem \eqref{fig:prob-formulation} as a longest path problem in a natural state transition graph, described below.
Before doing that it is useful to introduce the following notation for the generative mean of an arm. Specifically, we let $\genmu{i}(p=r)= \mu_i - g_i(r)$, i.e., the generative mean value of arm $i$ when the human has pulled the arm $r$ many times (or in general $r$ is the discounted number of pulls when $\gamma_i <1$). Note that we include ``$p=$'' in the argument of $\genmu{i}(p=r)$ to emphasize that it is a function of the number of pulls. This does not imply however that $\genmu{i}$ does not vary with different rounds. As stated earlier the action of not pulling an arm will be represented by pulling arm $\emptyset$. Following Lemma \ref{lemma:pure_opt_exists} our focus is restricted to deterministic strategies.

We now recall a standard result from graph theory. 
\begin{lemma}\label{th:dag_solvable_in_poly}
There exists an algorithm that can find the longest path in a directed acyclic graph (DAG) from a single source in $O(|V|+|E|)$ run-time where $V$ is the set of vertices and $E$ is the set of edges\footnote{Note that this holds even if the edge weights can be negative}. 
\end{lemma}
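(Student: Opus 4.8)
The plan is to reduce the longest-path computation to a topological sort followed by a single linear-time dynamic programming pass. First I would compute a topological ordering $v_1, v_2, \dots, v_{|V|}$ of the DAG, i.e., an ordering of the vertices in which every edge $(v_i, v_j)$ satisfies $i < j$. Such an ordering exists precisely because the graph is acyclic, and it can be produced in $O(|V|+|E|)$ time by the standard algorithm: repeatedly remove a vertex of in-degree zero and append it to the order (maintaining in-degrees via a queue), or equivalently run a depth-first search and list the vertices in reverse order of finishing times. Both implementations examine each vertex once and each edge once.

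Next I would run the following dynamic program. Let $s$ be the given source. Initialize $d(s) = 0$ and $d(v) = -\infty$ for every $v \neq s$. Process the vertices in topological order; when processing a vertex $u$, for each outgoing edge $(u, v)$ of weight $w(u, v)$, perform the relaxation $d(v) \leftarrow \max\{\, d(v),\ d(u) + w(u, v) \,\}$ (skipping the update when $d(u) = -\infty$). After all vertices have been processed, $d(v)$ equals the weight of the heaviest directed path from $s$ to $v$ (with $d(v) = -\infty$ meaning $v$ is unreachable from $s$); the length of the longest path out of the source is then $\max_{v} d(v)$, and the path itself is recovered by storing, for each $v$, the predecessor attaining its maximum and backtracking.

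Correctness follows by induction along the topological order. The key invariant is that when we begin processing $u$, every vertex $w$ with an edge into $u$ has already been fully processed, since $w$ precedes $u$ in the ordering; hence $d(u)$ has already absorbed the contribution of every incoming edge and therefore equals the correct recursive value $\max\bigl( \mathds{1}[u=s]\cdot 0,\ \max_{(w,u)\in E}\{d(w)+w(w,u)\}\bigr)$, which is exactly the heaviest-path value to $u$. Acyclicity is used twice: it makes ``longest path'' well-defined in the first place (there are no cycles, in particular no positive-weight cycles, so the optimum over walks is attained by a simple path), and it guarantees that all predecessors of $u$ are processed before $u$, which is why the argument tolerates negative edge weights with no modification, unlike the situation for general digraphs. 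For the running time, the topological sort costs $O(|V|+|E|)$, the initialization costs $O(|V|)$, and the relaxation pass touches each vertex once and each edge exactly once (when its tail is processed), for a total of $O(|V|+|E|)$. There is no genuinely hard step here; the only point that merits careful statement is the induction invariant, whose validity rests entirely on the ordering property of the topological sort.
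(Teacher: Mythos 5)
Your proof is correct: the paper itself does not spell out an argument but simply cites a standard algorithms textbook, and your topological-sort-plus-relaxation dynamic program (with the induction invariant along the topological order explaining why negative weights cause no trouble) is exactly the standard argument that citation refers to. Nothing is missing, and the $O(|V|+|E|)$ accounting is right.
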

\begin{proof}
See for example \cite{sedgewick2011algorithms}. 
\end{proof}

This leads us to construct the state transition graph as follows:
\begin{enumerate}
    \item \textbf{Node Attributes}: Each node will have $k+1$ many numbers $(n_1,n_2,\dots,n_k; d)$ associated with it. For all $i \in [k], \ n_i$ denotes the number of (discounted) pulls for arm $i$. Additionally, the attribute $d$ indicates its distance from the root (first node). 
    \item \textbf{First Node:} Draw the \emph{first node} in the graph with $n_1=n_2=\dots=n_k=0$ and $d=0$. 
    \item \textbf{Recursive Node Drawing and Edge Connections:} Given a node $(n_1,n_2,\dots,n_k;d)$ such that $d< T$ for each $i\in \arms$ draw an edge to a new node $(n'_1,n'_2,\dots,n'_k;d')$ such that $n'_i=\gamma_i + \gamma_i^2 \cdot n_i$ and $\forall j \in \arms, j \neq i: n'_j=\gamma_j^2 \cdot n_j$. Further, for the new node set $d'=d+1$. Moreover, set the edge weight to $\probc(\mu_i,\max\limits_{j \in [k]} \genmu{j}(p=n_j)) -c_i$.  Finally, draw an edge to a new no action $\emptyset$ node, with $n'_i = \gamma_i^2 n_i \ , \forall i \in \arms$ and set the edge weight to $0$ and $d'=d+1$. If $d =T$, then the node has no outgoing edges. 
\end{enumerate}
See Figure \ref{fig:dag_exp_app_new} for an example of this state transition graph. The graph has a node for each state that can be reached and the edge weight is set equal to the utility gained by taking the action (pulling the corresponding arm). The graph is clearly a directed acyclic graph (DAG). It follows directly that maximizing utility is equivalent to finding the longest path starting from the first node $(n_1=0,n_2=0,\dots,n_k=0 ; d=0)$. Although by Lemma \ref{th:dag_solvable_in_poly} this can be done in time polynomial in the size of the graph, the size of the graph is $\Omega(k^T)$ which is exponential in the input parameters $T$ and $k$. Therefore, it would take $O(k^T)$ time to construct the graph and solve the single source longest path problem for it. This graph will however be important in our discussion and we will refer to it as the \emph{exponential graph}. 

\begin{fact}
Maximizing utility is equivalent to solving the longest path problem starting from the first node $(n_1=0,n_2=0,\dots,n_k=0;d=0)$ in the exponential graph. 
\end{fact}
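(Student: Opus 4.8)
The plan is to exhibit an explicit weight-preserving bijection between deterministic human strategies and paths out of the first node of the exponential graph, and then to appeal to Lemma~\ref{th:dag_solvable_in_poly}. By Lemma~\ref{lemma:pure_opt_exists} it suffices to range over deterministic strategies $\bbold=(b_1,\dots,b_T)$ with each $b_t\in\arms\cup\{\emptyset\}$. First I would define the map $\Phi$ sending such a $\bbold$ to the path that starts at the first node and, having reached the depth-$(t-1)$ node, traverses the unique outgoing edge labeled by action $b_t$ (the edge to the arm-$b_t$ child, or to the $\emptyset$-child when $b_t=\emptyset$). Since by construction every node of depth strictly less than $T$ has exactly one outgoing edge per action in $\arms\cup\{\emptyset\}$ and no outgoing edges once its depth equals $T$, the map $\Phi$ is well defined and lands in the set of paths of length exactly $T$ starting at the first node; reading the edge labels back off such a path inverts $\Phi$, so $\Phi$ is a bijection onto these paths.

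Next I would verify, by induction on $t$, that the depth-$t$ node on the path $\Phi(\bbold)$ carries attributes $(n_1,\dots,n_k;t)$ with $n_i=\pullsh{i}(t)$, the discounted pull count of arm $i$ after round $t$ under $\bbold$. The base case $t=0$ is immediate since all $n_i=0=\pullsh{i}(0)$. For the inductive step one checks that the recursive node-drawing rule in the construction is exactly the one-step update of the discounted counts: taking action $b_t=i$ re-discounts every count and additionally absorbs the fresh pull into arm $i$'s count, while the $\emptyset$ action merely re-discounts all counts, which is precisely the recurrence defining $\pullsh{i}(t)$ from $\pullsh{i}(t-1)$. The main (and only mildly fiddly) obstacle here is keeping the off-by-one indexing straight: the depth-$(t-1)$ node, i.e.\ the tail of the step-$t$ edge, holds exactly the counts $\pullsh{i}(t-1)$ on which the round-$t$ generative means depend, so that $\genmu{i}(t)=\mu_i-g_i(\pullsh{i}(t-1))=\genmu{i}(p=n_i)$ with $n_i$ the depth-$(t-1)$ attributes.

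With this in hand, the weight of the step-$t$ edge of $\Phi(\bbold)$ equals the human's round-$t$ incremental utility against the GenAI's myopic response: by construction the weight of the arm-$b_t$ edge is $\probc\big(\mu_{b_t},\max\limits_{j\in[k]}\genmu{j}(p=n_j)\big)-c_{b_t}$ with the $n_j$ the depth-$(t-1)$ attributes, and by the previous paragraph $\max_j\genmu{j}(p=n_j)=\max_j\genmu{j}(t)=\genmu{\mgmt}(t)$, which is exactly the generative mean of the arm GenAI plays under Assumption~\ref{prop:mgm_maximizes_prob}; the $\emptyset$ edge has weight $0$, matching the conventions $\probc(\emptyset,\cdot)=0$ and zero cost. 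Summing edge weights along $\Phi(\bbold)$ and comparing with Equation~\eqref{eq:human-U} yields that the total path weight is $\uh_T(\bbold;\astar(\bbold))=\uh_T(\bbold)$. Hence $\max_{\bbold}\uh_T(\bbold)$ equals the maximum weight over length-$T$ paths out of the first node. Finally, any shorter path out of the first node can be padded with weight-$0$ $\emptyset$-edges to a length-$T$ path of the same weight, so the overall longest path out of the first node also has weight $\max_{\bbold}\uh_T(\bbold)$, and an optimal strategy is recovered from a longest path through $\Phi^{-1}$ (after such padding). Everything beyond the indexing bookkeeping and this last padding remark is a direct unwinding of the definitions, so the Fact essentially records that the exponential graph was built to have this property.
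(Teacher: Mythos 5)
Your proposal is correct and is essentially the paper's own (largely implicit) argument: the exponential graph is a depth-$T$ tree whose edges encode the actions in $\arms\cup\{\emptyset\}$, so your strategy-to-path bijection, the restriction to deterministic strategies via Lemma~\ref{lemma:pure_opt_exists}, the edge-weight bookkeeping against the myopic GenAI of Assumption~\ref{prop:mgm_maximizes_prob}, and the padding of shorter paths by zero-weight $\emptyset$-edges are exactly what the paper's ``follows directly from the construction'' remark is shorthand for. The only caveat is in your inductive step, where you assert the node-drawing rule is literally the recurrence $\pullsh{i}(t)=\gamma_i\,\pullsh{i}(t-1)+\gamma_i\,\mathbb{I}(b_t=i)$, whereas the construction as printed updates via $n'_i=\gamma_i+\gamma_i^2 n_i$ and $n'_j=\gamma_j^2 n_j$ (an extra discount factor, evidently a typo); your argument is the right one for the intended update $n'_i=\gamma_i(n_i+1)$, $n'_j=\gamma_j n_j$, which is what the Fact requires.
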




\begin{figure}
    \centering
    \includegraphics[width=0.75\linewidth]{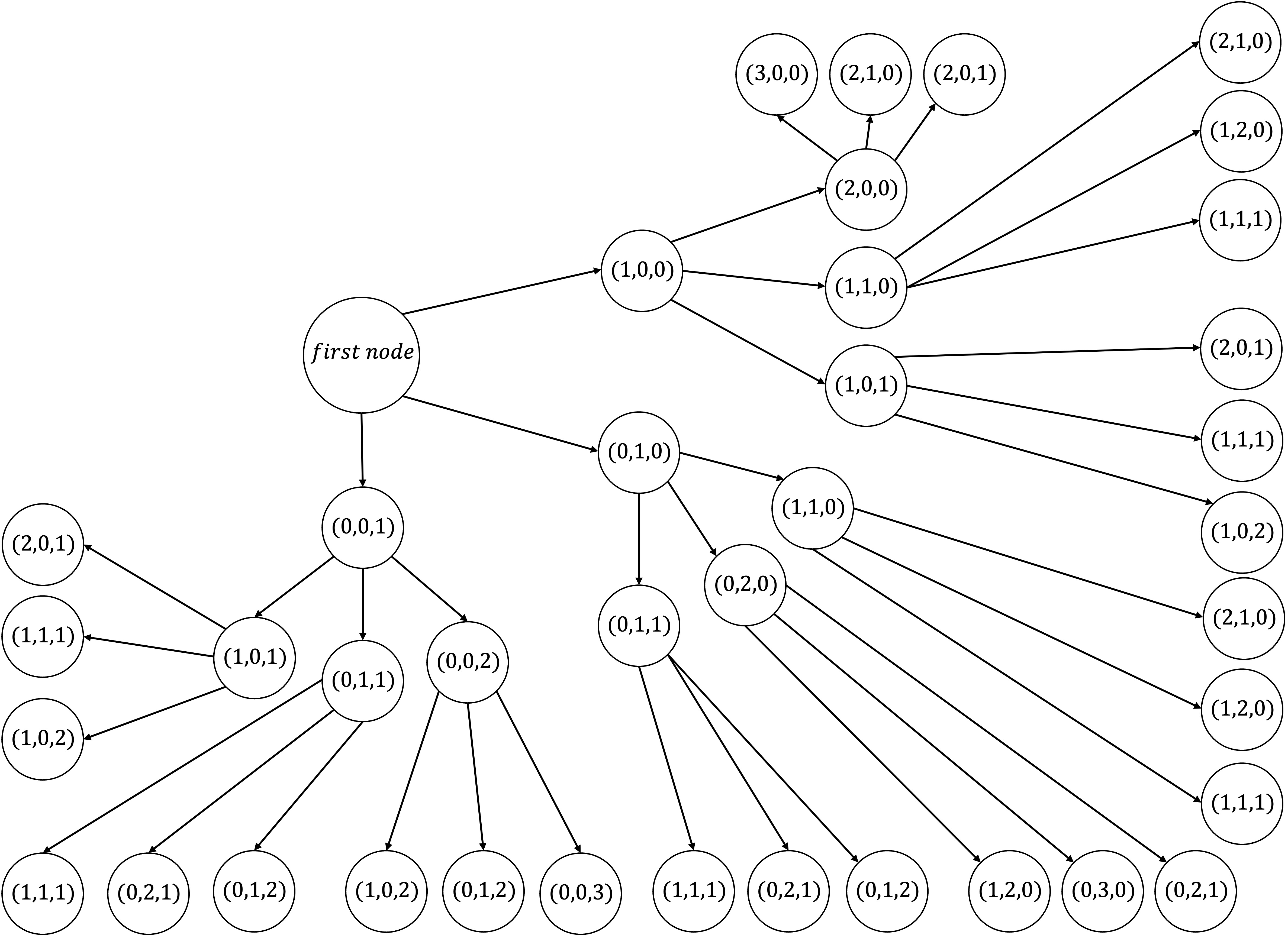}
    \caption{An example of the exponential graph for $k=3$ and $T=3$. For simplicity we assume that $\gamma_i=1 , \forall i \in [k]$ so that each node is recording the total number of pulls. The value of the attribute $d$ and  no action $\emptyset$ nodes are not drawn to keep the graph simpler.}
    \label{fig:dag_exp_app_new}
\end{figure}

\section{Proofs for Section \ref{sec:discounted}} \label{app:discounted_proofs}
\subsection{Hardness Proofs for Subsection \ref{subsec:discount_hardness}}\label{app:hardness_proof}
\paragraph{Brief Review of the Blocking Bandits and the \maxreward{} Problem:}  In blocking bandits \cite{basu2019blocking}, each arm $i \in \arms$ has a mean $\mu_i$ and additionally has associated with it a \emph{delay value} $D_i \ge 1$. If arm $i$ is pulled at round $t$ then it cannot be pulled again until at least round $t+D_i$, i.e.,  for $D_i-1$ consecutive rounds we cannot pull arm $i$. The following is a simple instance to illustrate  the problem, consider the example below with delay values for arms $1,2$ and $3$ as shown in the table. Then a feasible solution to the given example would be $3\ 1\ 2\ 1\ 3\ 1\ 2\ 1\ $.


Given the above, suppose that all of the mean and delay values are known and that we want to pull arms through the horizon to maximize the accumulated expected rewards while not violating the delay values, i.e., solving the following \maxreward{} problem: 
\begin{align}
    \maxreward:
    & \max_{b_1,b_2,\dots,b_T} \sum_{t=1}^T \mu_{b_t} \\ 
    & \text{s.t. each arm being pulled such that its delay $D_i$ is not violated}
\end{align}
\citet{basu2019blocking} show that the following simple instance of \maxreward{} is computationally hard. Specifically, the result is  
\begin{lemma}[\cite{basu2019blocking}]\label{th:blockingMAB_is_np_hard}
Suppose that we have $k+1$ arms where the first $k$ arms have a mean of $1$ and a delay value $D_i$ with $\sum_{i=1}^k \frac{1}{D_i}=1$, the last arm has a mean $0$ and a delay of $1$, then deciding if \maxreward{} has a solution of value of at least $T$ cannot be done in pseudo-polynomial time in the number of arms $k$ unless the randomized exponential time hypothesis is false. 
\end{lemma}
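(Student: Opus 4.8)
The plan is to reduce from a standard \reth{}-hard combinatorial problem, exploiting the fact that the density identity $\sum_{i=1}^k \tfrac{1}{D_i}=1$ forces any value-$T$ solution to be extremely rigid; this broadly follows the route of \citet{basu2019blocking}. First I would reduce the decision question to a pure tiling question: in the stated instance every round contributes reward at most $1$ (attained only by pulling one of the first $k$ arms) and the mean-$0$ arm contributes nothing, so $\maxreward \le T$ always, and ``$\maxreward \ge T$'' holds iff some feasible schedule occupies every one of the $T$ rounds with a mean-$1$ arm --- equivalently, iff the $k$ mean-$1$ arms, under the blocking rule that consecutive pulls of arm $i$ are $\ge D_i$ rounds apart, can be scheduled to exactly tile $\{1,\dots,T\}$.

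Next I would make this tiling question fully explicit. Taking the horizon $T$ to be a multiple of $\operatorname{lcm}(D_1,\dots,D_k)$, arm $i$ can be pulled at most $T/D_i$ times, so the number of mean-$1$ pulls is at most $\sum_i T/D_i = T$, with equality only if each arm $i$ is pulled exactly $T/D_i$ times; a short counting/exchange argument then shows that in any perfect tiling the pull times of arm $i$ must be spaced at exactly its period $D_i$, i.e.\ they form a residue class $\{x : x \equiv o_i \pmod{D_i}\}$ for some offset $o_i$. Hence ``$\maxreward \ge T$'' becomes equivalent to the existence of offsets $o_1,\dots,o_k$ whose congruence classes form an \emph{exact covering system} of $\mathbb{Z}$ with the prescribed moduli $D_1,\dots,D_k$.

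It then remains to prove that deciding the existence of an exact covering system (equivalently, a perfect periodic schedule) with a prescribed multiset of moduli is \reth{}-hard. I would do this by reducing from a standard \reth{}-hard problem --- for instance (parameterized) $k$-Clique, or $3$-SAT / Exact Cover by $3$-Sets --- via a Chinese-Remainder encoding: coprime blocks of moduli whose chosen offsets encode the combinatorial choice (vertices, resp.\ truth assignments), extra moduli that admit a consistent offset iff the corresponding constraint (edge, resp.\ clause) is satisfied, and padding moduli inserted to restore $\sum_i 1/D_i = 1$. A ``yes'' instance of the source problem then corresponds precisely to a perfect periodic schedule and conversely, so an algorithm for $\maxreward$ running in time polynomial in $k$ and in the magnitudes of $T$ and $\max_i D_i$ would solve the source problem in (randomized) sub-exponential time, contradicting \reth{}.

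The two reductions-to-tiling are elementary; the real work --- and the main obstacle --- is the last step. One must design the Chinese-Remainder gadgets so that the density equality $\sum_i 1/D_i = 1$ holds \emph{exactly} while still faithfully encoding the source instance, and at the same time keep $\operatorname{lcm}(D_1,\dots,D_k)$, hence the horizon $T$, small enough that the resulting lower bound is a genuine consequence of \reth{} rather than of a stronger assumption. Making the arithmetic of the density identity coexist with this size budget is the delicate, bookkeeping-heavy part of the argument, and is essentially the content of \citet{basu2019blocking}.
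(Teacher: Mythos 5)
This lemma is not proved in the paper at all: it is imported verbatim from \citet{basu2019blocking}, and their argument works by observing that, for these density-one instances, ``$\maxreward \ge T$'' is equivalent to schedulability of a Pinwheel Scheduling instance with densities $1/D_i$ summing to $1$, and then invoking the known result that Pinwheel Scheduling admits no pseudo-polynomial time algorithm unless \reth{} fails. Your first two steps are consistent with that route: the reduction of the decision question to ``every round is filled by a mean-$1$ arm,'' and the counting argument showing that (for $T$ a multiple of $\operatorname{lcm}(D_1,\dots,D_k)$) a value-$T$ schedule forces each arm $i$ to be pulled exactly $T/D_i$ times, are correct in outline, though your claim that the pulls of arm $i$ must then sit in a single residue class modulo $D_i$ is not automatic --- the gap constraints leave slack of up to $D_i-1$ per arm, so the ``short exchange argument'' you wave at does need to be carried out (or, better, one should only prove the weaker equivalence with pinwheel/cyclic schedulability, which is all the reduction needs).

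The genuine gap is the third step, which is the entire content of the lemma: you do not prove that the exact-covering/perfect-tiling problem is \reth{}-hard, you only propose to do so via Chinese-Remainder gadgets reducing from $k$-Clique or $3$-SAT, with padding moduli to restore $\sum_i 1/D_i = 1$. No gadget is constructed, no argument is given that the density identity can be met exactly while the moduli (and hence $T$) stay within the size budget needed for an \reth{}-based lower bound, and you yourself concede that this is ``the delicate, bookkeeping-heavy part.'' As written, the proposal therefore establishes only the easy equivalences and defers the hardness itself. Note also that even \citet{basu2019blocking} do not construct such a reduction from scratch; they cite the existing \reth{}-hardness of density-one Pinwheel Scheduling. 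So the correct completion of your argument is either to cite that result after your tiling equivalence, or to actually build and verify the CRT-style reduction --- the latter is a substantial piece of work that the proposal currently leaves entirely open.
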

We will use this instance to construct a reduction to our problem. Moreover, since the theorem above shows that no pseudo-polynomial time algorithm in the number of arms $k$ exists, we will further assume that for all arms $i \in [k]$ we have $D_i =\poly(k)$ where $\poly(.)$ is some polynomial function. We will then show a polynomial time reduction of such instances to our problem. This would then imply that our problem does not admit a polynomial time solution unless the randomized exponential time hypothesis is false.


Theorem \ref{th:gen_hardness} is proved based on Lemmas \ref{th:NP_hard_same_gamma} and \ref{th:NP_hard_same_shrinkage} which are proved below. We start with Lemma \ref{th:NP_hard_same_gamma} where all arms have the same discount factor $\gamma$ 

\begin{restatable}{lemma}{NPhardsamegamma}\label{th:NP_hard_same_gamma}
Suppose $\gamma_i < 1$ for all $i \in \arms$, unless the randomized exponential time hypothesis is false there does not exist a polynomial time deterministic optimal strategy for the human even if all arms have the same cost $c$ and the same discount factor $\gamma$. 
\end{restatable}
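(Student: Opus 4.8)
The plan is to give a polynomial-time reduction from the hard instance of \maxreward{} described in Lemma \ref{th:blockingMAB_is_np_hard} (with $D_i = \poly(k)$ for all arms) to our human-utility-maximization problem in the time-sensitive setting, with all arms sharing a common cost $c$ and common discount factor $\gamma$. The central design principle, as the excerpt already hints, is that \emph{re-pulling an arm before its delay $D_i$ has elapsed should be penalized} by pushing its generative mean too high, which in turn makes the link-function probability $\probc(\mu_i, \Tilde\mu_i)$ too small to recover the cost $c$. Thus, since all arms carry the same $c$ and same $\gamma$, the lever I have to encode the per-arm delay $D_i$ is the \emph{shrinkage function} $g_i(\cdot)$: I will tune each $g_i$ so that, given the common $\gamma$, the generative mean $\Tilde\mu_i(t) = \mu_i - g_i(\pullsh{i}(t-1))$ stays below a uniform threshold $\Tilde\mu_{\text{thresh}}$ exactly when arm $i$ was last pulled at least $D_i$ rounds ago, and exceeds it otherwise. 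Concretely, if arm $i$ was last pulled $s$ rounds ago (with no intervening pulls of $i$) then $\pullsh{i}(t-1) = \gamma^{s}$ (plus possibly smaller contributions from still-older pulls, which only help); so I want $g_i(\gamma^{D_i}) $ small enough and $g_i(\gamma^{D_i - 1})$ large enough to straddle the threshold. Since $g_i$ only needs to be positive, decreasing, and Lipschitz-ish on the relevant range, such functions clearly exist; the work is in making the numbers consistent across all arms simultaneously.

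The skeleton of the reduction is then as follows. First I set the time horizon of our instance equal to $T$ from the \maxreward{} instance, the common cost to some fixed $c \in (0,1)$, and pick a common $\gamma \in (0, \tfrac12]$ (keeping $\gamma$ in this range lets me invoke Lemmas \ref{lemma:bound_f_gamma_D} and \ref{lemma:lipshitz_f} for the error bounds). For each of the first $k$ arms I keep the \maxreward{} mean $\mu_i = 1$ (after rescaling into $[\mu_l,\mu_u]$) and I construct $g_i$ so that the generative mean lands in a ``good'' regime (utility at least some positive $\delta$ per pull) if the delay is respected and in a ``bad'' regime (utility strictly negative, so the human would rather play $\emptyset$) if the delay is violated. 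The $(k+1)$-st arm of the \maxreward{} instance has mean $0$ and delay $1$; in our instance it becomes an arm that always yields utility $0$ (or I simply let the human use the built-in $\emptyset$ opt-out, whose role is precisely the ``slack'' arm). Having done this, a feasible \maxreward{} schedule of value $T$ — i.e., one that pulls a unit-mean arm every round without violating any delay — maps to a human strategy that collects utility at least $\delta T$; conversely, any human strategy achieving utility above a carefully chosen cutoff must, round by round, be pulling a ``good'' arm, hence must respect all delays, hence yields a \maxreward{} schedule of value $T$. The cutoff is chosen strictly between ``best possible if some delay is violated somewhere'' and ``$\delta T$,'' using that a single violation costs at least a fixed amount while at most $T$ rounds are available.

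The main obstacle I expect is the quantitative bookkeeping needed to make the ``good regime / bad regime'' dichotomy robust to the \emph{accumulation of stale contributions}: $\pullsh{i}(t-1)$ is not exactly $\gamma^{s}$ but rather $\sum_{s' : b_{t-s'} = i} \gamma^{s'}$, so old pulls of $i$ (all at least $D_i$ rounds back, if the delay was respected so far) add a geometric tail of size at most $\gamma^{D_i}/(1-\gamma^{D_i})$. I must show this tail is small enough — using $\gamma \le \tfrac12$ and $D_i \ge 2$, via Lemma \ref{lemma:bound_f_gamma_D}, the tail is controlled by $\gamma + \gamma^{D_i}$ up to constants — that it cannot by itself push a ``good'' configuration over the threshold, while a genuine delay violation (a pull only $s \le D_i - 1$ rounds ago, contributing $\gamma^{s} \ge \gamma^{D_i-1}$) does cross it. Separating these two scales requires choosing the shrinkage functions $g_i$ with a sharp enough transition near the argument $\gamma^{D_i}$, and then verifying — this is where Lemma \ref{lemma:lipshitz_f} and Fact \ref{fact:binomial_upper_bound} come in — that the induced utilities differ by a constant gap independent of $k$ and $T$. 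A secondary but nontrivial point is confirming that the entire construction (the descriptions of the $k+1$ functions $g_i$, the parameters $c$, $\gamma$, $T$, $\mu_l$, $\mu_u$) has bit-length polynomial in $k$; this uses the standing assumption $D_i = \poly(k)$, which bounds the exponents $\gamma^{D_i}$ away from being doubly-exponentially small. Once these estimates are in place, the equivalence of the two decision problems is immediate, and the claimed conditional hardness follows from Lemma \ref{th:blockingMAB_is_np_hard}.
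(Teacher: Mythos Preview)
Your overall plan --- encode each delay $D_i$ in the shrinkage function $g_i$, use $\gamma\le\tfrac12$ with Lemma~\ref{lemma:bound_f_gamma_D} to separate the discounted count $N_i$ between the ``respected'' and ``violated'' regimes, then convert this into a utility gap --- is exactly the paper's. Two concrete points are missing, though, and either one breaks the reduction as you have stated it.

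\emph{Horizon.} You set the human's horizon equal to the \maxreward{} horizon $T$, but the penalty for violating arm $i$'s delay at round $t$ only shows up at round $t+1$, through $\maxgenmu{t+1}$. Hence with horizon $T$ the human can respect every delay on rounds $1,\dots,T-1$ and then violate freely at round $T$: all $T$ rounds sit in the good regime, the utility is at least $\delta T$, yet the resulting sequence need not be a valid \maxreward{} schedule of value $T$. The paper takes the human horizon to be $T+1$, so that any violation in rounds $1,\dots,T$ is punished at some round $\le T+1$; your backward implication ``utility above cutoff $\Rightarrow$ all delays respected $\Rightarrow$ \maxreward{} $\ge T$'' needs this extra round.

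\emph{Flatness of the good regime.} Your cutoff argument tacitly requires the good-regime per-round utility to be \emph{exactly} a constant, not merely ``at least $\delta$.'' If it ranges over $[\delta,\Delta]$ with $\Delta>\delta$, then over $T+1$ rounds the best achievable with one forced bad round is $\Delta T$, while the YES case guarantees only $\delta(T+1)$; for large $T$ no cutoff separates the two. The paper fixes this not through the $g_i$'s but through the \emph{link function}: it makes $\probc(0.9,\mu_2)$ constant for all $\mu_2$ below the good-regime threshold $\genmul$, so that every good-regime round yields exactly $\thetau-c$. Your ``sharp enough transition'' in $g_i$ could achieve the same effect if pushed to an actual step function, but the step must sit between $\gamma/(1-\gamma^{D_i})$ and $\gamma+\gamma^{D_i}$ (the quantities in Claims~\ref{cl:nbar_lower_bound} and~\ref{cl:nbar_upper_bound}, both $\approx\gamma$), not near $\gamma^{D_i}$ and $\gamma^{D_i-1}$ as your sketch has it --- the relevant count is $N_i(t)$ \emph{after} the violating pull, not $N_i(t-1)$ before it. Incidentally, the paper keeps the bad-regime utility strictly positive ($\thetal-c>0$) rather than negative; your negative-utility variant is also fine once the flatness and horizon issues are handled, but it does not by itself rescue the cutoff.
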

\begin{proof}
Similar to the \maxreward{} instance of Lemma \ref{th:blockingMAB_is_np_hard} we will have $k+1$ arms. We refer to the first $k$ arms when we say $i \in [k]$ instead of $i \in [k+1]$ and the last arm will have index $k+1$. The means of the first $k$ arms is set to the same value of $\mu=0.9$ whereas the last arm will have a value of $\mu_{k+1}=0.1$. Further, all arms have the same cost of $c=0.6$.   

We start by showing that a discount factor of $\gamma = \frac{1}{2}$ is sufficient to make the discounted number of contents strictly higher whenever the delay value is violated for an arm with a delay $D_i \ge 2$ in our reduction. 
We start with the following claim
\begin{claim}\label{cl:nbar_lower_bound}
If a pull violates the delay rule for an arm $i$ at round $t$, then we have:
\begin{align}
    N_i(t) \ge \gamma + \gamma^{D_i}
\end{align}
\end{claim}
\begin{proof}
Suppose the pull violates the delay rule at round $t$ then it must be the case that arm $i$ was pulled at least $D_i$ rounds before $t$. This means that the smallest value for the discounted number of contents will be:
\begin{align}
    N_i(t) \ge  \gamma + \gamma^{D_i} 
\end{align}
See Figure \ref{fig:viol_delay} for an example. 
\begin{figure}[h!]
    \centering
    \includegraphics[width=1.0\linewidth]{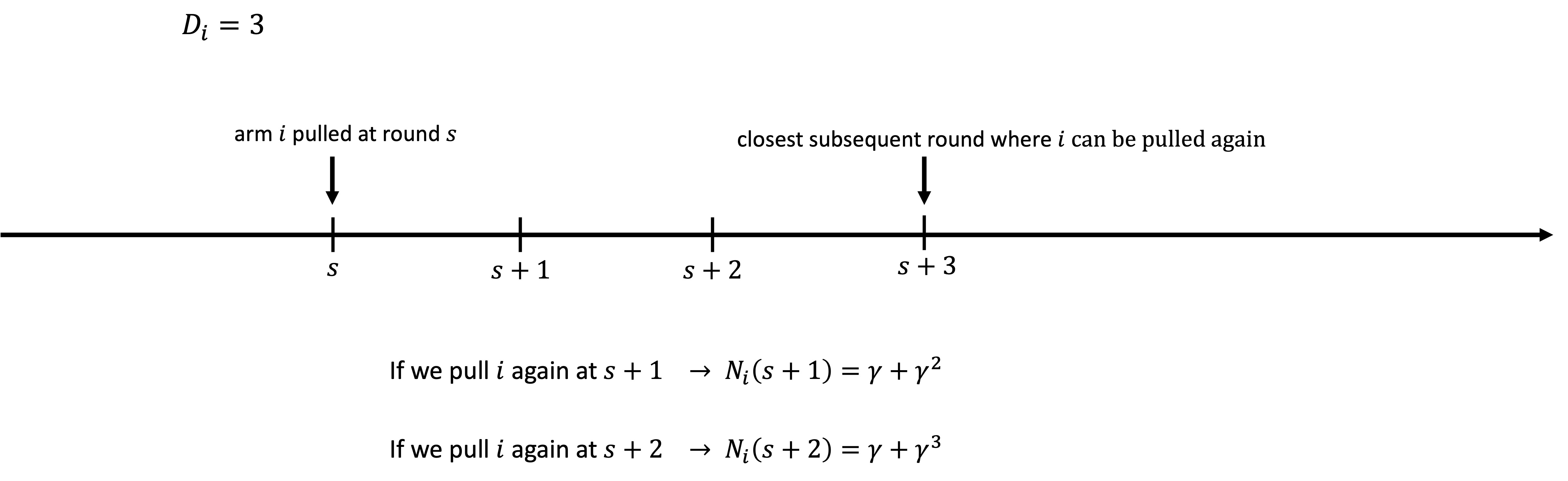}
    \caption{Example illustrating the value for the discounted number of contents for an arm $i$ with $D_i=3$ if its delay value is violated.}
    \label{fig:viol_delay}
\end{figure}
\end{proof}
Further, if the delay is never violated then we can establish an upper bound on the discounted number of contents as shown in the following claim
\begin{claim}\label{cl:nbar_upper_bound} 
If the delay rule for an arm $i$ is not violated then the discounted number of contents at any round $t$ is at most   
\begin{align}
    N_i(t) \leq \gamma \cdot \frac{1}{1-\gamma^{D_i}}
\end{align}
\end{claim}
\begin{proof}
If we are at round $t$ and the pulls for arm $i$ did not violate the delay then the highest value for the discounted number of contents would be as follows (see Figure \ref{fig:respect_delay} for an illustration): 
\begin{align*}
    N_i(t) & \leq \gamma + \gamma^{D_i+1} + \gamma^{2D_i+1} + \gamma^{3D_i+1} + ... \\ 
                 & = \sum_{\ell=0}^{\infty} \gamma^{D_i \ell + 1} \\
                 & = \gamma \cdot \frac{1}{1-\gamma^{D_i}}
\end{align*} 
 \begin{figure}
    \centering
    \includegraphics[width=1.0\linewidth]{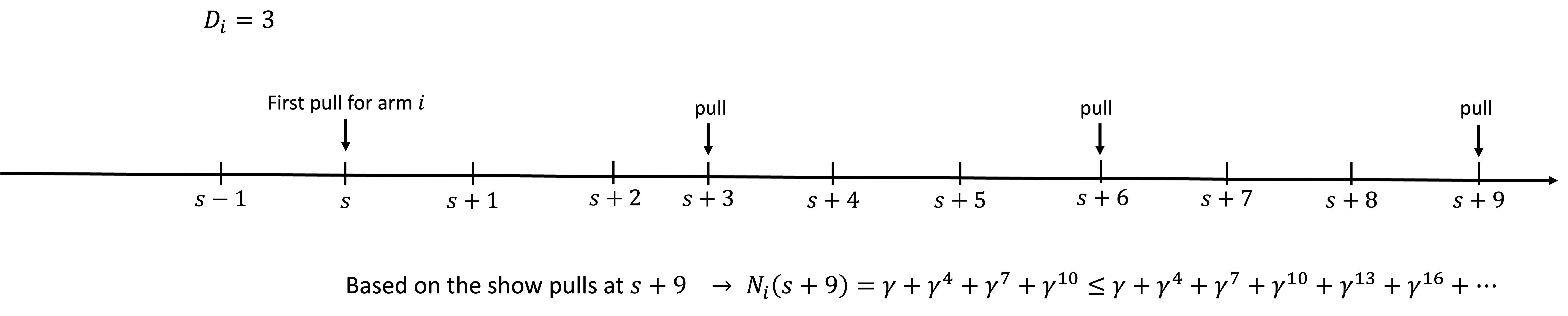}
    \caption{Example illustrating the upper bound on the discounted number of contents for an arm $i$ with $D_i=3$ if its delay value is not violated.}
    \label{fig:respect_delay}
\end{figure}
\end{proof}
We want to set $\gamma$ to a value such that the resulting discounted number of pulls for an arm when violating its delay is strictly higher than the discounted number of pulls when its delay is not violated. To do that based on the two previous claims it is sufficient to satisfy the following inequality 
\begin{align}
    \gamma \cdot \frac{1}{1-\gamma^{D_i}} < \gamma + \gamma^{D_i}   \label{eq:discount_ineq}
\end{align}
The following claim shows that this can be done by simply setting $\gamma=\frac{1}{2}$ as stated earlier. 
\begin{claim}\label{cl:half_special}
If $\gamma =\frac{1}{2}$ then for any arm $i$ with $D_i\ge 2$ we have
\begin{align}
    \gamma \cdot \frac{1}{1-\gamma^{D_i}} < \gamma + \gamma^{D_i}   \label{eq:discount_ineq}
\end{align}
\end{claim}
\begin{proof}
By Lemma \ref{lemma:bound_f_gamma_D} setting $\gamma=\frac{1}{2}$ would satisfy the inequality. 
\end{proof}

With $\gamma=\frac{1}{2}$ we can write the following fact.
\begin{fact}\label{fact:discounted_sep}
For any arm $i \in [k]$ with $D_i \ge 2$ if at round $t-1$ the delay rule was violated then $N_i(t-1) \ge \gamma+\gamma^{D_i} > \frac{\gamma}{1-\gamma^{D_i}}$. On the other hand, if the delay rule was not violated at any round up to and including $t-1$ then we have $N_i(t-1) \leq  \frac{\gamma}{1-\gamma^{D_i}} < \gamma+\gamma^{D_i}$. 
\end{fact}

Now we complete the details of the reduction, the generative mean for an arm $i$ is 
\begin{align}
    \genmu{i}(t)     & = \mu_i - g_i(N_i(t-1)) 
\end{align}
Note that since $\gamma=\frac{1}{2}$, the highest discounted number of contents will be $\frac{\gamma}{1-\gamma} =1$. For the last arm $g_{k+1}(n)=0.1-0.1n$. On the other hand, for the first $k$ arms we set 
\begin{align}
    g_i(n) = g(\lambda_i n) \ , \forall i \in [k]
\end{align} 
where $\lambda_i >0$ is to be set according to the delay value $D_i$ and $g(.)$ is a function shared by all of the first $k$ arms. The function $g(.)$ is set equal to the following: 
\begin{align}\label{eq:def_p_inv}
g(n) = 0.7 - 0.35 n 
\end{align}

Let $p^*=\frac{1}{2}$. Now for any $i \in \arms$ we will set $\lambda_i$ as follows: 
\begin{align}
    \lambda_i = \frac{p*}{\gamma + \gamma^{D_i}}
\end{align}
This implies that if the delay rule for an arm $i$ is violated then in the next round we have $\lambda_i N_i(t) \ge p^*$, see Figure \ref{fig:lambda_illust} for an illustration. Critically we note that since $D_i =\poly(k)$, then $\lambda_i$ can be computed in polynomial time and saved in polynomial space.

Based on the above, similar to Fact \ref{fact:discounted_sep} the value of the maximum generative mean will exceed a threshold if the delay rule is violated in the first $k$ arms. 
\begin{fact}\label{fact:discounted_sep_genmu}
For any arm $i \in [k]$ with $D_i \ge 2$ if at round $t-1$ the delay rule was violated then $\genmu{i}(t) \ge \genmuu = 0.375$. On the other hand, if the delay rule had not violated at any round up to and including $t-1$ then we have $\genmu{i}(t) \leq  \genmul < \genmuu$ where $\genmul= 0.9 - g(\max\limits_{i \in [k]} \frac{p^*}{\gamma + \gamma^{D_i}} \frac{\gamma}{1-\gamma^{D_i}})$. 
\end{fact}
\begin{proof}
If the delay is violated at round $t-1$ for an arm $i \in [k]$ then we have in the next round $t$
\begin{align*}
    \genmu{i}(t) & = \mu_i - g_i(N_i(t-1)) \\ 
                 & \ge \mu_i - g( \lambda_i \cdot (\gamma+\gamma^{D_i})) && \text{(by Fact \ref{fact:discounted_sep})}\\
                 & = 0.9 - g(\frac{1}{2}) && \text{($p^*=\frac{1}{2}$)}\\ 
                 & = 0.9  - (0.7 - 0.35 \times \frac{1}{2}) \\ 
                 & = 0.9 - 0.525  \\
                 & = 0.375 \\ 
                 & = \genmuu 
\end{align*} 
On the other hand if the delay rule is not violated then the maximum possible value given to the function $g(.)$ can be 
\begin{align}
    \max\limits_{i \in [k]} \lambda_i \frac{\gamma}{1-\gamma^{D_i}} = \max\limits_{i \in [k]} \frac{p^*}{\gamma + \gamma^{D_i}} \frac{\gamma}{1-\gamma^{D_i}} < p^*
\end{align}
Therefore, based on the definition of $\genmul$ in the fact, we will always have $\genmu{i}(t) \leq  \genmul < \genmuu$.
\end{proof}


Before we set the values of the link function we notes that the human will choose arms of mean either $0.9$ (first $k$) or $0.1$ (the last arm). Further, the maximum generative mean value is at least $0.9-g_i(0)=0.9-0.7=0.2$ and is at most $0.9-g_i(\frac{p^*}{\gamma+\gamma^{D_i} } \cdot \frac{\gamma}{1-\gamma} ) \leq 0.9 - g_i(1) = 0.9-(0.7-0.35)=0.55$. Therefore, it is sufficient to define the link function over $[0.1,0.9]\times[0.1,0.9]$.

We will now set the value for link function as follows: (A) if $\mu_1=0.1$ then $\probc(0.1,\mu_2) =\frac{1+(0.1-\mu_2)}{2}$. (B) if $\mu_1=0.9$ then we set the link function as
 \begin{align}
    \probc(0.9,\mu_2) = \begin{cases} 
         \frac{1+(0.9-\mu_2)}{2} & \text{if } \mu_2 \ge  \genmul \\
         \frac{1+(0.9-\genmul)}{2} & \text{if } \mu_2 <  \genmul 
\end{cases}
\end{align}

(C) for $\mu_1 \in (0.1,0.9)$ the link function is as follows 
 \begin{align}
    \probc(\mu_1,\mu_2) = \begin{cases} 
         \frac{1}{2} (\mu_1 - 0.1) + \probc(0.1,\mu_2) & \text{if } \mu_1 \leq  \mu_2 \\ \\ 
         \frac{\probc(0.9,\mu_2)-\frac{1}{2}}{0.9-\mu_2} (\mu_1-\mu_2)+ \frac{1}{2} & \text{if } \mu_1 >  \mu_2
\end{cases}
\end{align}
It is straightforward to see that $\probc$ is continuous, $\probc(\mu,\mu)=\frac{1}{2}$ and that it is increasing in the first argument.

 Note that by our choice for the shrinkage functions, the GenAI will always choose an arm from the first $k$ set since it is straightforward to show that $\maxgenmu{t} \ge \genmu{i}(t) \ge 0.2 > 0.1 \ge \genmu{k+1}(t) , \ \forall i \in [k]$ and $\forall t \in [T+1]$. Therefore, we can upper bound the utility of the last arm of mean $\mu_{k+1}$
\begin{align}
    \probc(\mu_{k+1},\maxgenmu{t}) - c \leq \frac{1+(0.1-0.2)}{2} - 0.6 = -0.15 < 0 
\end{align}
Therefore, the last arm always gives negative utility.

From the above we can now separate the utility we gain from the first $k$ arms based on whether we violated their delay rule or not 
\begin{fact}\label{fact:seperate_util}
If the delay rule for any arm arm $i \in [k]$ with $D_i \ge 2$ is violated at round $t-1$ then the utility at the next round will be at most $\thetal-c$ where $\thetal=\frac{1+(0.9-\genmuu)}{2}$. On the other hand, if the delay rule is not violated then we can obtain a maximum utility of exactly $\thetau-c>0$ where $\thetau=\frac{1+(0.9-\genmul)}{2} > \thetal$.  
\end{fact}
\begin{proof}
If the delay rule is violated then by Fact \ref{fact:discounted_sep_genmu} it follows that the utility will be at most $\frac{1+(0.9-\genmuu)}{2}-0.6=\frac{1+(0.9-0.375)}{2}-0.6=\thetal-c$ where $\thetal=\frac{1+(0.9-\genmuu)}{2}$ as defined. 

On the other hand, if the delay rule is not violated then pulling one of the first $k$ arms we can obtain a utility of $\frac{1+(0.9-\genmul)}{2}-0.6=\thetau-0.6$ where $\thetau=\frac{1+(0.9-\genmul)}{2}$ as defined. 

$\thetau > \thetal$ since $\genmuu > \genmul$. Further, since $\thetal-c = \frac{1+(0.9-\genmuu)}{2}-0.6 =\frac{1+(0.9-0.375)}{2}-0.6 >0$, it follows that $\thetau - c>0$. 
\end{proof}

\begin{figure}
    \centering
    \includegraphics[width=1.0\linewidth]{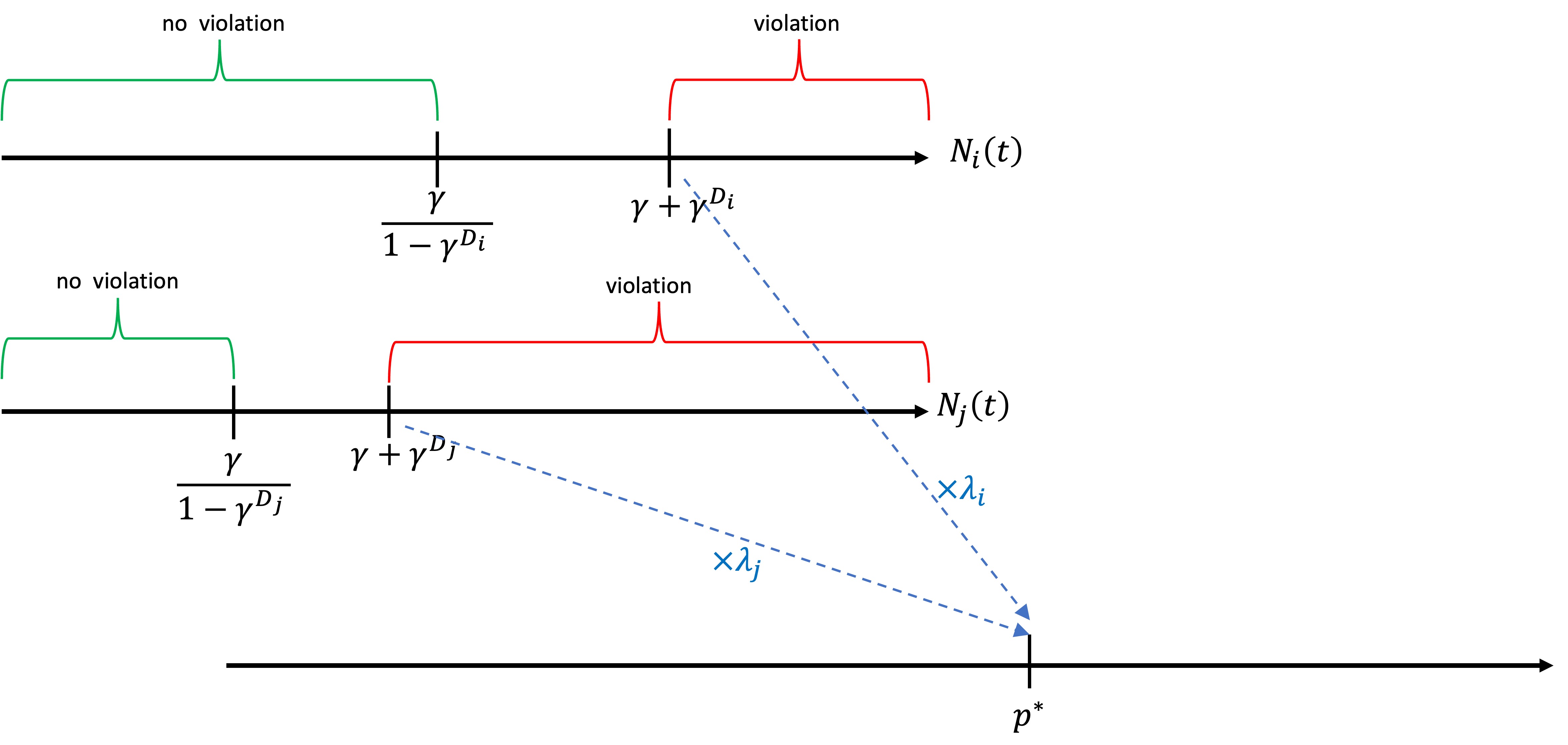}
    \caption{The figure shows how when the delay value is violated for any arm $i$ its coefficient $\lambda_i$ would lead to $\lambda_i N_i(t) \ge p^*$.}
    \label{fig:lambda_illust}
\end{figure}

Now that we have given the details of the reduction. The following lemma is immediate
\begin{lemma}
The human can obtain a utility value of $(\thetau-c) \cdot (T+1)$ if and only if he always pulls an arm from the set of first $k$ arms without violating their delay value except possibly at the last round $T+1$. 
\end{lemma}
\begin{proof}
Suppose the human violates the delay for some arm $i$ in the first $k$ set at a round $t \leq T$ then it follows that in the next round the highest utility that can be obtained is $\thetal-c<\thetau-c$ by Fact \ref{fact:seperate_util} and therefore the highest utility throughout the horizon will be $(\thetau-c) \cdot (T-1)$. 

Now suppose that only arms from the first $k$ set are pulled without violating the delay rule in the first round $T$ rounds, then it follows that the utility will be $(\thetau-c) \cdot (T+1)$. 
\end{proof}
By the previous lemma it follows that given a \maxreward{} instance of Lemma \ref{th:blockingMAB_is_np_hard} is a YES instance if and only if the human can obtain a utility of value $(\thetau-c) \cdot (T+1)$ in our reduction. 
\end{proof}

Now we state and prove the following lemma which proves hardness even when all arms have the same shrinkage function $g(.)$.
\begin{restatable}{lemma}{NPhardsameshrinkage}\label{th:NP_hard_same_shrinkage}
Suppose $\gamma_i < 1$ for all $i \in \arms$, unless the randomized exponential time hypothesis is false there does not exist a polynomial time deterministic optimal strategy for the human even if all arms have the same cost $c$ and the same shrinkage function $g(.)$.
\end{restatable}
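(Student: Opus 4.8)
The plan is to mirror the reduction of Lemma~\ref{th:NP_hard_same_gamma} almost verbatim, but with the roles of the per‑arm knobs swapped: instead of keeping the discount factor fixed at $\gamma=\tfrac12$ and tuning a scaling $\lambda_i$ inside the common shrinkage $g(\lambda_i \cdot)$, I would keep the shrinkage function $g(\cdot)$ common to all arms and instead tune the discount factor $\gamma_i$ per arm as a function of the delay value $D_i$. As in the other reduction, I start from the hard \maxreward{} instance of Lemma~\ref{th:blockingMAB_is_np_hard}: $k+1$ arms, the first $k$ with mean $\mu=0.9$ and delays $D_i$ satisfying $\sum_{i=1}^k 1/D_i = 1$, the last arm with mean $0.1$ and delay $1$, all with common cost $c=0.6$, and with the added (WLOG) assumption $D_i=\poly(k)$. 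The link function $\probc$ can be taken to be exactly the one constructed in the proof of Lemma~\ref{th:NP_hard_same_gamma}, since it only depends on the values $\genmul,\genmuu$, which I will re‑establish below.

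First I would recall the two structural facts from the previous proof, now parameterized by $\gamma_i$: if the delay rule for arm $i$ is violated at round $t-1$ then $N_i(t-1)\ge \gamma_i+\gamma_i^{D_i}$, while if it is never violated then $N_i(t-1)\le \gamma_i/(1-\gamma_i^{D_i})$ (same geometric‑series arguments as Claims~\ref{cl:nbar_lower_bound} and \ref{cl:nbar_upper_bound}). By Lemma~\ref{lemma:bound_f_gamma_D}, as long as each $\gamma_i\in[0,\tfrac12]$ and $D_i\ge 2$ we get the strict separation $\gamma_i/(1-\gamma_i^{D_i}) < \gamma_i+\gamma_i^{D_i}$, so the discounted count again jumps across a per‑arm threshold exactly when the delay is violated. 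The key new step is to choose $\gamma_i$ as a function of $D_i$ so that the separating threshold $\gamma_i+\gamma_i^{D_i}$ is the \emph{same} value $p^*$ for every arm $i\in[k]$; concretely, define $\gamma_i$ to be the unique root in $(0,\tfrac12]$ of $f_{D_i}(\gamma_i)=\gamma_i^{D_i}+\gamma_i=p^*$ for a suitably small constant $p^*$ (the root exists and is unique since $f_{D_i}$ is strictly increasing on $[0,\infty)$ with $f_{D_i}(0)=0$ and $f_{D_i}(\tfrac12)\ge\tfrac12\ge p^*$ for $p^*\le\tfrac12$). With a common $g(\cdot)$, a violated delay then forces $g$ to be evaluated at a value $\ge p^*$ (hence $\genmu{i}(t)\ge\genmuu$ for a fixed $\genmuu$), while a non‑violated history keeps $g$ evaluated at a value $\le \max_i \gamma_i/(1-\gamma_i^{D_i}) < p^*$ (hence $\genmu{i}(t)\le\genmul<\genmuu$), exactly the analog of Fact~\ref{fact:discounted_sep_genmu}. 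From here the rest is identical to Lemma~\ref{th:NP_hard_same_gamma}: the last arm always yields negative utility (its generative mean stays $\le 0.1$ while the first $k$ arms keep the GenAI's max generative mean $\ge 0.2$), a violated delay caps the next‑round utility at $\thetal-c$ while a respected delay attains $\thetau-c>0$ with $\thetau>\thetal$ (the analog of Fact~\ref{fact:seperate_util}), and therefore the human achieves total utility $(\thetau-c)(T+1)$ iff she pulls only among the first $k$ arms while respecting all delays (except possibly at the final round), i.e.\ iff the \maxreward{} instance is a YES instance.

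The two things that need care — and which I expect to be the main obstacle — are (i) arguing that the $\gamma_i$'s are actually computable and storable in polynomial time, and (ii) handling delay‑$1$ arms. For (i), unlike $\lambda_i$ in the previous proof, $\gamma_i$ is defined implicitly as a root of $\gamma_i^{D_i}+\gamma_i=p^*$ and need not be rational; I would handle this by choosing $p^*$ and running the reduction up to $\poly(k)$ bits of precision, then verifying via the Lipschitz bound of Lemma~\ref{lemma:lipshitz_f} (which gives $|f_{D_i}(\gamma)-f_{D_i}(\gamma')|\le (D_i+1)|\gamma-\gamma'|$ with $D_i=\poly(k)$) that a polynomially‑bounded perturbation of $\gamma_i$ perturbs the threshold $\gamma_i+\gamma_i^{D_i}$ — and hence all downstream generative means, link values, and utilities — by less than the fixed gap $\thetau-\thetal$, so the YES/NO distinction is preserved; this is exactly why Lemma~\ref{lemma:lipshitz_f} was proved. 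For (ii), arms with $D_i=1$ (in particular the last, mean‑$0.1$ arm) are not subject to any real blocking constraint, so I would simply assign them $\gamma_i=\tfrac12$ and note, as above, that the mean‑$0.1$ arm is dominated and never pulled, while any mean‑$0.9$ arm with $D_i=1$ can be pulled every round without ever "violating" its delay, consistent with the \maxreward{} semantics. Assembling these pieces, a polynomial‑time optimal human strategy would decide the hard \maxreward{} instance in polynomial time, contradicting \reth{}, which proves the lemma; combined with Lemma~\ref{th:NP_hard_same_gamma} this establishes Theorem~\ref{th:gen_hardness}.
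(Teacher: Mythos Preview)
Your proposal is correct and follows essentially the same route as the paper: define $\gamma_i$ implicitly as the root of $\gamma_i+\gamma_i^{D_i}=p^*$ (the paper takes $p^*=\phi=0.35$), reuse the geometric‑series bounds together with Lemma~\ref{lemma:bound_f_gamma_D} for the violated/non‑violated separation, approximate the non‑closed‑form roots by binary search with the Lipschitz control of Lemma~\ref{lemma:lipshitz_f}, and finish with the same link‑function and $\thetal/\thetau$ threshold argument. The paper uses different numerical constants ($\mu=1$, $\mu_{k+1}=0.6$, $c=0.77$, $g(n)=0.6-0.15n$) rather than reusing those of Lemma~\ref{th:NP_hard_same_gamma}, but this is cosmetic.

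One small point to tighten: the gap $\thetau-\thetal$ is \emph{not} a fixed instance‑independent constant. With a common $g$, the separation between $\genmuu$ and $\genmul$ is proportional to $p^*-\max_i \gamma_i/(1-\gamma_i^{D_i})$, and the paper shows (Lemma~\ref{lemma:min_seperation}) this is only $\delta=\tfrac{5}{6}\cdot 0.2^{\dmax}$, i.e.\ exponentially small in $\dmax=\poly(k)$. This does not break your plan --- your ``$\poly(k)$ bits of precision'' is exactly what is needed, and indeed the paper approximates each $\gamma_i$ to within additive $\delta/4$ in $f_{D_i}$‑value --- but you should state explicitly that the target precision, and hence the gap you must preserve, shrinks with $\dmax$ rather than being an absolute constant.
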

\begin{proof}
We will follow a reduction similar to the previous of Lemma \ref{th:NP_hard_same_gamma}. The difference is that all arms will have the same shrinkage function $g(.)$ but the discount factor for the arms will be will different and set according to the delay value $D_i$. Moreover, the first $k$ arms will have a mean of $\mu=1$ and the last arm will have a mean of $\mu_{k+1}=0.6$. All arms will have a cost of $c=0.77$. The shrinkage function for all arms will be set to
\begin{align}
    g(n) = 0.6 - 0.15 n  
\end{align}
Since we will have $\max\limits_{i \in [k+1]} \gamma_i \leq \frac{1}{2}$ so that $N_i(t) \leq \frac{\gamma_i}{1-\gamma_i} \leq 1$, the above values immediately imply that we have $\maxgenmu{t} \ge \genmu{i}(t) \ge 0.4 > 0.15 \ge \genmu{k+1}(t) \ , \forall i \in \arms, t \in [T+1]$. Therefore, the GenAI will never pull arm $k+1$ since it always picks the arm with the maximum generative mean in each round. 

The following two claims follow from Claims \ref{cl:nbar_lower_bound} and \ref{cl:nbar_upper_bound}, they are simply restated for different values of $\gamma_i$. 
\begin{claim}\label{cl:lower_bound_discount}
If a pull violates the delay rule for an arm $i$ at round $t$, then we have:
\begin{align}
    N_i(t) \ge \gamma_i + \gamma_i^{D_i}
\end{align}
\end{claim}

\begin{claim}\label{cl:upper_bound_discount} 
If the delay rule for an arm $i$ is not violated then the discounted number of contents at any round $t$ is at most   
\begin{align}
    N_i(t) \leq \gamma_i \cdot \frac{1}{1-\gamma_i^{D_i}}
\end{align}
\end{claim}
We further have the following claim which is similar to Claim \ref{cl:half_special}.
\begin{claim}
$\forall \gamma_i \in (0,\frac{1}{2}]$ with $D_i \ge 2$ we have 
\begin{align}
    \gamma_i \cdot \frac{1}{1-\gamma_i^{D_i}} < \gamma_i + \gamma_i^{D_i}   \label{eq:discount_ineq}
\end{align}
\end{claim}
\begin{proof}
The proof follows immediately by Lemma \ref{lemma:bound_f_gamma_D}. 
\end{proof}

For the last arm we set $\gamma_{k+1}=\frac{1}{2}$, but for the first $k$ arms with $D_i \ge 2$ we want to find values of $\gamma_i \in (0,\phi]$ where $\phi=0.35$ such that violating the delay values results in the discounted number of contents $N_i(t)$ being at least $\gamma_i + \gamma_i^{D_i}$ with 
\begin{align}
     \gamma_i + \gamma_i^{D_i} = \phi \label{eq:viol_satisfy}
\end{align}
It is straightforward to see that for $D_i \ge 2$ there always exist values of $\gamma_i \in (0,\phi]$ such that \eqref{eq:viol_satisfy} is satisfied. To see that, note that the function $\gamma_i + \gamma_i^{D_i}$ is continuous and that $\gamma_i + \gamma_i^{D_i}\Big|_{\gamma=0}=0 < \phi$ and $\gamma_i + \gamma_i^{D_i}  \Big|_{\gamma=\phi} > \phi$. 
Note the following simple fact which will be helpful later
\begin{fact}
If $D_i \ge 2$ and $\gamma_i + \gamma_i^{D_i} = \phi$ then we must have $\gamma_i \ge 0.2$. 
\end{fact}
\begin{proof}
This follows since $\gamma_i + \gamma_i^{D_i}$ is an increasing function and $0.2 + 0.2^{D_i} \leq 0.2 + 0.2^{2} \leq 0.24 < 0.35=\phi$. 
\end{proof}

Now, although the values of $\gamma_i$ that satisfy \eqref{eq:viol_satisfy} exist. It is not straightforward to find an exact closed form solution for them, so we will approximate them. Note that $\gamma_i + \gamma_i^{D_i}$ is increasing, therefore by doing binary search over the interval $[0.2,\phi]$ we can obtain an approximation. However, it is important to establish a bound on the desired approximation error. To do that we define the following function which equals the difference between the lower bound when violating the delay rule and the upper bound when not violating it for a given value of $\gamma \in [0.2,\phi]$ and $D \ge 2$ 
\begin{align}
    h(\gamma,D) = \gamma + \gamma^D - \frac{\gamma}{1-\gamma^D}  
\end{align}
We now reach the following lemma
\begin{lemma}\label{lemma:min_seperation}
For $\gamma \in [0.2,\phi]$ and $D \ge 2$. Let $\delta = \frac{5}{6} \times 0.2^{\dmax}$ then we have 
\begin{align}
     h(\gamma,D) \ge  \delta 
\end{align}
\end{lemma}
\begin{proof}
It is straightforward to show that $h(\gamma,D)$ is increasing in $\gamma$ and decreasing in $D$ for $\gamma \in [0.2,\phi]$ and $D \ge 2$. Therefore, it obtains the minimum value at $\gamma=0.2$ and $\dmax = \max\limits_{i \in \arms} D_i$ this lead to 
\begin{align}
    h(\gamma,D) & \ge h(0.2,\dmax) \\ 
                & = 0.2 + 0.2^{\dmax} - \frac{0.2}{1-0.2^{\dmax}}\\ 
                & = 0.2^{\dmax} + \frac{0.16 - 0.2^{\dmax+1}}{1-0.2^{\dmax}} \\ 
                & = 0.2^{\dmax} - \frac{1}{6} 0.2^{\dmax}  \qquad  \text{(since $\frac{0.16 - 0.2^{\dmax+1}}{1-0.2^{\dmax}} \leq \frac{1}{6} 0.2^{\dmax}$ for $\dmax \ge 2$)}\\ 
                & = \frac{5}{6} 0.2^{\dmax} 
\end{align}
\end{proof}

Therefore, we will approximate $\gamma_i$ for the first $k$ arms so that \eqref{eq:viol_satisfy} holds within an error of at most $\frac{\delta}{4}$. We will now show that $O(\poly(k))$ iterations are sufficient for any arm $i$ with $D_i \ge 2$. 

\begin{lemma}
Let $R_0 = \phi -0.2$ and let $f_{D_i}(\gamma_i)=\gamma_i + \gamma_i^{D_i}$. Then for a given value of $D_i \ge 2$ after $\log_2 \big(\frac{4 (D_i+1) R_0}{\delta} \big)$ iterations of binary search for the function $f_{D_i}(\gamma_i)$ over $\gamma_i \in [0.2,\phi]$ if we obtain $\ghat_i$ then we have 
\begin{align}
    |f_{D_i}(\ghat_i) - \phi| \leq \frac{\delta}{4}
\end{align}
\end{lemma}
\begin{proof}
After $q$ many iterations of binary search over the interval $R_0=\phi -0.2$ the size of the search region will be $\frac{R_0}{2^q}$. Any returned solution $\ghat_i$ in this region will be at most at a distance of $\frac{R_0}{2^q}$ from the true solution $\gstari$ where $f_{D_i}(\gstari)=\phi$. Based on Lemma \ref{lemma:lipshitz_f} we will have 
\begin{align}
    |f_{D_i}(\ghat_i) -f_{D_i}(\gstari)| & \leq (D_i+1) \frac{R_0}{2^q} \\ 
                                          & = (D_i+1) \frac{R_0 \delta}{4 (D_i+1) R_0} \\
                                          & = \frac{\delta}{4}
\end{align}
\end{proof}
Note that since $\dmax = \poly(k)$, then it follows that $\log_2 \big(\frac{4 (D_i+1) R_0}{\delta} \big) = \log_2 \big(4 (D_i+1) R_0 \cdot \frac{6}{5} \cdot 5^{\dmax} \big) = O(\log \dmax) + O(\dmax) = \poly(k)$. 

We will denote with $\ghat_i$ the resulting approximated discounting factors. We note that using our approximation we can establish the following separation between the discounted number of pulls based on whether the delay rule was violated or not. Specifically, we have the following fact
\begin{fact}\label{fact:sample_seperation_ver_2}
For any arm $i \in \arms$ with $D_i \ge 2$ if at round $t-1$ the delay rule was violated then $N_i(t-1)\ge \phi- \frac{\delta}{4}$. On the other hand, if the delay rule was not violated at any round up to and including $t-1$ then we have $N_i(t-1)\leq \phi -\frac{3 \delta}{4}$. 
\end{fact}
\begin{proof}
Since our approximations for \eqref{eq:viol_satisfy} are up to an additive error of $\frac{\delta}{4}$, then it follows that if we violate the delay rule at round $t-1$ that we have 
\begin{align}
    N_i(t-1) \ge \ghat + \ghat^{D_i} \ge \phi - \frac{\delta}{4} 
\end{align}

Now if we do not violate the delay rule, then we have 
\begin{align*}
    N_i(t-1) & \leq \frac{\ghat_i}{1-\ghat_i^{D_i}} \\ 
             & = \ghat_i + \ghat_i^{D_i} - h(\ghat_i, D_i) \\ 
             &  \leq \ghat_i + \ghat_i^{D_i} - \delta && \text{(by Lemma \ref{lemma:min_seperation})}\\
            &  \leq \phi + \frac{\delta}{4} - \delta && \text{(by the $\frac{\delta}{4}$ additive approximation)} \\
            & = \phi - \frac{3\delta}{4}
\end{align*}
\end{proof}

Based on the above we reach the following fact
\begin{fact}\label{fact:discounted_sep_genmu_second_ver}
For any arm $i \in [k]$ with $D_i \ge 2$ if at round $t-1$ the delay rule was violated then $\genmu{i}(t) \ge \genmuu = 0.4525 - \frac{0.15}{4} \delta$. On the other hand, if the delay rule had not violated at any round up to and including $t-1$ then we have $\genmu{i}(t) \leq  \genmul < \genmuu$ where $\genmul= 0.4525 - \frac{0.45}{4} \delta$. 
\end{fact}
\begin{proof}
If the delay rule for an arm $i \in \arms$ is violated at round $t-1$ then in the next round we have
\begin{align*}
    \genmu{i}(t) & = \mu_i - g(N_i(t-1)) \\ 
                 & \ge \mu_i - g(\phi-\frac{\delta}{4}) && \text{(based on Fact \ref{fact:sample_seperation_ver_2})} \\ 
                 & = 1 - [0.6 - 0.15 (\phi - \frac{\delta}{4})] \\ 
                 & = 0.4525 - \frac{0.15}{4}  \delta
\end{align*}
On the other hand, if the delay rule had not violated at any round up to and including $t-1$ then we have
\begin{align*}
   \genmu{i}(t) & = \mu_i - g(N_i(t-1)) \\  
                & \leq \mu_i - g(\phi -\frac{3 \delta}{4}) && \text{(based on Fact \ref{fact:sample_seperation_ver_2})} \\ 
                 & = 1 - [0.6 - 0.15 (\phi - \frac{3 \delta}{4})] \\ 
                 & = 0.4525 - \frac{0.45}{4}  \delta
\end{align*}
\end{proof}

Note that the human will choose arms of mean either $1.0$ (first $k$) or $0.6$ (the last arm). Further, the maximum generative mean value is at least $1.0-g_i(0)=1.0-0.6=0.4$ and is at most $1.0-g_i(1) \leq 1.0 - [0.6-0.15]= 0.55$ (since the highest number of discounted samples is $\frac{\gamma_i}{1-\gamma_i} \leq \frac{1/2}{1-\frac{1}{2}}=1$ since $\gamma_i \leq \frac{1}{2}$). Therefore, it is sufficient to define the link function over $[0.4,1.0]\times[0.4,1.0]$.

We will now set the link function as follows: (A) if $\mu_1=0.4$ then $\probc(0.4,\mu_2) =\frac{1+(0.4-\mu_2)}{2}$. (B) if $\mu_1=1.0$ then we the link function is 
 \begin{align}
    \probc(1.0,\mu_2) = \begin{cases} 
         \frac{1+(1-\mu_2)}{2} & \text{if } \mu_2 \ge  \genmul \\
         \frac{1+(1-\genmul)}{2} & \text{if } \mu_2 <  \genmul 
\end{cases}
\end{align}

(C) for $\mu_1 \in (0.4,1.0)$ the link function is as follows 
 \begin{align}
    \probc(\mu_1,\mu_2) = \begin{cases} 
         \frac{1}{2} (\mu_1 - 0.4) + \probc(0.4,\mu_2) & \text{if } \mu_1 \leq  \mu_2 \\ \\ 
         \frac{\probc(1,\mu_2)-\frac{1}{2}}{1-\mu_2} (\mu_1-\mu_2)+ \frac{1}{2} & \text{if } \mu_1 >  \mu_2
\end{cases}
\end{align}
It is straightforward to see that $\probc$ is continuous, $\probc(\mu,\mu)=\frac{1}{2}$ and that it is increasing in the first argument.

Based on the above we can upper bound the utility the last arm ($k+1$) can give
\begin{align}
    \probc(\mu_{k+1},0.4) - c = \probc(0.6,0.4) - c = \frac{\probc(1,0.4)-\frac{1}{2}}{1-0.4} (0.6-0.4)+ \frac{1}{2} - c \leq = 0.76458 - 0.77 < 0  
\end{align}
Note in the above that we used the fact that $\genmul=0.4525-\frac{0.45}{4} \delta \ge 0.4525 - \delta \ge 0.4525 - 0.04$ since $\delta = \frac{5}{6} 0.2^{\dmax} \leq 0.2^{\dmax} \leq 0.2^2=0.04$. 

Therefore, the last arm always gives negative utility.

We now can separate the utility we gain from the first $k$ arms based on whether we violated their delay rule or not 
\begin{fact}\label{fact:seperate_util_ver_2}
If the delay rule for any arm arm $i \in [k]$ with $D_i \ge 2$ is violated at round $t-1$ then the utility at the next round will be at most $\thetal-c$ whet $\thetal= \frac{1.5475 + \frac{0.15}{4}\delta}{2}$. On the other hand, if the delay rule is not violated then we can obtain a maximum utility of exactly $\thetau-c>0$ where $\thetau=\frac{1.5475 + \frac{0.45}{4}\delta}{2} > \thetal$.  
\end{fact}
\begin{proof}
The calculations follows from Fact \ref{fact:discounted_sep_genmu_second_ver}. 

If the delay rule is violated then the utility will be at most $\frac{1+(1-\genmuu)}{2} - c = \frac{1+(1-(0.4525 - \frac{0.15}{4}\delta))}{2} -c = \thetal -c$ where $\thetal =  \frac{1.5475 + \frac{0.15}{4}\delta}{2}$ as defined. 

On the other hand if the delay rule is not violated then the utility will be exactly $\frac{1+(1-\genmul)}{2} - c = \frac{1+(1-(0.4525-\frac{0.45}{4}\delta))}{2} - c = \thetau-c$ where $\thetau= \frac{1.5475 + \frac{0.45}{4}\delta}{2}$ as defined. 

It is straightforward to see that $\thetau-c>0$ and that $\thetau > \thetal$.  
\end{proof}

From the above, similar to the proof of Lemma \ref{th:NP_hard_same_gamma} the human can obtain a utility of $(\thetau-c) \cdot (T+1)$ if and only if he pulls from the set of first $k$ arms without violating their delay rule. Therefore, an instance of \maxreward{} is a YES instance if and only if we can obtain a utility of $(\thetau-c) \cdot (T+1)$. 
\end{proof}

\subsection{Proofs for Subsection \ref{subsec:approx_discounting}}\label{app:approximation_proofs}

We restate the lemma and give its proof 
\lemmamyopic*
\begin{proof}
Suppose we follow the same strategy as $\bstar$ from rounds $t$ and ending at $t+(\tauwindow-1)$ but in the first $\tauwindow$ rounds, i.e. starting from round $1$ to round $\tauwindow$ to optimize utility for the first $\tauwindow$ rounds, we will call this strategy $\bprime$. Then it follows that for any arm $i \in \arms$ the new discounted number of pulls $\bn'_i(r)$ satisfies $\bn'_i(r) \leq \bn_i(r+t-1) \ , \forall r \in \{1,2,\dots,\tauwindow\}$. Denote the maximum generative mean by $\maxgenmu{r}'$ and $\maxgenmu{r}$ for strategies $\bprime$ and $\bstar$, respectively. Further, we denote by $b'_r$ and and $b_r$ the arms pulled at round $r$ by $\bprime$ and $\bstar$, respectively.

Then we have the following bound:
\begin{align*}
    \sum_{r=1}^{\tauwindow} \probc(\mu_{b'_r},\maxgenmu{r}') - c_{b'_r} & \ge \sum_{r=1}^{\tauwindow} \probc(\mu_{b'_{r}},\maxgenmu{r+t-1}) - c_{b'_{r}}  \\
    & = \sum_{r=t}^{t+\tauwindow-1} \probc(\mu_{b_{r}},\maxgenmu{r}) - c_{b_{r}} \\ 
    & = u_{t: t+ \tauwindow -1}(\bstar)
\end{align*}
where the above follows since for all $r \in \{1,2,\dots,\tauwindow\}$ we must have $\maxgenmu{r}' \leq \maxgenmu{r+t-1}$ since $\bn'_i(r) \leq \bn_i(r+t-1)$.

Therefore, we have $\myopicopttau \ge \sum_{r=1}^{\tauwindow} \probc(\mu_{b'_r},\maxgenmu{r}') - c_{b'_r} \ge  u_{t: t+ \tauwindow -1}(\bstar)$. 
\end{proof}

Before we give the proof of Theorem \ref{th:optimzethenpause} it is helpful to establish some results. Define $\neps$ to be 
\begin{align}
    \neps = \frac{\gmax}{1-\gmax} \cdot \gmax^{\tauwindow} \label{eq:neps_def}
\end{align} 
Since we set $\tauwindow = \ceil{\log_{(\frac{1}{\gmax})} \Big( \frac{\gmax}{(1-\gmax)^2} \cdot \frac{L_g L_{\probc}}{\uo} \cdot \frac{1}{\epsilon} \Big) }$, the following lemma can be established.
\begin{lemma}\label{lemma:nepsbound}
\begin{align}
    \frac{L_g L_{\probc}}{1-\gmax} \neps \leq \epsilon \myopicopttau
\end{align}
\end{lemma}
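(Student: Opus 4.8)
The plan is to prove the two inequalities $\frac{L_g L_{\probc}}{1-\gmax}\neps \le \uo\,\epsilon$ and $\uo \le \myopicopttau$ separately, and then chain them: $\frac{L_g L_{\probc}}{1-\gmax}\neps \le \uo\,\epsilon \le \epsilon\,\myopicopttau$.

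For the first inequality I would simply unfold the definition $\neps = \frac{\gmax}{1-\gmax}\gmax^{\tauwindow}$, giving
\[
\frac{L_g L_{\probc}}{1-\gmax}\,\neps \;=\; \frac{\gmax\, L_g L_{\probc}}{(1-\gmax)^2}\cdot \gmax^{\tauwindow}.
\]
Write $X := \frac{\gmax}{(1-\gmax)^2}\cdot\frac{L_g L_{\probc}}{\uo}\cdot\frac1\epsilon$, so that the defining choice $\tauwindow = \ceil{\log_{(1/\gmax)} X}$ yields $\tauwindow \ge \log_{1/\gmax} X$. Since $0<\gmax<1$, the map $x\mapsto \gmax^{x}$ is decreasing, hence $\gmax^{\tauwindow} \le \gmax^{\log_{1/\gmax} X} = X^{-1} = \frac{(1-\gmax)^2}{\gmax}\cdot\frac{\uo}{L_g L_{\probc}}\cdot\epsilon$. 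Substituting this bound into the display, every factor of $\gmax$, $(1-\gmax)$, $L_g$, $L_{\probc}$ cancels and we are left with $\frac{L_g L_{\probc}}{1-\gmax}\neps \le \uo\,\epsilon$. The ceiling only makes $\tauwindow$ larger, which is the direction we want.

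For the second inequality I would lower-bound $\myopicopttau$ by exhibiting a concrete human strategy over the first $\tauwindow$ rounds: pull $i^\star \in \argmax_{i\in\arms}\big(\probc(\mu_i,\maxgenmu{1})-c_i\big)$ at round $1$ and play $\emptyset$ at rounds $2,\dots,\tauwindow$. By the standing assumption $\max_{i\in\arms}\big(\probc(\mu_i,\maxgenmu{1})-c_i\big)\ge\uo$ this earns at least $\uo$ in round $1$, while each pausing round contributes exactly $0$ (pausing costs nothing and $\probc(\emptyset,\cdot)=0$); thus this strategy has total value at least $\uo$, and since $\myopicopttau$ is the optimum over the first $\tauwindow$ rounds we get $\myopicopttau\ge\uo$. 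There is no real obstacle here: the lemma is essentially the defining property of $\tauwindow$ together with the mild first-round utility assumption, and the only step requiring a moment of care is the monotonicity/direction argument when passing from the lower bound on $\tauwindow$ to the upper bound on $\gmax^{\tauwindow}$.
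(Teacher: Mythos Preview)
Your proof is correct and follows essentially the same approach as the paper: unfold $\neps$, use the defining choice of $\tauwindow$ to bound $\gmax^{\tauwindow}\le X^{-1}$ and obtain $\frac{L_g L_{\probc}}{1-\gmax}\neps\le \uo\epsilon$, then use $\myopicopttau\ge \uo$. The paper phrases the last step as $\myopicopttau\ge\myopicoptone\ge\uo$, which is exactly your ``pull the best arm once, then pause'' lower bound.
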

\begin{proof}
\begin{align*}
    \frac{L_g L_{\probc}}{1-\gmax} \neps & = L_g L_{\probc} \cdot \frac{\gmax}{(1-\gmax)^2} \cdot \gmax^{\tauwindow} \\
    & \leq L_g L_{\probc} \cdot \frac{\gmax}{(1-\gmax)^2} \cdot \frac{(1-\gmax)^2}{\gmax}  \cdot \frac{\uo}{L_g L_{\probc}} \cdot \epsilon \quad \text{(by plugging the value of $\tauwindow$ in $\gmax^{\tauwindow}$)} \\
    & = \uo \epsilon \\
    & \leq \epsilon  \myopicopttau \quad \text{(since $\myopicopttau \ge \myopicoptone \ge \uo$)}
\end{align*}

\end{proof}

Now we restate the theorem for the approximation ratio and give its proof. 
\optimzethenpause* 
\begin{proof}
Note that the algorithm cycles between running $\bmyopic(\tauwindow)$ and pausing for $\tauwindow$ many rounds. 
Let $i_0$ be the arm that has the maximum number of discounted pulls at the end of round $\tauwindow$ when running $\bmyopic(\tauwindow)$ and denote that number of pulls by $\bn_0$. We will show that at the beginning of any cycle the number of pulls for any arm will be small. Specifically, we have
\begin{claim}
At the beginning of any cycle at round $t$ we have $\forall i \in \arms$ we have $\bn_i(t-1) \leq \neps$. 
\end{claim}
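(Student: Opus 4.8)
The plan is to exploit the rigid cyclic structure of Algorithm \ref{alg:myopic_then_pause_discount}: every cycle consists of $\tauwindow$ rounds in which $\bmyopic(\tauwindow)$ is followed, and then $\tauwindow$ rounds of pausing. Consequently, if $t$ is the first round of a cycle with $t>1$, then rounds $t-\tauwindow, t-\tauwindow+1,\dots,t-1$ are exactly the pausing phase of the preceding cycle, during which the human plays $\emptyset$ in every round. The base case $t=1$ is disposed of immediately, since $\bn_i(0)=0\le\neps$ for every arm.

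The key sub-step is a \emph{universal} a priori bound on the discounted pull count that holds at every round. From the definition $\pullsh{i}(s)=\sum_{r=1}^{s}(\gamma_i)^{s+1-r}\,\mathbb{I}(b_r=i)$ one gets, for every round $s$ and every arm $i$,
\begin{align*}
    \bn_i(s) \;\le\; \sum_{r=1}^{s}(\gamma_i)^{s+1-r} \;\le\; \sum_{j=1}^{\infty}(\gamma_i)^{j} \;=\; \frac{\gamma_i}{1-\gamma_i} \;\le\; \frac{\gmax}{1-\gmax},
\end{align*}
using $\gamma_i\le\gmax<1$ and monotonicity of $x\mapsto x/(1-x)$ on $[0,1)$. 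Next I would observe that a pausing round merely contracts the count: if $b_r\ne i$ then $\bn_i(r)=\gamma_i\,\bn_i(r-1)$, directly from the definition of $\pullsh{i}$. Chaining this identity over the $\tauwindow$ consecutive pausing rounds immediately preceding round $t$, and then invoking the universal bound above at the end of the previous cycle's myopic phase, yields
\begin{align*}
    \bn_i(t-1) \;=\; (\gamma_i)^{\tauwindow}\,\bn_i(t-1-\tauwindow) \;\le\; (\gmax)^{\tauwindow}\cdot\frac{\gmax}{1-\gmax} \;=\; \neps,
\end{align*}
where the last equality is the definition of $\neps$ in \eqref{eq:neps_def}. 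This is essentially the entire argument.

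There is no deep obstacle here; the only thing that needs care is the bookkeeping. Concretely, I would verify that (i) whenever a cycle actually begins at some round $t>1$, the preceding cycle has fully completed both its myopic and its pausing phases — which holds because a new cycle is entered only while $t\le T$, forcing all $2\tauwindow$ rounds of the previous cycle to have elapsed — so that the $\tauwindow$ rounds before $t$ are genuinely all pausing rounds; and (ii) the one-step contraction identity $\bn_i(r)=\gamma_i\,\bn_i(r-1)$ for $b_r\ne i$ is applied to exactly the index window $\{t-\tauwindow,\dots,t-1\}$. Once these are pinned down, the displayed chain closes the claim.
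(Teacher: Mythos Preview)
Your argument is correct, and it is in fact cleaner than the paper's own proof. The paper tracks the contribution to $\bn_i$ from \emph{every} past myopic phase: it bounds the per-phase contribution by $\bn_0\le\frac{\gmax}{1-\gmax}(1-\gmax^{2\tauwindow})$, then observes that at the start of the $m$th cycle these contributions have been discounted by $\gmax^{\tauwindow},\gmax^{3\tauwindow},\gmax^{5\tauwindow},\dots$, and sums the resulting geometric series $\bn_0\sum_{\ell\ge 0}\gmax^{(2\ell+1)\tauwindow}=\bn_0\,\frac{\gmax^{\tauwindow}}{1-\gmax^{2\tauwindow}}\le\neps$. You bypass this bookkeeping entirely by noting the universal bound $\bn_i(s)\le\frac{\gmax}{1-\gmax}$ at \emph{any} round $s$, and then applying a single $\tauwindow$-step contraction over the immediately preceding pause phase. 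Both routes land on exactly $\neps$, but yours needs only one inequality and one multiplicative contraction, whereas the paper's needs an infinite sum and a slightly delicate choice of the $\bn_0$ bound to make the telescoping work. Your observation that $\bn_i(r)=\gamma_i\,\bn_i(r-1)$ whenever $b_r\ne i$ is exactly what drives the simplification, and the index window you identify, $\{t-\tauwindow,\dots,t-1\}$, is correct (so that $\bn_i(t-1)=\gamma_i^{\tauwindow}\bn_i(t-1-\tauwindow)$).
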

\begin{proof}
Arms only receive pulls when $\bmyopic(\tauwindow)$ is active. It follows that after the end of $\bmyopic(\tauwindow)$ in the first cycle we have
\begin{align}
    \bn_0 \leq \sum_{r=1}^{\tauwindow} \gamma^r_{\max} = \frac{\gmax}{1-\gmax} \cdot (1-\gmax^{\tauwindow}) \leq \frac{\gmax}{1-\gmax} \cdot (1-\gmax^{2\tauwindow})
\end{align}
At the beginning of the $m^{th}$ cycle we would have an upper bound for the discounted number of pulls of (see Figure \ref{fig:discounted_pulls_bound} for an illustration)
\begin{align*}
    \bn_0 \cdot \big(\gamma^{\tauwindow}_{\max} + \gamma^{3 \tauwindow}_{\max} + \gamma^{5 \tauwindow}_{\max} + \dots \big) & = \bn_0 \sum_{\ell=0}^{\infty} \gmax^{\tauwindow (2\ell+1)} \\
    & = \bn_0 \frac{\gmax^{\tauwindow}}{1-\gmax^{2\tauwindow}} \\
    & \leq \frac{\gmax}{1-\gmax} \cdot \gmax^{\tauwindow} \\
    & \leq \neps && \text{(based on the definition of $\neps$ in \eqref{eq:neps_def})}
    \end{align*}
\end{proof}

\begin{figure}
    \centering
    \includegraphics[width=1.0\linewidth]{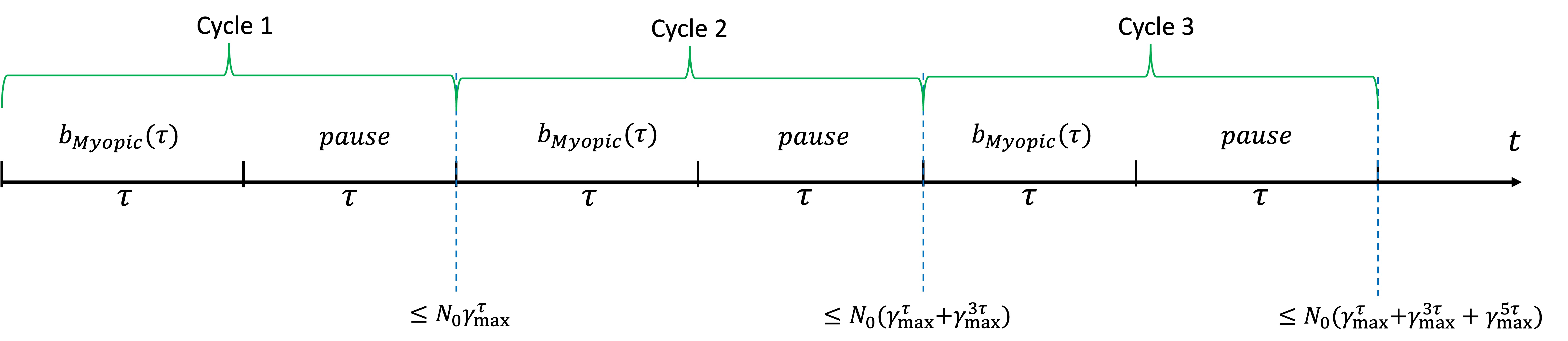}
    \caption{Illustration of the upper bound on the discounted number of contents at the end/beginning of each cycle.}
    \label{fig:discounted_pulls_bound}
\end{figure}

Now we establish that $\bmyopic(\tauwindow)$ is approximately optimal if $\forall i \in \arms$ we have $\bn_i(t) \leq \neps$.
\begin{claim}
Suppose we run the $\bmyopic(\tauwindow)$ starting from a state where the number of pulls for all arms is at most $\neps$ then the utility we obtain is at least $(1-\epsilon) \cdot \myopicopttau$.     
\end{claim}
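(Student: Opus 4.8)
The plan is to bound, round by round, the gap between running the fixed deterministic action sequence $\bmyopic(\tauwindow)$ from the given ``dirty'' state (in which every arm has discounted pull count at most $\neps$) and running the same sequence from the all-zero state, which by definition of $\myopicopttau$ collects utility exactly $\myopicopttau$. First I would observe that, since the discounted-pull functional $\pullsh{i}(\cdot)$ is linear in the pull history and the two runs replay the identical sequence $b_1,\dots,b_{\tauwindow}$, at every round the discounted pull counts differ only by the residual contribution of the initial dirt, which decays geometrically: writing $\bn_i(r)$ and $\bn_i^{0}(r)$ for the discounted count of arm $i$ at the end of round $r$ in the dirty run and the clean run respectively, one has $\bn_i(r)-\bn_i^{0}(r)=\gamma_i^{r}\,\bn_i(0)$, and since $\bn_i(0)\le\neps$ and $\gamma_i\le\gmax$ this gives $0\le\bn_i(r-1)-\bn_i^{0}(r-1)\le\gmax^{r-1}\neps$ for every round $r\le\tauwindow$.

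Next I would propagate this estimate through the two Lipschitz assumptions. Because each $g_i$ is decreasing and $L_g$-Lipschitz, the generative mean of arm $i$ at round $r$, namely $\mu_i-g_i(\bn_i(r-1))$ in the dirty run versus $\mu_i-g_i(\bn_i^{0}(r-1))$ in the clean run, satisfies $0\le(\mu_i-g_i(\bn_i(r-1)))-(\mu_i-g_i(\bn_i^{0}(r-1)))\le L_g\,\gmax^{r-1}\neps$; and since $\max_i x_i-\max_i y_i\le\max_i(x_i-y_i)$, the same bound carries over to the maximum generative mean $\maxgenmu{r}$ faced by the user. Then, using that $\probc$ is decreasing and $L_{\probc}$-Lipschitz in its second argument, and that both runs play the identical arm $b_r$ at round $r$ and hence incur the same cost $c_{b_r}$ (the case $b_r=\emptyset$ being trivial), the round-$r$ human utility in the dirty run is below that in the clean run by at most $L_{\probc}L_g\,\gmax^{r-1}\neps$. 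Summing over $r=1,\dots,\tauwindow$ and bounding $\sum_{r\ge1}\gmax^{r-1}$ by $\tfrac{1}{1-\gmax}$, the utility collected by $\bmyopic(\tauwindow)$ from the dirty start is at least $\myopicopttau-\tfrac{L_gL_{\probc}}{1-\gmax}\neps$.

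Finally I would invoke Lemma~\ref{lemma:nepsbound}, which was set up precisely for this and gives $\tfrac{L_gL_{\probc}}{1-\gmax}\neps\le\epsilon\,\myopicopttau$, so the collected utility is at least $(1-\epsilon)\myopicopttau$, as claimed. The step needing the most care is the geometric-decay estimate on the dirt: it is essential to bound the round-$r$ residual by $\gmax^{r-1}\neps$ rather than crudely by $\neps$, since summing the latter over the window would cost an extra factor of $\tauwindow$ and break the matching with Lemma~\ref{lemma:nepsbound}; everything after that is a routine forward propagation of Lipschitz errors, so I do not expect further obstacles.
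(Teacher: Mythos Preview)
Your proposal is correct and follows essentially the same approach as the paper's own proof: bound the residual dirt by $\gmax^{r-1}\neps$ at round $r$, push this through the Lipschitz constants of $g_i$ and $\probc$ to get a per-round utility loss of $L_gL_{\probc}\gmax^{r-1}\neps$, sum the geometric series, and invoke Lemma~\ref{lemma:nepsbound}. If anything, your write-up is slightly more careful than the paper's, since you spell out the elementary inequality $\max_i x_i-\max_i y_i\le\max_i(x_i-y_i)$ that carries the per-arm bound to the maximum generative mean, and you note that the $b_r=\emptyset$ case is trivial.
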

\begin{proof}
We will add a prime $'$ to denote a quantity when the number of pulls is starting from at most $n_\epsilon$, if it starts from $0$ then we omit the prime $'$. If we follow strategy $\bmyopic(\tauwindow)$ then at any round $r$ in the start of the cycle we would have $\bn'_i(r) \leq \bn_i(r) + \neps \gamma^{r-1}$. It follows by the Lipshitz property of $g_i(.)$ and the fact that it is a decreasing function that $g_i(\bn'_i(r)) \ge g_i(\bn_i(r)) - L_g \neps \gamma^{r-1}$. Therefore, the maximum generative mean increases by at most $L_g \neps \gamma^{r-1}$, i.e. $\maxgenmu{r}' \leq \maxgenmu{r} + L_g \neps \gamma^{r-1}$.  Further, it follows by the Lipshitz property of $\probc$ that in the first half of the cycle the utility is : 
\begin{align*}
    \sum_{r=1}^{\tauwindow} \probc(\mu_{b_r},\maxgenmu{r}') - c_{b_r} & \ge \sum_{r=1}^{\tauwindow} \probc(\mu_{b_r},\maxgenmu{r} + L_g \neps \gamma^{r-1}) - c_{b_r} \\
    & \ge \Big[\sum_{r=1}^{\tauwindow} \probc(\mu_{b_r},\maxgenmu{r} ) - c_{b_r}\Big] - \Big[ \sum_{r=1}^{\tauwindow}  L_g L_{\probc} \neps \gamma^{r-1} \Big] \\
    & \ge \Big[\sum_{r=1}^{\tauwindow} \probc(\mu_{b_r},\maxgenmu{r} ) - c_{b_r}\Big] -  L_g L_{\probc} \neps \frac{1}{1-\gmax} \\ 
    & \ge  \myopicopttau - \epsilon \cdot \myopicopttau && \text{(Based on Lemma \ref{lemma:nepsbound})} \\
    & = (1-\epsilon) \myopicopttau
\end{align*}
\end{proof}
The strategy runs $\bmyopic(\tauwindow)$ for a window of $\tauwindow$ where it obtains a $(1-\epsilon)$ approximation for that window then it pauses for another window of size $\tauwindow$. Therefore, its approximation ratio over the entire horizon $T$ is at least $\frac{1-\epsilon}{2} \frac{\floor{\frac{T}{2 \cdot \tauwindow}}}{\floor{\frac{T}{2 \cdot \tauwindow}}+1}$. 

The run-time to find $\bmyopic(\tauwindow)$ using the exponential DAG of Appendix \ref{app:basic_stg} is $O(k^{\tauwindow})=O(k^{O(\log \frac{1}{\epsilon})})$ and it takes $O(\tau)=O(1)$ memory to save the arm pulls of $\bmyopic(\tauwindow)$. Further, since the algorithm has to make $T$ many decisions for the arm pulls over the whole horizon, we obtain a run-time of $O(T + k^{O(\log \frac{1}{\epsilon})} )$.
\end{proof}

We restate the next theorem and give its proof.
\pauseth*
\begin{proof}
Let us assume that the human pulls arms for $\alpha T$ rounds where $\alpha \ge 0.1$. Consider the following instance where we have two arms, i.e., $k=2$ and $\mu_1=\mu_2=\mu$, $g_1(.)=g_2(.)=g(.)$, $c_1=c_2=c$, and $\gamma_1=\gamma_2=\gamma=\frac{1}{2}-\epsilon$ where $\epsilon \in (0,\frac{1}{2})$. 

Since $\gamma = \frac{1}{2} - \epsilon$, then the following is always true for all $\kappa \ge 2$: 
\begin{align}
    \gamma^{\kappa+1} \frac{1}{1-\gamma} < \gamma^{\kappa} \label{eq:half_minus_eps_sep}
\end{align}

\begin{claim}
The discounted number of pulls if an arm was played in the last $\kappa$ rounds is at least $\gamma^{\kappa}$ and if the arm has not been played in the last $\kappa$ rounds then the number of pulls is at most $\gamma^{\kappa+1} \frac{1}{1-\gamma}$.
\end{claim}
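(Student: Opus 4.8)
The plan is to obtain the claim directly from the definition of the discounted pull count, $N_i(t) = \sum_{s=1}^{t}\gamma^{\,t+1-s}\,\mathbb{I}(b_s = i)$, exploiting only that $\gamma = \tfrac12-\epsilon \in (0,1)$ and that every summand is nonnegative. The argument is the same in spirit as (and simpler than) that of Claims~\ref{cl:nbar_lower_bound} and~\ref{cl:nbar_upper_bound}, the difference being that there is no delay constraint here, only the coarser ``played / not played within the last $\kappa$ rounds'' dichotomy.

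For the lower bound, suppose arm $i$ is pulled at some round $s$ among the last $\kappa$ rounds, i.e. with $t+1-s \le \kappa$. The single term contributed by this pull equals $\gamma^{\,t+1-s}$, and since $x\mapsto\gamma^{x}$ is decreasing for $\gamma\in(0,1)$ and $t+1-s\le\kappa$, that term is at least $\gamma^{\kappa}$; as all remaining terms of $N_i(t)$ are nonnegative, $N_i(t)\ge\gamma^{\kappa}$. For the upper bound, suppose arm $i$ is pulled at none of the last $\kappa$ rounds. Then every nonzero term of $N_i(t)$ corresponds to a round $s$ with $t+1-s\ge\kappa+1$, so $N_i(t)$ is maximized (over all such play patterns) when the arm is pulled at every round up to $t-\kappa$, which gives
\[
    N_i(t)\;\le\;\sum_{j=\kappa+1}^{t}\gamma^{j}\;\le\;\sum_{j=\kappa+1}^{\infty}\gamma^{j}\;=\;\frac{\gamma^{\kappa+1}}{1-\gamma}\;=\;\gamma^{\kappa+1}\cdot\frac{1}{1-\gamma}.
\]

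There is no real obstacle in this claim — it is a one-line consequence of the geometric structure of the discount — so the only things to be careful about are (i) lining up the phrase ``last $\kappa$ rounds'' with the exponent window $\{1,\dots,\kappa\}$ and its complement $\{\kappa+1,\kappa+2,\dots\}$, and (ii) the degenerate case in which fewer than $\kappa$ rounds have elapsed, where ``not pulled in the last $\kappa$ rounds'' forces the arm never to have been pulled, so $N_i(t)=0$ and the bound holds trivially. The purpose of the two estimates is to set up, via \eqref{eq:half_minus_eps_sep}, a strict gap for $\kappa\ge 2$ between the largest count an untouched arm can have, $\gamma^{\kappa+1}/(1-\gamma)$, and the smallest count guaranteed just after a recent pull, $\gamma^{\kappa}$; this separation is what drives the rest of the proof of Theorem~\ref{th:pauseth}, but it plays no role in the claim itself.
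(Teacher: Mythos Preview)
Your proof is correct and follows essentially the same approach as the paper: both bound the discounted count via the geometric series, you by summing the contributing terms directly, the paper by observing that the count at the last possible play round is at most $\frac{1}{1-\gamma}$ and then discounting forward. Your write-up is, if anything, more explicit than the paper's brief justification.
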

\begin{proof}
The first part is immediate. 

The second part holds since if the arm is not played in the last $\kappa$ rounds then the maximum number of pulls at the round proceeding the current round by $\kappa+1$ is at most $\frac{1}{1-\gamma}$ and therefore if we do not play the arm for $\kappa+1$ rounds then the discounted number of pulls is at most $\gamma^{\kappa+1} \frac{1}{1-\gamma}$. 
\end{proof}

The set of rounds where arms are played can be decomposed into $\tless$ and $\tmore$ which are the set of rounds where an arm is pulled after waiting for less than $\kappa$ and $\kappa$ rounds or more, respectively. Clearly, we have:
\begin{align}
    \alpha \cdot T = |\tless| + |\tmore| 
\end{align}
Note that $|\tmore| \leq \frac{1}{\kappa+1} T$ since once if an arm is pulled then it must be that we have at least $\kappa+1$ rounds where no arm was played.  
Therefore, we can lower bound $|\tless|$ as follows
\begin{align}
    |\tless| & = \alpha \cdot T - |\tmore| \\
             & \ge \alpha \cdot T - \frac{1}{\kappa+1} T \\
             & = (\alpha - \frac{1}{\kappa+1}) T 
\end{align}
Now we set $\kappa = \argmin\limits_{r \in \mathbb{Z}_{>0}} \frac{1}{r+1} < \alpha$. 

Let $N_l = \gamma^{\kappa+1} \frac{1}{1-\gamma}$ and let $N_u = \gamma^{\kappa}$. Note that $N_u > N_l$ by \eqref{eq:half_minus_eps_sep}. We set the utility values to the following:
\begin{align}
    \probc(\mu,\maxgenmu{t}) -c & = \nu >0  \ \ \text{if} \ \ N_1(t-1), N_2(t-1) \leq N_l  \\ 
    \probc(\mu,\maxgenmu{t}) -c & = \nu'  \ \ \text{if} \ \ N_1(t-1) \ \text{or} \ N_2(t-1) \ge N_u  \\ 
\end{align}
We set the values of $\nu$ and $\nu'$ such that 
\begin{align}
    \nu' = \frac{-\beta}{(\alpha (\kappa+1)-1)} \nu 
\end{align}
where $\beta>1$. From the above the utility is at most
\begin{align}
    \nu' \cdot  |\tless| + \nu \cdot |\tmore| & \leq  \frac{-\beta}{(\alpha (\kappa+1)-1)} \nu (\alpha - \frac{1}{\kappa+1}) T + \nu \frac{1}{\kappa+1} T \\ 
    & = \nu T \frac{1}{\kappa+1} (1-\beta) \\
    & < 0 
\end{align}
Since $\beta >1$ then the above utility is negative. 

Now we prove the second part. To do that we only need to establish a lower bound on the utility of the optimal algorithm $\OPT$. Consider an algorithm that pulls arm 1 and then pauses for $\kappa$ rounds. Then it must achieve a utility of at least $\nu \floor{\frac{T}{\kappa+1}}$. Therefore, $\OPT \ge \nu \floor{\frac{T}{\kappa+1}} >0$. It follows since the utility obtained by the algorithm is negative that the multiplicative approximation ratio is less than zero. 
 
\end{proof}
\section{Proof of Theorem \ref{th:long_horizon_max_utility} from Section \ref{sec:long_horizon}}\label{app:long_horizon_proofs}
We follow notation similar to that introduced in Appendix \ref{app:basic_stg}. Therefore, $\genmu{i}(p=r)= \mu_i - g_i(r)$, i.e., the generative mean value of arm $i$ when the human has pulled the arm $r$ many times. While the problem can be clearly solved by finding the longest path in the exponential DAG of Appendix \ref{app:basic_stg}, it would run in exponential time. Therefore, we will reduce the size of the graph.

To reduce the size of the graph we will now identity simplifying structures that an optimal strategy can satisfy. Specifically, we have two lemmas (Lemma \ref{lemma:never_pull_neg} and Lemma \ref{lemma:sync}) that show that a deterministic optimal strategy with simple special properties exists. In the proof of both of lemmas, we will assume that we start from some deterministic strategy and show that we gain at least the same utility by deviating to another deterministic strategy that satisfies the desired properties.  

The first lemma states that an optimal deterministic strategy should only pull arms that give non-negative utility and stop playing if no such arm exists. This verifies that the assumption made in Section \ref{sec:long_horizon} that the human never plays an arm that gives negative utility and would opt-out in case all arms provide negative utility is without loss of generality. 
\begin{restatable}{lemma}{nevepullneg}\label{lemma:never_pull_neg} 
There exists a deterministic optimal strategy which always pulls arms from $i \in \agmgm(t)$ when $|\agmgm(t)| \ge 1$ and stops playing if $|\agmgm(t)| = 0$. 
\end{restatable}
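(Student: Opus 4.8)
The plan is to take an arbitrary deterministic optimal strategy $\bbold$ (one exists by Lemma~\ref{lemma:pure_opt_exists}) and reshape it into the claimed form through a sequence of exchange arguments that never lower the human's utility, using two features of the $\gamma=1$ regime: every generative mean $\genmu{i}(t)=\mu_i-g_i(\pullsh{i}(t-1))$ is monotone in the human's pull counts, and, with no discounting, the state at any round depends only on \emph{how many} times each arm has been pulled so far, not on \emph{when}.

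The first ingredient is a single-pull monotonicity step. Suppose a strategy pulls arm $j$ at round $t_0$, and let the modified strategy replace just that action by the opt-out $\emptyset$ while leaving all later actions unchanged. Since $\gamma_j=1$, this lowers $\pullsh{j}(s)$ by one for every $s\ge t_0$ and leaves all other counts fixed, so (as $g_j$ is decreasing) $\genmu{j}(s)$ can only decrease for $s>t_0$, hence $\maxgenmu{s}$ can only decrease for $s>t_0$; since $\probc$ is decreasing in its second argument, the per-round utility $\probc(\mu_{b_s},\maxgenmu{s})-c_{b_s}$ weakly increases at every $s>t_0$ and is unchanged at $s<t_0$. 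Consequently, if the removed pull had negative utility --- equivalently $b_{t_0}\notin\agmgm(t_0)$ --- the modified strategy has weakly larger total utility.

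Applying this repeatedly at any round that currently pulls an arm outside $\agmgm(t)$ gives, after at most $T$ steps (each strictly decreasing the number of pulls), a still-optimal strategy in which every executed pull at round $t$ lies in $\agmgm(t)$; no step ever creates a new negative pull, since future per-round utilities only rise. In particular this strategy opts out on every round with $\agmgm(t)=\emptyset$, and since $\agmgm(t)\subseteq\agmgm(t-1)$ (the observation in Section~\ref{sec:long_horizon}), it opts out on all subsequent rounds once $\agmgm(\cdot)$ empties --- i.e. it ``stops playing''. It remains to rule out idling while a profitable arm still exists. Because with $\gamma=1$ the utility of the $j$-th pull depends only on which arms preceded it and not on the interleaved opt-out rounds, we may push all opt-out rounds to the end without changing the total utility; write the resulting pull sequence as $i_1,\dots,i_m$ on rounds $1,\dots,m$. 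If $\agmgm(m+1)\neq\emptyset$, appending any arm of $\agmgm(m+1)$ contributes a nonnegative increment, so optimality is preserved; iterate. Each arm $i$ is pulled at most $\nexit{i}$ times (a further pull would face a generative mean at least $\muexit{i}$, hence give negative utility) and only arms of $\agmgm(1)$ are ever pulled, so the total number of pulls never exceeds $\sum_{i\in\agmgm(1)}\nexit{i}\le T$ (the long-horizon assumption, Definition~\ref{def:long horizon}); once that many pulls are made, every arm has generative mean at least $\muexit{i}$, so $\agmgm(\cdot)=\emptyset$. Hence the iteration halts exactly when $\agmgm(\cdot)$ first becomes empty, producing an optimal deterministic strategy that pulls an arm of $\agmgm(t)$ on every round with $|\agmgm(t)|\ge1$ and opts out on every round afterward, as claimed.

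The exchange arguments of the first two paragraphs are routine; the delicate part is the last one --- showing that pausing is never strictly beneficial --- which genuinely uses both the memorylessness-in-time of the $\gamma=1$ dynamics (so opt-out rounds can be deferred) and the long-horizon counting bound (so those deferred rounds are never actually needed).
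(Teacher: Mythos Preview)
Your proof is correct and takes a genuinely different route from the paper's. Both are exchange arguments, but the paper works round-by-round: whenever the current strategy plays $i\notin\agmgm(t)$ while $\agmgm(t)\neq\emptyset$, it distinguishes (i) the case where no arm of $\agmgm(t)$ is ever pulled again (deviate by pulling some $j\in\agmgm(t)$ at $t$ and then stopping, using that the discarded tail had negative utility) from (ii) the case where some $j\in\agmgm(t)$ is pulled at a later round $t_j$ (deviate by pulling $j$ at $t$, discarding the intermediate negative-utility rounds, and shifting $\bbold$'s later actions forward), and then recurses. Your decomposition is more modular: first replace every negative pull by $\emptyset$ via single-pull monotonicity, then slide all opt-out rounds to the tail using that with $\gamma=1$ utility depends only on pull counts, and finally greedily append from $\agmgm(\cdot)$ until it empties. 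The paper's approach has the mild advantage that it does not need to invoke the long-horizon bound at this stage (the shifted strategy never requires extra rounds), whereas your append step cites Definition~\ref{def:long horizon} to guarantee room; conversely, your single-pull removal cleanly sidesteps the paper's case analysis and the somewhat delicate termination of its recursion.
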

\begin{proof}
The second part of the statement is proved immediately since when $|\agmgm(t)| = 0$ only negative utility can be accumulated so higher utility can be obtained by deviating to a strategy that stops playing.

The first part can be proved using the following claim. 
\begin{claim}
Consider a deterministic strategy which at round $t$ pulls an arm $i \notin \agmgm(t)$ when $|\agmgm(t)| \ge 1$ then it can replaced by a strategy which pulls the same arms before round $t$, pulls an arm from $\agmgm(t)$ in round $t$, and achieves at least the same utility. 
\end{claim}
\begin{proof}
Let the followed strategy be denoted by $\bbold$ and the deviation strategy be denoted by $\bprime$. Note that if $\bbold$ does not pull an arm from $\agmgm(t)$ at round $t$ when $|\agmgm(t)|\ge 1$ then either (1) no arm from $\agmgm(t)$ is pulled again until the end of the horizon or (2) there exists some arm in $\agmgm(t)$ that was not pulled at round $t$ and that will be pulled after round $t$. 

For case (1), if we deviate to a strategy $\bprime$ which pulls the same arms as $\bbold$ up to round $t-1$, then pulls some arm $j \in \agmgm(t)$ at round $t$ and then stops playing we would obtain a higher utility: 
\begin{align}
    \uh_T(\bprime) = \uh_{1:t-1}(\bbold) +  (\probc(\mu_j,\maxgenmu{t}) -c_j)  > \uh_{1:t-1}(\bbold) + \uh_{t:T}(\bbold) = \uh_T(\bbold)
\end{align}
Note in the above that $\uh_{t:T}(\bbold)<0$.

For case (2), let arm $j \in \agmgm(t)$ be the first arm in $\agmgm(t)$ to be pulled after round $t$ and let that round when it is pulled be  $t_j\leq T$. We will deviate to a strategy which pulls the same arms as $\bbold$ up to round $t-1$, then pulls $j$ at round $t$ and then pulls the same arms pulled by $\bbold$ from round $t_j+1$ to $T$ and then stops playing. It follows by the above that for a round $r \in \{t+1,\dots,T-(t_j-t)\}$ strategy $\bprime$ pulls the arm as pulled by $\bbold$ in round $r+(t_j-t)$, see Figure \ref{fig:tape_1} for an illustration. Further, for any round $r$ denote by $\maxgenmu{r}$ and $\maxgenmu{r}'$ the maximum generative mean at that round $r$ under strategies $\bbold$ and $\bprime$, respectively. Note that by construction of $\bprime$ we have $\maxgenmu{r}' \leq \maxgenmu{r}, \forall r \in [T]$. Now we can lower bound the utility under $\bprime$ as follows
\begin{align*}
    \uh_T(\bprime) & = \uh_{1:t-1}(\bbold) + (\probc(\mu_j,\maxgenmu{t}') -c_j) + \sum_{r=t+1}^{T-(t_j-t)} [\probc(\mu_{b'_r},\maxgenmu{r}') - c_{b'_r}] \\ 
              & \ge \uh_{1:t-1}(\bbold) + (\probc(\mu_j,\maxgenmu{t}') -c_j) + \sum_{r=t_j+1}^{T} [\probc(\mu_{b_r},\maxgenmu{r}) - c_{b_r}] \\ 
              & \ge \uh_{1:t-1}(\bbold) + (\probc(\mu_j,\maxgenmu{t_j}) -c_j) + \sum_{r=t_j+1}^{T} [\probc(\mu_{b_r},\maxgenmu{r}) - c_{b_r}] \\ 
              & = \uh_{1:t-1}(\bbold) + u_{t_j:T}(\bbold) \\ 
              & >  \uh_{1:t-1}(\bbold)) + \uh_{t:t_j-1}(\bbold) + \uh_{t_j:T}(\bbold) \qquad \quad \text{(since $u_{t:t_j-1}(\bbold)<0$ by definition of $\agmgm(t)$)}\\ 
              & = \uh_T(\bbold) 
\end{align*}
 
\begin{figure}[h!]
    \centering
    \includegraphics[width=1.0\linewidth]{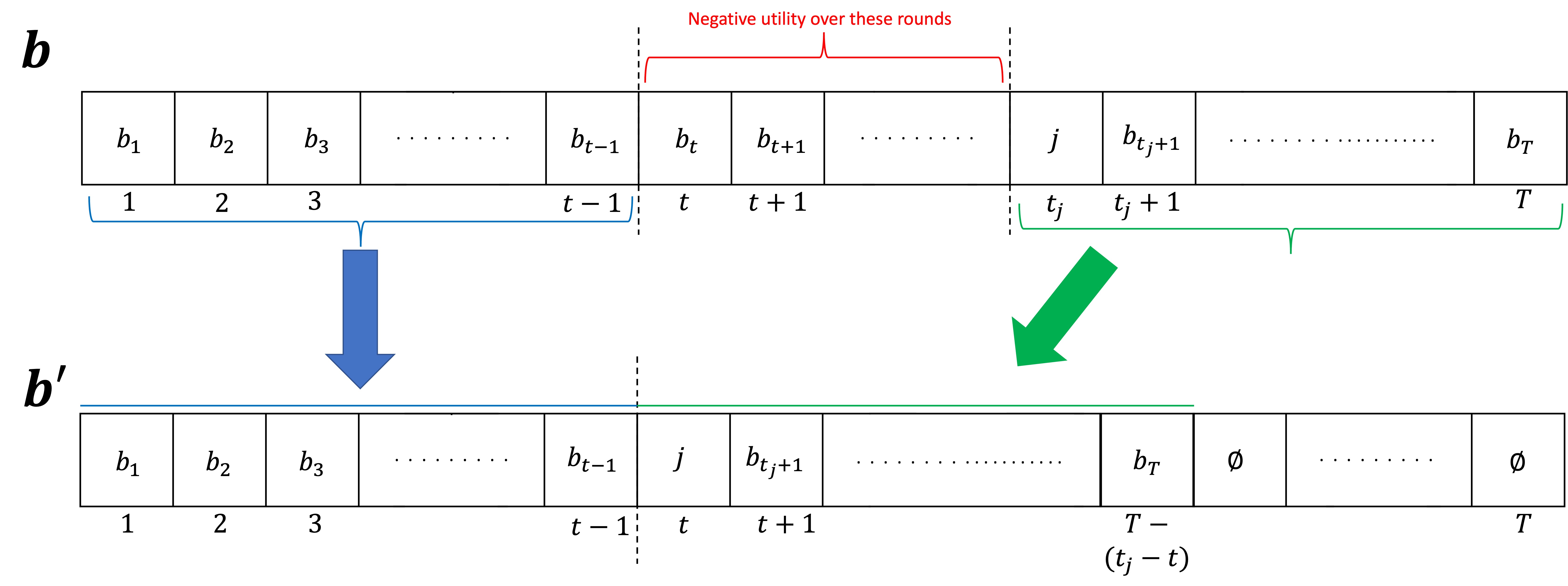}
    \caption{Illustration of the arm pulls under strategy $\bbold$ and the deviation strategy $\bprime$ for case (2) of Lemma \ref{lemma:never_pull_neg}.}
    \label{fig:tape_1}
\end{figure}

\end{proof}
By (recursively) invoking the above claim on any strategy which does not pull an arm $i \notin \agmgm(t)$ when $|\agmgm(t)| \ge 1$ we can obtain a new strategy which achieves at least the same utility that always pulls arms from $\agmgm(t)$.  
\end{proof}

At a given round $t$ define $\hold(t)=\{i \in \agmgm(t) | \genmu{i}(p=N_i(t-1)+1) \leq \maxgenmu{t}\}$, i.e. the set of arms which if pulled would not increase the value of the maximum generative mean. I.e., pulling an arm from $\hold(t)$ would hold $\maxgenmu{t}$ leading to $\Tilde{\mu}_{a^*_{t+1}}=\maxgenmu{t}$. 

We now have the following lemma which essentially states that if the maximum generative mean increases because of a pull, then an optimal deterministic strategy would then ``synchronize'' the generative means of the other arms to make them equal to the current maximum generative mean or slightly below it. 
\begin{restatable}{lemma}{lemmasync}\label{lemma:sync} 
In the long horizon setting there exists an optimal deterministic strategy which at a given round $t$ would always pull an arm $i \in \hold(t)$ when $|\hold(t)| \ge 1$. 
\end{restatable}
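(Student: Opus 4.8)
The plan is to follow the exchange-argument template used for Lemma~\ref{lemma:never_pull_neg}: start from a deterministic optimal strategy $\bbold$ that already obeys Lemma~\ref{lemma:never_pull_neg} (it only pulls arms in $\agmgm(t)$ and stops once $\agmgm(t)=\emptyset$), locate the earliest round $t^*$ at which $\hold(t^*)\neq\emptyset$ but $b_{t^*}\notin\hold(t^*)$, and build from it a deterministic strategy $\bprime$ that agrees with $\bbold$ on rounds $1,\dots,t^*-1$, pulls some fixed arm $i\in\hold(t^*)$ at round $t^*$, and satisfies $\uh_T(\bprime)\ge\uh_T(\bbold)$. Because $\bprime$ and $\bbold$ share the same history through round $t^*-1$, the set $\hold(t^*)$ is the same for both, so $\bprime$ is synchronizing at $t^*$ and its earliest violation lies strictly after $t^*$; iterating at most $T$ times produces an optimal synchronizing strategy. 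By definition of $\hold$, the action $b_{t^*}$ is either a pause or a pull of some $j\in\agmgm(t^*)$ for which $\genmu{j}(p=\pullsh{j}(t^*-1)+1)>\maxgenmu{t^*}$; these are the two cases to handle.

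The engine of the argument is the fact that pulling $i\in\hold(t^*)$ at round $t^*$ does not raise the maximum generative mean: since $\genmu{i}(p=\pullsh{i}(t^*-1)+1)\le\maxgenmu{t^*}$ and (because $\gamma=1$) every generative mean is nondecreasing in the round, this pinned value stays $\le\maxgenmu{t^*}\le\maxgenmu{r}$ at all later rounds $r$. In the pause case I let $\bprime$ simply insert the pull of $i$ at $t^*$: if $\bbold$ never pulls $i$ again this changes nothing downstream and adds $\probc(\mu_i,\maxgenmu{t^*})-c_i\ge 0$; if $\bbold$ pulls $i$ again at a later round $t_i$, I instead let $\bprime$ pull $i$ at $t^*$ and idle at $t_i$, in which case a round-by-round check (the only altered pull counts are $i$'s, and on $(t^*,t_i]$ arm $i$'s generative mean under $\bprime$ stays $\le\maxgenmu{t^*}$) gives that every round's maximum generative mean is weakly smaller under $\bprime$, the cost terms cancel, and the net change is $\probc(\mu_i,\maxgenmu{t^*})-\probc(\mu_i,\maxgenmu{t_i})\ge 0$ plus the nonnegative gains coming from the smaller maxima. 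In the max-raising case I let $\bprime$ pull $i$ at $t^*$ and then \emph{replay $\bbold$'s actions shifted one round later}, absorbing the one-round shift at the first round after $t^*$ at which $\bbold$ idles. Such a round exists: $\hold(t^*)\neq\emptyset$ forces $\agmgm(t^*)\neq\emptyset$, and in the long-horizon setting (Definition~\ref{def:long horizon}) $\agmgm(\cdot)$ eventually empties, after which Lemma~\ref{lemma:never_pull_neg} makes $\bbold$ pause forever, so there is an idle round strictly after $t^*$. From the absorbing round onward the pull-count vectors of $\bprime$ and $\bbold$ coincide and the strategies run identically.

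Two estimates then finish the claim. First, the maximum generative mean at every round $r$ is weakly smaller under $\bprime$ than under $\bbold$: by induction on $r$, the inserted pull of $i$ never raises the maximum (its generative mean is pinned $\le\maxgenmu{t^*}$), and every other pull $\bprime$ makes is a pull $\bbold$ also makes, executed from a pull-count vector that is coordinate-wise at most $\bbold$'s. Second, $\uh_T(\bprime)\ge\uh_T(\bbold)$: pairing up rounds, each pull of $\bprime$ that coincides with a pull of $\bbold$ earns weakly more because $\probc$ decreases in its second argument and the maximum generative mean is weakly smaller under $\bprime$; the single inserted pull of $i\in\hold(t^*)\subseteq\agmgm(t^*)$ earns at least $0$; and the idle round sacrificed to absorb the shift contributed $0$ to $\bbold$. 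Since the maximum generative mean at every round is never larger under $\bprime$, the set $\agmgm(r)$ is never smaller under $\bprime$, so $\bprime$ also never pulls a negative-utility arm, and the recursion preserves all invariants.

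I expect the main obstacle to be the bookkeeping in the max-raising case: tracking precisely how the shifted replay keeps $\bprime$'s pull-count vector lagging $\bbold$'s (exactly by the one extra pull of $i$ up to the absorbing round) and verifying the coordinate-wise domination needed for the first estimate even when $\bbold$ itself pulls $i$ or $j$ again in the interim, together with the degenerate sub-case in which $\bbold$ pulls in every round of $[t^*,T]$ so there is literally no idle round to absorb into. That sub-case is dispatched by first observing, from Definition~\ref{def:long horizon} together with Lemma~\ref{lemma:never_pull_neg} (each arm is pulled at most $\nexit{i}$ times and $T\ge\sum_{i\in\agmgm(1)}\nexit{i}$), that one may assume $\bbold$ leaves some round idle; the remaining steps are routine.
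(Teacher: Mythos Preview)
Your exchange-and-shift skeleton is the right one, and in fact matches the paper's proof. The gap is in your max-raising case: absorbing the shift at the first \emph{idle} round rather than at the first round $t_i>t^*$ where $\bbold$ pulls $i$ again breaks the domination estimate you rely on. Concretely, during the shifted replay you have $N^{\bprime}_i(r-1)=N^{\bbold}_i(r-2)+1$ for every $r>t^*$ up to the absorbing round, so once $\bbold$ pulls $i$ at some $s\in(t^*,\tau]$ the quantity $\genmu{i}(p=N^{\bprime}_i(r-1))$ is no longer ``pinned $\le\maxgenmu{t^*}$'': it becomes $\genmu{i}(p=N^{\bbold}_i(s)+1)$, which can strictly exceed $\maxgenmu{s+1}^{\bbold}=\genmu{i}(p=N^{\bbold}_i(s))$ whenever $i$ is itself the maximizer after $\bbold$'s pull. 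Hence neither the same-round comparison $\maxgenmu{r}^{\bprime}\le\maxgenmu{r}^{\bbold}$ nor the paired comparison $\maxgenmu{r}^{\bprime}\le\maxgenmu{r-1}^{\bbold}$ holds in general, and your per-round pairing argument collapses. (A two-arm example with $\genmu{1}(p{=}0)=\genmu{1}(p{=}1)$ small and $\genmu{1}(p{=}2)$ large, with $\bbold=(2,1,2,\emptyset,\dots)$, already exhibits $u_T(\bprime)<u_T(\bbold)$ for your $\bprime$; this is not ``routine bookkeeping.'')

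The paper avoids this by splitting not on what $b_{t^*}$ is, but on whether the inserted arm $j\in\hold(t^*)$ is pulled again by $\bbold$. If it is, at $t_j$, one absorbs the shift there---skipping $\bbold$'s pull $b_{t_j}=j$---so that on $(t^*,t_j]$ the count $N^{\bprime}_j$ stays exactly one ahead of a value $\bbold$ never increments, keeping $\genmu{j}^{\bprime}$ pinned at $\genmu{j}(p=N_j(t^*-1)+1)\le\maxgenmu{t^*}$, and from $t_j+1$ onward the pull-count vectors coincide. If $j$ is never pulled again, one absorbs at $\tau+1<T$ (long horizon gives $\tau<T$), and now the pinning holds throughout because $\bbold$ never touches $j$. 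You already used exactly this two-way split in your pause case; carrying it over to the max-raising case (and dropping the pause case, which is vacuous once $\bbold$ obeys Lemma~\ref{lemma:never_pull_neg}) gives a correct proof.
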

\begin{proof}
We will prove the lemma by contradiction. Specifically, we will consider an optimal deterministic strategy $\bbold$ that violates the condition and show that by deviating to a strategy $\bprime$ we would obtain at least the same utility, similar to what we did in Lemma \ref{lemma:never_pull_neg}. 
 
Since we are in the long horizon setting ($T \ge \sum_{i \in \agmgm{(1)}} \nexit{i}$) then by Lemma \ref{lemma:never_pull_neg} there exists a deterministic optimal strategy which always pulls arms from $\agmgm(t)$ and exists (stops playing) at some round $\tau \leq T$. Now consider the following claim: 
\begin{claim}
In the long horizon setting, consider a deterministic strategy which at round $t$ pulls an arm $i \notin \hold(t)$ when $|\hold(t)| \ge 1$,  then it can replaced by another strategy which pulls the same arms before round $t$, pulls an arm from $\hold(t)$ in round $t$, and achieves at least the same utility. 
\end{claim}
\begin{proof}
Suppose that there exists $j \in \hold(t)$ that was not pulled under strategy $\bbold$ at round $t$ then either (1) $j$ is pulled in some following round $t_j>t$ or (2) it is not pulled again. Consider case (1) where $j$ is pulled again. We will deviate to strategy $\bprime$ which pulls the same arms as $\bbold$ up to round $t-1$, then pulls $j$ at round $t$ and then in the rounds from $t+1$ to $t_j$ pulls the same arms as pulled by $\bbold$ from $t$ to $t_j-1$ in the same order, then finally pulls the same arms as $\bbold$ from rounds $t_j +1$ to $\tau$ as shown in Figure \ref{fig:tape_2}. The utility using $\bprime$ can be lower bounded as follows  
\begin{align}
    u_T(\bprime) & = u_{t-1}(\bbold) + (\probc(\mu_j,\maxgenmu{t}) -c_j) + \sum_{r=t+1}^{\tau} [\probc(\mu_{b'_r},\maxgenmu{r}') - c_{b'_r}] \nonumber \\ 
    & = u_{t-1}(\bbold) + (\probc(\mu_j,\maxgenmu{t}) -c_j) + \sum_{r=t+1}^{t_j} [\probc(\mu_{b'_r},\maxgenmu{r}') - c_{b'_r}] + \sum_{r=t_{j+1}}^{\tau} [\probc(\mu_{b'_r},\maxgenmu{r}') - c_{b'_r}] \nonumber \\ 
    & = u_{t-1}(\bbold) + (\probc(\mu_j,\maxgenmu{t}) -c_j) + \sum_{r=t+1}^{t_j} [\probc(\mu_{b'_r},\maxgenmu{r}') - c_{b'_r}] + \sum_{r=t_{j+1}}^{\tau} [\probc(\mu_{b_r},\maxgenmu{r}) - c_{b_r}] \label{eq:3rdline} \\ 
    & = u_{t-1}(\bbold) + (\probc(\mu_j,\maxgenmu{t}) -c_j) + \sum_{r=t}^{t_j-1} [\probc(\mu_{b_r},\maxgenmu{r}) - c_{b_r}] + \sum_{r=t_{j+1}}^{\tau} [\probc(\mu_{b_r},\maxgenmu{r}) - c_{b_r}] \label{eq:4line} \\ 
    & \ge u_{t-1}(\bbold) + \sum_{r=t}^{t_j} [\probc(\mu_{b_r},\maxgenmu{r}) - c_{b_r}] + \sum_{r=t_{j+1}}^{\tau} [\probc(\mu_{b_r},\maxgenmu{r}) - c_{b_r}] \label{eq:5line} \\ 
    & = u_T(\bbold)
\end{align}
In Line \eqref{eq:3rdline} we used the fact that $b'_r=b_r \text{ and } \maxgenmu{r}'=\maxgenmu{r} \ , \forall r\in \{t_j+1,\dots,T\}$. In Line \eqref{eq:4line} we used the fact that $b'_r=b_{r-1} \text{ and } \maxgenmu{r}' = \maxgenmu{r-1} \ , \forall r\in \{t+1,\dots,t_j\}$. In Line \eqref{eq:5line} we used that fact that $\maxgenmu{t} \leq \maxgenmu{t_j}$ and therefore $(\probc(\mu_j,\maxgenmu{t}) -c_j) \ge (\probc(\mu_j,\maxgenmu{t_j}) -c_j)$. 



Now we consider case (2) where arm $j$ is never pulled again. Since we are in the long horizon setting, i.e. $T \ge \sum_{i \in \agmgm{(1)}} \nexit{i}$ it follows that $\tau <T$. To see that, suppose $\tau=T$ then since we assume that $\bbold$ is an optimal strategy then $\tau \leq \sum_{i \in \agmgm{(1)}} \nexit{i}$ and since $T \ge \sum_{i \in \agmgm{(1)}} \nexit{i}$ then $\tau = \sum_{i \in \agmgm{(1)}} \nexit{i}$. The only way for an optimal strategy to have $\tau = \sum_{i \in \agmgm{(1)}} \nexit{i}$ is to pull each arm $i \in \agmgm(1), \ \nexit{i}$ many times. Sine $j \in \hold(t)$ is not pulled at or after round $t$ it follows that $j$ was not pulled $\nexit{j}$ many times and therefore we must have $\tau < \sum_{i \in \agmgm{(1)}} \nexit{i} \leq T$. 

Now that we established that $\tau <T$, the deviation strategy $\bprime$ is simple. Specifically, $\bprime$ pulls the same arms as $\bbold$ up to round $t-1$, then it pulls $j$ at round $t$ and then from rounds $t+1$ to $\tau+1$ it pulls the same arms $\bbold$ pulls from rounds $t$ to $\tau$ in the same order. Since $\probc(\mu_j,\maxgenmu{t}) - c_j \ge 0$ and pulling $j$ does not increase the maximum generative mean it follows that $u_T(\bprime) \ge u_T(\bbold)$

\begin{figure}
    \centering
    \includegraphics[width=1.0\linewidth]{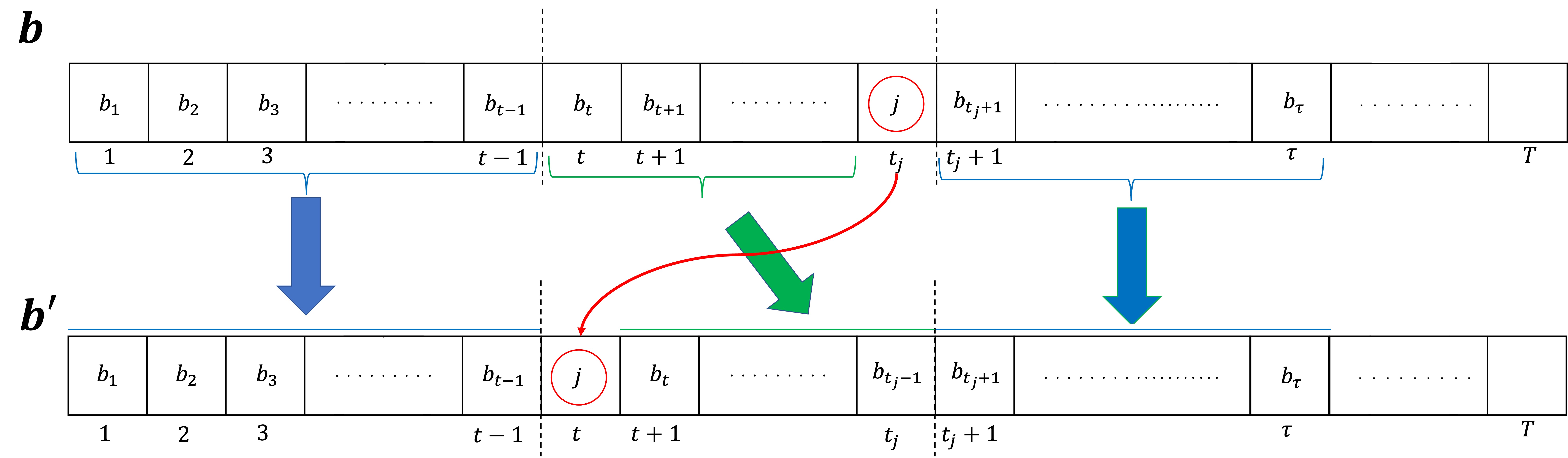}
    \caption{Illustration of the arm pulls under strategy $\bbold$ and the deviation strategy $\bprime$ for case (1) of Lemma \ref{lemma:sync}.}
    \label{fig:tape_2}
\end{figure}
\end{proof}
By the invoking the above claim recursively we can obtain a deterministic optimal strategy which at any round $t$ would pull an arm from $\hold(t)$ if $|\hold(t)|\ge 1$.
\end{proof}

We call a deterministic optimal strategy that follows Lemma \ref{lemma:never_pull_neg} and Lemma \ref{lemma:sync} a \emph{synchronizing strategy}. Such a strategy would always pull from $\agmgm(t)$ in each round $t$ and stop playing if $\agmgm(t)$ is empty, further at any round $t$ it would pull an arm from $\hold(t)$ if it is not empty. In fact, the strategy always alternates between two kinds of pulls: \emph{gain pulls} where the strategy pulls from $\agmgm(t) \setminus \hold(t)$ causing the value of the maximum generative mean to increase and \emph{synchronizing pulls} where the strategy pulls from $\hold(t)$ and the value of the maximum generative mean does not increase. Note that the order of pulls when synchronizing does not change the value of utility gained since the maximum generative mean does not increase. Further, we call a sequence of rounds where a synchronizing strategy does a gain pull followed possibly by synchronizing pulls an \emph{epoch}.  




We will now construct the \emph{reduced} state transition graph which is of polynomial size. Before doing that for each arm $i \in [k]$ we introduce the following function $p^{-1}_i$ defined as follows: \
\begin{align}\label{eq:def_p_inv}
p^{-1}_i(\Tilde{\mu},n_i) =
\begin{cases} 
          0 & \text{if } \Tilde{\mu} \ge \muexit{i} \\
          0 & \text{if } \genmu{i}(p=n_i) > \Tilde{\mu} \\              
          m = \argmax\limits_{r \in \mathbb{Z}_{\ge 0}}  \genmu{i}(p=n_i + r) \leq \Tilde{\mu}  & \text{o.w.}
\end{cases}
\end{align}
Essentially given some value for a generative mean $\Tilde{\mu}$ and the current number of pulls arm $i$ has received $p^{-1}_i(\Tilde{\mu},n_i)$ gives the maximum number of additional pulls $m$ that arm $i$ can receive such that the new value of $i$'s  generative mean does not exceed $\Tilde{\mu}$, i.e., $\genmu{i}(p=n_i+m) \leq \Tilde{\mu}$. On the other hand, if $\genmu{i}(p=n_i) \ge \Tilde{\mu}$ ($i$ already has a higher generative mean) or $\Tilde{\mu} \ge \muexit{i}$ (the value of the generative mean exceeds $i$'s the exit mean) then $p^{-1}_i(\Tilde{\mu},n_i)=0$. 

\paragraph{Reduced Graph Construction:} Algorithm block \ref{alg:form_reduced_dag} gives a formal description for our graph construction algorithm. The graph is constructed so that its nodes and connections encode the gain and synchronizing pulls (epochs) of a synchronizing strategy. Similar to the exponential graph each node has attributes denoting the number of pulls for each arm. The \emph{first node} however, has $(n^1_1=p^{-1}_1(\maxgenmu{1},0),n^1_2=p^{-1}_2(\maxgenmu{1},0),\dots,n^1_k=p^{-1}_k(\maxgenmu{1},0))$ instead of zero pulls. This is done since a synchronizing strategy would always start by pulling an arm $i \in \arms$ a number of times equal $p^{-1}_i(\maxgenmu{1},0)$ to synchronize as will be shown in Claim \ref{cl:firstsync}. Further, we will associate a node $(i,r)$ for arm $i \in [k]$ receiving $r$ many pulls, we call such nodes \emph{gain nodes}. The values for the number of pulls for a gain node $(i,r)$ are set to $\ngain{i}{r}_i=r$ and $\forall j \in [k], j\neq i:\ngain{j}{r}_i=p^{-1}_j(\genmu{i}(p=r),0)$. We only have an outgoing edge from the first node to a node $(i,r)$ if $\muexit{i}> \maxgenmu{1}$ and $\ngain{i}{r}_i=n^1_i+1$. This implies that pulling $i$ is a gain pull, further the edge weight is set not only to equal the utility from pulling $i$ but also the possible following sequence of synchronizing pulls as done in Eq \eqref{eq:util_first}. It will be shown in Claim \ref{cl:bc_epoch} that this encodes the first epoch of a synchronizing strategy. Similarly, we only have an outgoing edge from a node $(i,r)$ to another $(j,r')$ if $\muexit{j} > \genmu{i}(p=r)$ and the number of pulls for $j$ are higher than they are at $(i,r)$ by $1$ (i.e., $\ngain{j}{r'}_j=\ngain{i}{r}_j+1$). This would encode another epoch. Therefore, the edge weight value is again set carefully to encode not only the utility from the gain pull (pulling arm $j$) but also the utility gained by synchronizing as done in Eq \eqref{eq:util_gen}. It will be proved in Claim \ref{cl:inductive_epoch} that this encodes another epoch in a synchronizing strategy. Figure \ref{fig:poly_dag} shows an example of the reduced graph. 

\begin{figure}
    \centering
    \includegraphics[width=.7\linewidth]{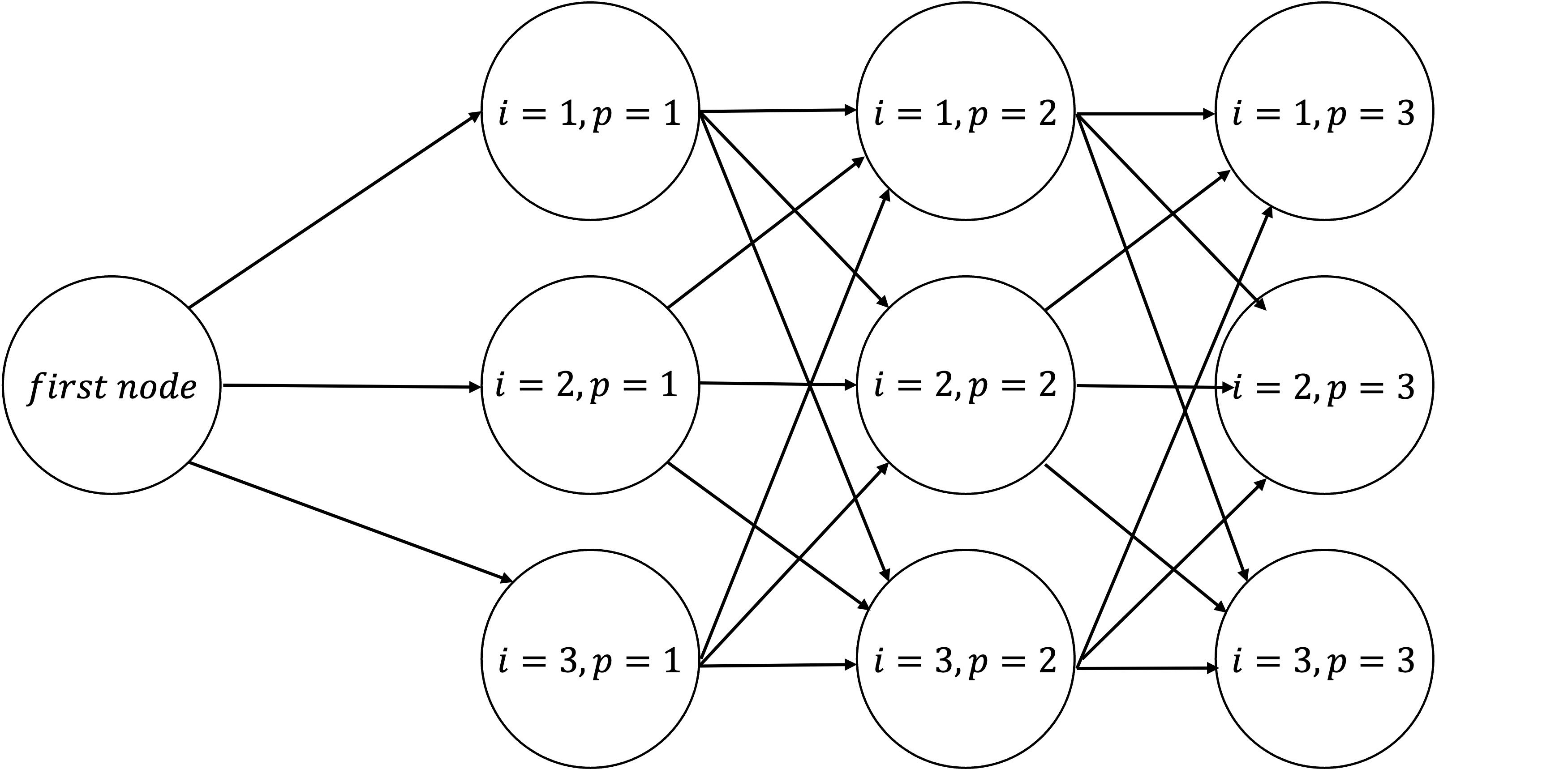}
    \caption{An example of the reduced DAG for $k=3$ and $T=3$. Note that is only an illustrative example. Some edges may not exist, further the connection to other vertices may also be different. But in terms of size the graph would not be larger. Compare this graph to the exponential graph of Figure \ref{fig:dag_exp_app_new}.}
    \label{fig:poly_dag}
\end{figure}


\begin{algorithm}[h!] 
    \caption{\textsc{FormRducedDAG}}\label{alg:form_reduced_dag}
    \begin{algorithmic}[1]
    \STATE \textbf{Input}: $\{\mu_1, \mu_2, \dots, \mu_k\}, \{g_1(.),\dots,g_k(.)\}, \{c_1,\dots, c_k\},\probc$.    
    \STATE \textbf{Output}: Form Reduced Directed Acyclic Graph  
    \item[] 
    \STATE Draw the \emph{first node} $(n^1_1=p^{-1}_1(\maxgenmu{1},0),n^1_2=p^{-1}_2(\maxgenmu{1},0),\dots,n^1_k=p^{-1}_k(\maxgenmu{1},0))$.
    \STATE Draw a gain node $(i,r)$ for each $i \in [k]$ and $r \in [T]$. 
    \STATE For each gain node $(i,r)$ set $\ngain{i}{r}_i=r$ and $\forall j \in [k], j\neq i:\ngain{j}{r}_i=p^{-1}_j(\genmu{i}(p=r),0)$. 
    \STATE Draw an edge from the first node to every node $(i,r)$ only if $\muexit{i}> \maxgenmu{1}$ and $\ngain{i}{r}_i=n^1_i+1$. Further, set the edge weight to 
    \begin{align}
        \probc(\mu_i,\maxgenmu{1})-c_i + \sum_{\ell \neq i} [\probc(\mu_\ell,\maxgenmu{1})-c_\ell] \cdot p^{-1}_\ell(\genmu{i}(p=r),n^1_\ell) \label{eq:util_first} 
    \end{align}
    \STATE Draw an edge from every node $(i,r)$ to $(j,r')$ only if $\muexit{j}> \genmu{i}(p=r)$ and $\ngain{j}{r'}_j=\ngain{i}{r}_j+1$. Further, set the edge weight to 
    \begin{align}
        \probc(\mu_j,\genmu{i}(p=r))-c_j + \sum_{\ell \neq j} [\probc(\mu_\ell,\genmu{j}(p=r'))-c_\ell]\cdot p^{-1}_\ell(\genmu{j}(p=r'),\ngain{i} {r}_\ell) \label{eq:util_gen} 
    \end{align}
    \STATE \textbf{return} Graph 
    \end{algorithmic}
\end{algorithm}

For completeness we give the following Fact about our reduced graph.
\begin{fact}\label{fact:reduced_graph_properties}
The reduced graph has $O(Tk)$ nodes,  $O(Tk^2)$ edges, is acyclic, and can be constructed using Algorithm \ref{alg:form_reduced_dag} in $O(T k^3)$ time. 
\end{fact}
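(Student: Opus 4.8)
The plan is to verify the four assertions by direct inspection of Algorithm~\ref{alg:form_reduced_dag}. \emph{Node count:} the graph has one first node plus one gain node $(i,r)$ for every pair $(i,r)\in\arms\times[T]$, hence $1+kT=O(Tk)$ nodes. \emph{Edge count:} the key observation is that each edge rule pins down the target uniquely once the source and the target's arm are fixed. An edge from the first node to $(i,r)$ requires $\ngain{i}{r}_i=n^1_i+1$, and since $\ngain{i}{r}_i=r$ by construction this forces $r=n^1_i+1$; so the first node has at most $k$ outgoing edges (one candidate per arm, fewer after the guard $\muexit{i}>\maxgenmu{1}$ is applied). Likewise an edge $(i,r)\to(j,r')$ requires $\ngain{j}{r'}_j=\ngain{i}{r}_j+1$, i.e.\ $r'=\ngain{i}{r}_j+1$, so each of the $O(Tk)$ gain nodes has at most $k$ outgoing edges. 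Summing gives $|E|\le k+kT\cdot k=O(Tk^2)$.

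\emph{Acyclicity:} I would exhibit a potential that strictly increases along every edge. Attach to the first node the value $\maxgenmu{1}$ and to a gain node $(i,r)$ the value $\genmu{i}(p=r)$ --- call it the \emph{level} --- and break ties with the pull count of the arm realizing that level ($r$ for the node $(i,r)$, and $0$ for the first node). Let $\lambda$ denote the source's level and consider an edge into a gain node $(j,r')$. If the source's arm differs from $j$ (in particular whenever the source is the first node), then $r'=p^{-1}_j(\lambda,0)+1$, and a short check over the three cases in the definition of $p^{-1}_j$ --- using the edge guard $\muexit{j}>\lambda$ to discard the first case --- shows $\genmu{j}(p=r')>\lambda$, so the level strictly increases. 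Otherwise the source is $(j,r)$ itself, $r'=r+1$, the level does not decrease, and the tie-breaking coordinate jumps from $r$ to $r+1$. Either way the pair (level, tie-breaker) strictly increases lexicographically, which forbids directed cycles; hence the graph is a DAG.

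\emph{Construction time:} A one-time preprocessing pass tabulates, for each arm $i$, the monotone sequence $\bigl(\genmu{i}(p=0),\dots,\genmu{i}(p=T)\bigr)$ in $O(Tk)$ time; a single merge over these sequences then evaluates $p^{-1}_\ell(\cdot,0)$ at all $O(Tk)$ arguments that ever arise (the value $\maxgenmu{1}$ and every $\genmu{i}(p=r)$) in $O(Tk^2)$ total, after which any value $p^{-1}_\ell(\cdot,n)=\max\{0,\,p^{-1}_\ell(\cdot,0)-n\}$ is obtained in $O(1)$. Then creating the $O(Tk)$ nodes with their length-$k$ attribute vectors costs $O(Tk^2)$, and enumerating the $\le k$ outgoing edges of each node while evaluating each edge weight via the $(k-1)$-term sums in~\eqref{eq:util_first}/\eqref{eq:util_gen} costs $O(Tk\cdot k\cdot k)=O(Tk^3)$, which dominates. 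I expect acyclicity to be the main obstacle: it is the only step that genuinely uses the internal structure of $p^{-1}_i$, and one must treat carefully the degenerate branches of $p^{-1}_i$ (those returning $0$ and, if $g_i$ is allowed to plateau, the constant stretches of $\genmu{i}$) to be sure the claimed strict lexicographic increase holds along every edge the algorithm draws.
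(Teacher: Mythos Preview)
Your counts for nodes, edges, and construction time follow the paper's reasoning closely (the paper is terser, but makes the same key observation that the out-degree of every node is at most $k$ and that each edge weight in \eqref{eq:util_first}--\eqref{eq:util_gen} costs $O(k)$ to evaluate). The one substantive difference is acyclicity. The paper argues directly from the pull-count attribute vector: along any edge $(i,r)\to(j,r')$ one has $\ngain{j}{r'}_j=\ngain{i}{r}_j+1$ and $\ngain{j}{r'}_\ell\ge\ngain{i}{r}_\ell$ for every $\ell\ne j$, so this vector (or its coordinate sum) is a strictly increasing potential and no path can return. You instead track the scalar ``level'' $\genmu{i}(p=r)$ with the pull count $r$ as a tie-breaker and show this pair is strictly increasing in lexicographic order along every edge. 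The paper's route is essentially a one-liner and avoids any case split on $p^{-1}_j$; your route requires the case analysis you sketch, but in exchange it is insensitive to the first branch of \eqref{eq:def_p_inv} (the reset to $0$ when $\Tilde\mu\ge\muexit{\ell}$), which is the one place the coordinate-wise monotonicity of the pull-count vector needs care.
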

\begin{proof}
We have the first node and $T k$ many gain nodes. Therefore, $O(Tk)$ many nodes. 

Second, since each node has at most $k$ many outgoing edges, it follows that we have $O(Tk^2)$ edges. 

To see that the graph is acyclic note that there is no incoming node into the first node. Further, if there is an edge from $(i,r)$ to $(j,r')$ then it follows that $\ngain{j}{r'}_j = \ngain{i}{r}_j + 1$ and $\ngain{j}{r'}_\ell \ge \ngain{i}{r}_\ell \ \forall \ell \neq j$. Therefore, a path cannot be formed from $(j,r')$ to $(i,r)$. 

Finally, since the edge weight is set according to \eqref{eq:util_first} and \eqref{eq:util_gen} it can be computed in at most $O(k)$ time per edge. Given that we have $O(T k^2)$ edges, the construction time is $O(T k^3)$. 
\end{proof}



We now reach the following theorem.
\begin{restatable}{theorem}{expreduced}\label{th:exp_reduced_same_opt}
Solving the longest path problem starting from the first node in the reduced graph gives an optimal synchronizing strategy. 
\end{restatable}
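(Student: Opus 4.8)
The plan is to establish a weight-preserving correspondence between the directed paths leaving the first node of the reduced DAG and the synchronizing strategies, and then to invoke Lemmas~\ref{lemma:never_pull_neg} and~\ref{lemma:sync}. The structural heart of the argument is the claim that the state reached by any synchronizing strategy at the end of an \emph{epoch} is completely determined by the pair $(i,r)$, where $i$ is the arm that performed this epoch's gain pull and $r$ is $i$'s resulting pull count --- which is exactly the data stored in the gain node $(i,r)$.

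First I would prove, by induction on epochs, that this state is independent of the sequence of gain pulls used to reach it. Two facts do the work: $\maxgenmu{t}$ is non-decreasing along any play, and each $g_i$ is decreasing, so for any threshold value $v$ the largest pull count keeping $\genmu{i}$ at or below $v$ is the purely numerical quantity $p^{-1}_i(v,0)$, regardless of how arm $i$'s pulls were interleaved in time. Since a synchronizing strategy performs every available synchronizing pull before its next gain pull, immediately after the gain pull that brings arm $i$ to $r$ pulls and the ensuing synchronization the maximum generative mean equals $\genmu{i}(p=r)$ and every arm $\ell$ still in $\agmgm$ has exactly $p^{-1}_\ell(\genmu{i}(p=r),0)=\ngain{i}{r}_\ell$ pulls (arms with $\muexit{\ell}\le\genmu{i}(p=r)$ have left $\agmgm$ and never matter again). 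The base case is the initial synchronization, which brings each arm $\ell$ to $p^{-1}_\ell(\maxgenmu{1},0)=n^1_\ell$ pulls --- precisely the first node's attributes.

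Next I would verify the transitions. From the post-epoch state encoded by $(i,r)$, the moves of a synchronizing strategy are: pick an arm $j$ still in $\agmgm$, i.e.\ with $\muexit{j}>\genmu{i}(p=r)$; perform its gain pull, which necessarily brings it to $\ngain{i}{r}_j+1$ pulls because this is the least count at which $\genmu{j}$ strictly exceeds the current maximum; then synchronize, adding $p^{-1}_\ell(\genmu{j}(p=\ngain{i}{r}_j+1),\ngain{i}{r}_\ell)$ further pulls to every remaining arm $\ell$. This is exactly the edge $(i,r)\to(j,r')$ with $r'=\ngain{i}{r}_j+1$, and by the previous step its head carries the correct attributes. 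The remaining, and most tedious, point is to check that the edge weight equals the utility collected during this epoch: the gain pull contributes $\probc(\mu_j,\genmu{i}(p=r))-c_j$, valued at the \emph{old} maximum because a pull only affects the GenAI's generative mean from the following round onward, whereas each of the $p^{-1}_\ell(\cdots)$ synchronizing pulls of arm $\ell$ contributes its per-round utility at the \emph{new} maximum; this reproduces \eqref{eq:util_gen}, and the analogous bookkeeping for moves out of the first node (including the utility of the initial synchronization) reproduces \eqref{eq:util_first}.

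Finally I would assemble the pieces. Every synchronizing strategy decomposes uniquely as an initial synchronization followed by a sequence of epochs, inducing a path from the first node whose total weight is the strategy's utility; conversely every such path yields a synchronizing strategy, and the horizon is never a binding constraint --- by Lemma~\ref{lemma:never_pull_neg} no arm is pulled more than $\nexit{i}$ times and the long-horizon assumption gives $T\ge\sum_{i\in\agmgm(1)}\nexit{i}$, which is exactly why the reduced graph carries no round counter. A path has no admissible continuation precisely when $\genmu{i}(p=r)\ge\muexit{j}$ for all arms $j$, i.e.\ when $\agmgm$ has become empty, matching the strategy's exit. Hence the maximum utility over synchronizing strategies equals the weight of a longest path, and a maximizing path recovers an optimal synchronizing strategy; by Lemmas~\ref{lemma:never_pull_neg} and~\ref{lemma:sync} such a strategy is optimal without restriction, and together with Fact~\ref{fact:reduced_graph_properties} and Lemma~\ref{th:dag_solvable_in_poly} this also yields the $O(Tk^3)$ running time claimed in Theorem~\ref{th:long_horizon_max_utility}. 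I expect the main obstacle to be the first step --- showing that the node attributes are path-independent --- together with the edge-weight check, since both require using the monotonicity of $\maxgenmu{t}$ and the decreasing shrinkage functions to pin down simultaneously the pull counts and the correct generative mean against which each gain or synchronizing pull is scored.
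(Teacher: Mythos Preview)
Your proposal is correct and follows essentially the same route as the paper: establish, by induction on epochs, a correspondence between paths from the first node and synchronizing strategies, verifying that each edge encodes one epoch and that the edge weight equals the utility collected in that epoch; then invoke Lemmas~\ref{lemma:never_pull_neg} and~\ref{lemma:sync}. Your emphasis on the path-independence of the post-epoch state (that the attributes of a gain node $(i,r)$ are determined purely by $(i,r)$ via $p^{-1}_\ell(\genmu{i}(p=r),0)$, irrespective of the sequence of prior gain pulls) is exactly the content of the paper's inductive hypothesis in Claim~\ref{cl:inductive_epoch}, and your remark that the long-horizon assumption is what lets the reduced graph dispense with a round counter is a point the paper leaves implicit.

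One small bookkeeping slip: the utility of the \emph{initial} synchronization phase (the pulls that bring each arm $\ell$ to $n^1_\ell$ before the first gain pull) is \emph{not} part of \eqref{eq:util_first}; that formula covers only the first gain pull and the synchronization \emph{after} it. The paper's Claim~\ref{cl:bc_epoch} explicitly starts the accounting at round $t_1$, the first round after the initial synchronization ends. This does not damage the argument, since the initial-synchronization utility $\sum_{\ell}[\probc(\mu_\ell,\maxgenmu{1})-c_\ell]\cdot n^1_\ell$ is the same constant for every synchronizing strategy and therefore does not affect which path is longest; you should simply say the correspondence is weight-preserving up to this fixed additive constant rather than exactly.
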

\begin{proof}


We start with the following claim
\begin{claim}\label{cl:firstsync}
The first node $(n^1_1=p^{-1}_1(\maxgenmu{1},0),n^1_2=p^{-1}_2(\maxgenmu{1},0),\dots,n^1_k=p^{-1}_k(\maxgenmu{1},0))$ in the reduced graph encodes the same number of pulls in the first synchronizing sequence of pulls in any synchronizing strategy. 
\end{claim}
\begin{proof}
It is clear from Lemma \ref{lemma:sync} that each arm in $\hold(1)$ should be pulled a number of times until its maximum generative mean is either equal to $\maxgenmu{1}$ or one pull less than it. If $i \in \hold(1)$ then $p^{-1}_i(\maxgenmu{1},0)$ equals that number of pulls by definition of $p^{-1}_i$, see Eq \eqref{eq:def_p_inv}. 

If $i \notin \hold(1)$ then $i$ will not pulled. Since $\maxgenmu{1}=\max\limits_{j \in \arms} \genmu{j}(p=0)$, it follows that there does not exists an arm $i$ such that $\genmu{i}(p=0) > \maxgenmu{1}$. Therefore, if $i \notin \hold(1)$ then either $\muexit{i} \leq \maxgenmu{1}$ leading to $p^{-1}_i(\maxgenmu{1},0)=0$ or $\genmu{i}(p=1) > \maxgenmu{1}$ also leading to $p^{-1}_i(\maxgenmu{1},0)=0$. The values of $0$ again follows from the definition of $p^{-1}_i$ in Eq \eqref{eq:def_p_inv}.
\end{proof}

Now we can prove the following lemma
\begin{lemma}\label{lemma:sync_to_path}
Any path starting from the first node $(n^1_1=p^{-1}_1(\maxgenmu{1},0),\dots,n^1_k=p^{-1}_k(\maxgenmu{1},0))$ corresponds to a synchronizing strategy. Further, any synchronizing strategy can be represented as a path starting from the first node $(n^1_1=p^{-1}_1(\maxgenmu{1},0),\dots,n^1_k=p^{-1}_k(\maxgenmu{1},0))$.  
\end{lemma}
\begin{proof}
We start by proving the first part of the statement. Denote a path starting from the first node by $\big(v_1,(i_1,r_1),(i_2,r_2),\dots,(i_q,r_q)\big)$. Clearly, $v_1$ denotes the first node. By Claim \ref{cl:firstsync} the first node encodes the first synchronizing set of pulls. We follow a prove by induction and therefore prove the following claim for the base case. 
\begin{claim}\label{cl:bc_epoch}
The edge $\big(v_1,(i_1,r_1)\big)$ encode the pulls of the first epoch in a synchronizing strategy. Further, the weight of the edge $w\big(v_1,(i_1,r_1)\big)$ equals the utility gained and $\forall \ell \in [k]$ we have $\ngain{i_1}{r_1}_\ell$ equals the number of pulls at the end of the first epoch unless $\genmu{i_1}(p=r_1) > \muexit{\ell}$. 
\end{claim}
\begin{proof}
Let $t_1$ be the first round after the end of the synchronizing set of pulls. At the start of $t_1$ each arm has a number of pulls equal to $p^{-1}_i(\maxgenmu{1},0)=n^1_i$. If there is an edge $(v_1,(i_1,r_1))$ then it must be that $\muexit{i_1}> \maxgenmu{1} = \maxgenmu{t_1}$ by construction of the graph. Further, $\genmu{i_1}(p=r_1) > \maxgenmu{1}$ also by construction of the graph. Therefore, $i_1 \in \agmgm{(t_1)} \setminus \hold{(t_1)}$ and pulling it is a gain pull (increasing the maximum generative mean). The utility obtained from pulling arm $i_1$ at round $t_1$ followed by synchronizing would equal
\begin{align}
    \underbrace{\probc(\mu_{i_1},\maxgenmu{1})-c_{i_1}}_{\text{utility for gain pull $i$ which increasing max generative mean}} + \underbrace{\sum_{\ell \neq i_1} [\probc(\mu_\ell,\genmu{{i_1}}(p=r_1))-c_\ell] \cdot p^{-1}_\ell(\genmu{{i_1}}(p=r_1),n^1_\ell)}_{\text{utility for synchronizing at the new value of the max generative mean}}
\end{align}
Note that this is the same value the weight of edge $w\big(v_1,(i_1,r_1)\big)$ is set to by Eq \eqref{eq:util_first}. Further, the pulls of the strategy in the epoch would be $i_1$ followed by the synchronizing pulls according to $p^{-1}_\ell(\genmu{i_1}(p=r_1),n^1_\ell)$ for $\ell \neq i_1$. 

For the last part, if $\muexit{\ell} \leq \genmu{i_1}(p=r_1)$ then $\ell$ would receive a number of pulls equal to $p^{-1}_\ell(\genmu{i_1}(p=r_1),n^1_\ell) + n^1_\ell = p^{-1}_\ell(\genmu{i_1}(p=r_1),0) = \ngain{i_1}{r_1}_\ell$.  
\end{proof}

We now do the inductive step 
\begin{claim}\label{cl:inductive_epoch}
If node $(i_{\kappa},r_{\kappa})$ encodes the number of pulls for each arm $\ell \in \arms$ such that $\muexit{\ell} < \genmu{i_\kappa}(p=r_\kappa)$ at the end of an epoch then $\big((i_{\kappa},r_{\kappa}),(i_{\kappa+1},r_{\kappa+1})\big)$ encodes the pulls of a subsequent epoch. Further, the weight of the edge $w\big((i_{\kappa},r_{\kappa}),(i_{\kappa+1},r_{\kappa+1})\big)$ equals the utility gained and $\forall \ell \in [k]$ we have that $\ngain{i_{\kappa+1}}{r_{\kappa+1}}_\ell$ equals the number of pulls at the end of subsequent epoch unless $\genmu{i_{\kappa+1}}(p=r_{\kappa+1}) > \muexit{\ell}$.
\end{claim}

\begin{proof}
For easier notation we set $\big((i_{\kappa},r_{\kappa}),(i_{\kappa+1},r_{\kappa+1})\big)=\big((i,r),(j,r')\big)$. Since edge $\big((i,r),(j,r')\big)$ exists it follows that $\genmu{i}(p=r) < \muexit{j}$. Further, $\ngain{j}{r'}_{j} = \ngain{i}{r}_{j} +1$ it follows that pulling arm $j$ would increase the value of the generative mean. The utility gained from pulling arm $j$ followed by synchronizing would be 
\begin{align}
    \underbrace{\probc(\mu_j,\genmu{i}(p=r))-c_j}_{\text{utility for gain pull $j$ which increasing max generative mean}} \ + \ \underbrace{\sum_{\ell \neq j} [\probc(\mu_\ell,\genmu{j}(p=r'))-c_\ell]\cdot p^{-1}_\ell(\genmu{j}(p=r'),\ngain{i}{r}_\ell)}_{\text{utility gained by synchronizing at the new value of the max generative mean}}
\end{align}

which is equal to the edge weight $w\big((i,r),(j,r')\big)$. Further, pulls are equal to $j$ followed by the synchronizing pulls according to  $p^{-1}_\ell(\genmu{j}(p=r'),\ngain{i}{r}_\ell), \ \forall \ell \neq j$. 

Finally, if $\muexit{\ell} \leq \genmu{j}(p=r')$ then $\ell$ would receive a number of pulls equal to $p^{-1}_\ell(\genmu{j}(p=r'),\ngain{i}{r}_\ell)+  \ngain{i}{r}_\ell = p^{-1}_\ell(\genmu{j}(p=r'),0) = \ngain{j}{r'}_\ell$.  
\end{proof}
Based on Claims \ref{cl:bc_epoch} and \ref{cl:bc_epoch} it follows that every path starting from the first node in the reduced graph corresponds to a synchronizing strategy. 

By similar arguments it is straightforward to show that every synchronizing strategy corresponds to a path starting from the first node in the reduced graph. 
\end{proof}
Based on Lemma \ref{lemma:sync_to_path} since every path starting from the first node encodes a synchronizing strategy by finding the longest path we would always obtain a synchronizing strategy. Further, since for every synchronizing strategy there exists a path starting from the first node we would obtain an optimal synchronizing strategy by solving the longest path problem with the source being the first node.    
\end{proof}

Based on Fact \ref{fact:reduced_graph_properties} and Theorems \ref{th:dag_solvable_in_poly} and \ref{th:exp_reduced_same_opt} the proof of Theorem \ref{th:long_horizon_max_utility} follows.

\section{Simple Baselines Are Computationally Intractable or Not Optimal for Time-Insensitive Domains (Undiscounted Long Horizon Setting)}\label{app:long_horizon_greedy_subopt}

\paragraph{Brute-Force Methods:}
It is not difficult to see that straightforward brute force algorithms would run in exponential time. A simple brute force algorithm would be to run dynamic programming with the states, transitions, and rewards being fully known. The issue is that the number of states would be $\Omega(T^k)$. Another brute-force method based on solving the longest path problem in the state transition directed acyclic graph would run in $\Omega(k^T)$. Therefore, both methods run in time exponential in the input parameters $T$ and $k$.

\paragraph{Greedy Approach:}
By a greedy algorithm we mean one which would always pick an arm that gives the smallest increase to the value of the maximum generative mean $\maxgenmu{t}$ in a given round $t$. Similar to Appendix \ref{app:long_horizon_proofs} we use the notation $\genmu{i}(p=r)=\genmu{i}(t)$ when $N_i(t-1)=r$, i.e., the value of the generative mean of arm $i$ if it has received $r$ many pulls. It is sufficient to consider the following simple example with two arms where $\maxgenmu{1}=\genmu{1}(p=0)=\genmu{2}(p=0)$ and $\genmu{1}(p=1) = \muexit{1}=\muexit{2}$, see Figure \ref{fig:greedy_does_not_work}. Clearly, any optimal strategy can pull arm $1$ once. We set $\mu_1 > \mu_2$. Since arm $2$ causes a smaller increase in the value of the generative mean as shown in the figure, for a greedy strategy the sequence of pulls would be $2,2,1$ leading to a utility of
\begin{align}
    \probc(\mu_2,\maxgenmu{1}) - c = \probc(\mu_2,\genmu{2}(p=1)) - c = \probc(\mu_1,\genmu{2}(p=2)) - c
\end{align}
Set $c_1=c_2=c=0.6$ and choose a function $\probc$ so that $\probc(\mu_2,\maxgenmu{1})=\probc(\mu_2,\genmu{2}(p=1))=\probc(\mu_1,\genmu{2}(p=2))=0.7-\frac{\epsilon}{3}$. The greedy strategy has a utility of $0.3-\epsilon$. On the other hand, let the $\probc(\mu_1,\maxgenmu{1})=0.9$ then pulling arm $1$ once would result in a utility of $0.3 > 0.3-\epsilon$. This shows that the greedy algorithm is not optimal. Note that we can devise a similar example that extends over a larger number of rounds/pulls.
\begin{figure}[h!]
    \centering
    \includegraphics[width=1.0\linewidth]{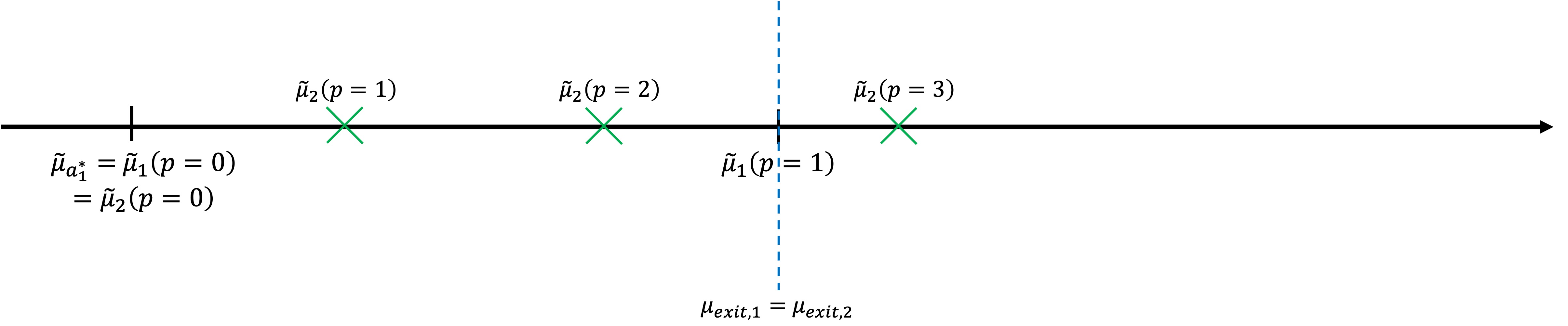}
    \caption{Figure shows how the generative means of arms $1$ and $2$ grow with more pulls.}
    \label{fig:greedy_does_not_work}
\end{figure}

\section{Additional Experimental Results and Details}\label{app:experiments} 
We run a few experiments for both the time-sensitive case and time-insensitive case, showing accumulated utility plots as well as additional data for the time-sensitive case that showcases the strength of enforced pausing. Additionally, we run experiments where the opposing GenAI is allowed to play in a suboptimal manner, showing empirically that our algorithms for both the time-sensitive and time-insensitive case perform well against a generalized opponent (this follows from Proposition \ref{theorem:utility_lower_bound}). For completeness, the full data for the experiments run in Section \ref{sec:experiments} are shown here. For these experiments, the link function $\sigma$ and shrinkage functions $g_i$ take on the same form as in Section \ref{sec:experiments}:
\[
\sigma(\mu_i, \mu_j) := \frac{e^{\mu_i}}{e^{\mu_i} + e^{\mu_j}}
\]
\[
g_i(N) := \frac{q_i}{\sqrt{N + (\frac{q_i}{h_i})^2}}
\]

\subsection{Time-Sensitive Domains}
For this setting, the horizon is again set to $T=1,000$ and the approximation constant for \textbf{Myopic-then-Pause} is set to $\epsilon=0.1$. We conduct a total of 4 experiments in this setting --- for completeness, Experiment $3$ repeats the parameters used for the time-sensitive experiment in Section \ref{sec:experiments}. As can be seen in Figures \ref{fig:time-sensitive-app-1}-\ref{fig:time-sensitive-app-4}, the periods of pausing enforced by the \textbf{Myopic-and-Pause} algorithm allow the generative means of each arm to routinely return close to their original levels, whereas the baseline strategies (in particular \textbf{BT-Pull}, \textbf{Cycle}, and \textbf{Greedy}) prioritize consistent pulling at the expense of long-term accumulated utility. This highlights the benefit of enforced pausing. Note that the timescales for the graphs of the generative means are set to $50$ rounds so that the dynamics of the generative means can be seen with greater clarity. The details of the instances are discussed below. Table \ref{table:ts_experiments} shows the final total utility $u_T$ gained by each algorithm. 

\begin{table}[h!]
    \centering
    \begin{tabular}{|c|c|c|c|c|c|}
    \hline
    & Myopic-and-Pause & Pure-Myopic & Greedy & Cycle & BT-Pull \\
    \hline
    Experiment 1 & 9.551 & 3.125 & 2.356 & 2.373 & 3.611\\
    \hline
    Experiment 2 & 5.051 & 0.823 & 1.088 & 1.074 & 1.077\\
    \hline
    Experiment 3 & 9.481 & 7.039 & 1.058 & 1.280 & 1.113\\
    \hline
    Experiment 4 & 6.858 & -2.890 & 4.013 & 4.011 & 2.940\\
    \hline
    \end{tabular}
    \caption{Final total utilities for each experiment in the time-sensitive scenario.}
    \label{table:ts_experiments}
\end{table}

\textbf{Experiment 1:} We have $3$ arms here and the parameters are set as follows: $\muvec = [.7, .85, .95]$, $\gammavec = [.5, .48, .44]$, $\cvec = [.53, .56, .6]$, $\qvec = [.1, .15, .5]$, $\hvec = [.4, .45, .6]$. Accumulated utility and generative mean plots are given in Figure \ref{fig:time-sensitive-app-1}.


\begin{figure}[h!]
    \centering
    \begin{subfigure}[b]{0.48\textwidth}
        \centering
        \includegraphics[width=\linewidth]{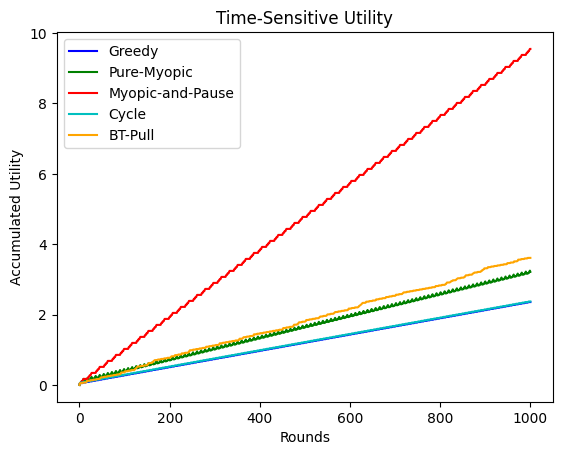}
        \caption{Accumulated utility over time}
    \end{subfigure}
    \hfill
    \begin{subfigure}[b]{0.48\textwidth}
        \centering
        \includegraphics[width=\linewidth]{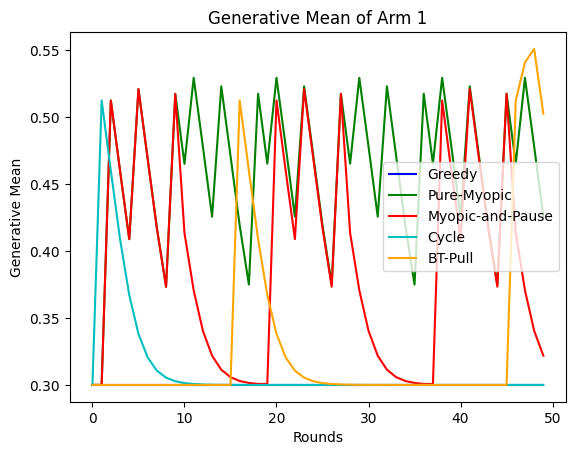}
        \caption{Generative mean of arm 1}
    \end{subfigure}
    \vspace{0.5cm}
    
    \begin{subfigure}[b]{0.48\textwidth}
        \centering
        \includegraphics[width=\linewidth]{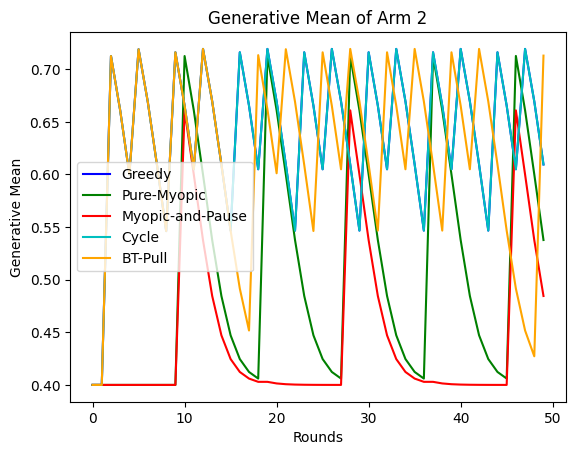}
        \caption{Generative mean of arm 2}
    \end{subfigure}
    \hfill
    \begin{subfigure}[b]{0.48\textwidth}
        \centering
        \includegraphics[width=\linewidth]{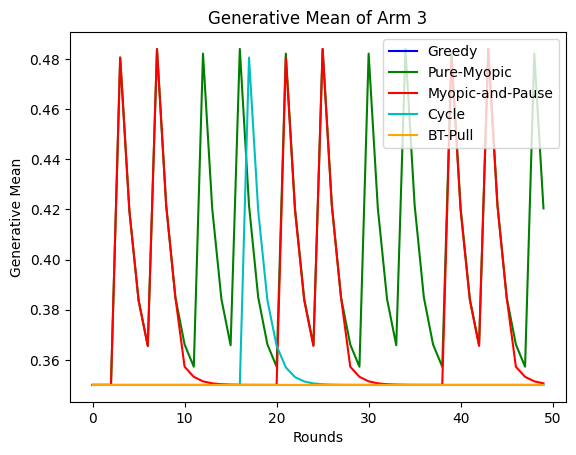}
        \caption{Generative mean of arm 3}
    \end{subfigure}
    \caption{Figure shows the accumulated utilities and generative means for \textbf{Experiment 1}. Plot (a) shows the accumulated utility over time for our algorithm and the relevant baselines. Plots (b), (c), and (d) show the dynamics of the generative means throughout the first 50 rounds.}
    \label{fig:time-sensitive-app-1}
\end{figure}



\textbf{Experiment 2:} We also have $3$ arms here and the parameters are set as follows: $\muvec = [.75, .8, .85]$, $\gammavec = [.5, .55, .6]$, $\cvec = [.55, .55, .55]$, $\qvec = [.15, .125, .1]$, $\hvec = [.4, .4, .4]$. Accumulated utility and generative mean plots are given in Figure \ref{fig:time-sensitive-app-2}.

\begin{figure}[h!]
    \centering
    \begin{subfigure}[b]{0.48\textwidth}
        \centering
        \includegraphics[width=\linewidth]{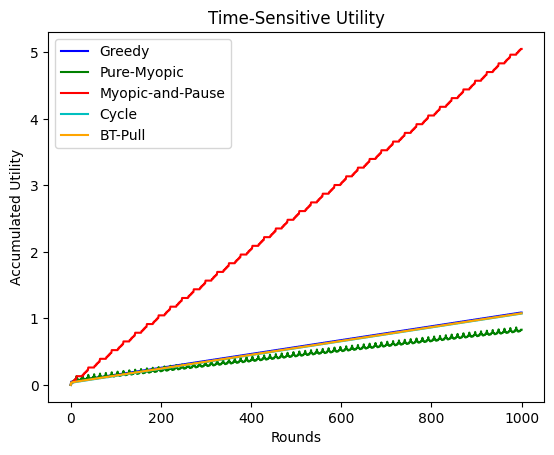}
        \caption{Accumulated utility over time}
    \end{subfigure}
    \hfill
    \begin{subfigure}[b]{0.48\textwidth}
        \centering
        \includegraphics[width=\linewidth]{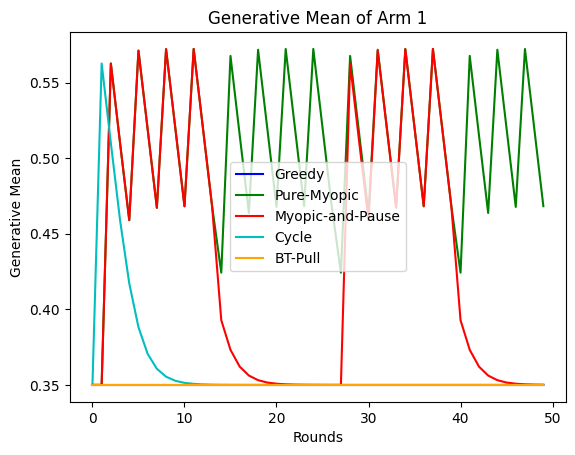}
        \caption{Generative mean of arm 1}
    \end{subfigure}
    \vspace{0.5cm}
    
    \begin{subfigure}[b]{0.48\textwidth}
        \centering
        \includegraphics[width=\linewidth]{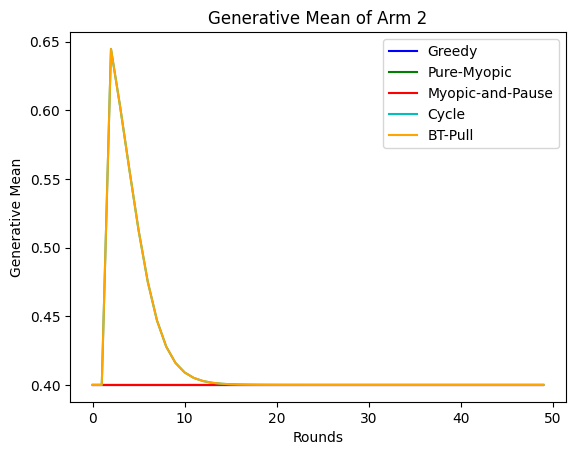}
        \caption{Generative mean of arm 2}
    \end{subfigure}
    \hfill
    \begin{subfigure}[b]{0.48\textwidth}
        \centering
        \includegraphics[width=\linewidth]{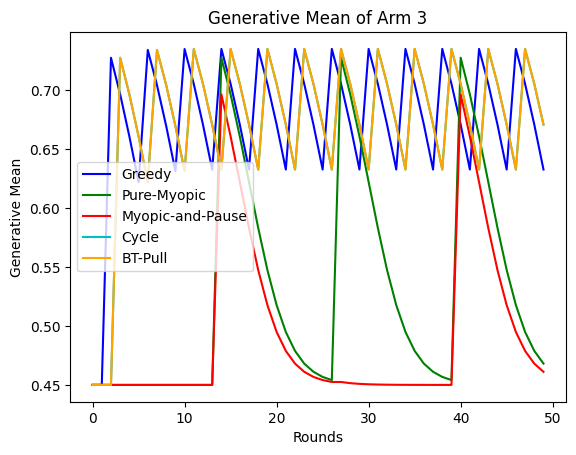}
        \caption{Generative mean of arm 3}
    \end{subfigure}
    \caption{Figure shows the accumulated utilities and generative means for \textbf{Experiment 2}. Plot (a) shows the accumulated utility over time for our algorithm and the relevant baselines. Plots (b), (c), and (d) show the dynamics of the generative means throughout the first 50 rounds.}
    \label{fig:time-sensitive-app-2}
\end{figure}


\textbf{Experiment 3:} We repeat the experiment done in Section \ref{sec:experiments} to showcase the generative means of each arm and record the accumulated utilities of each algorithm. Parameters are set as follows: $\muvec= [.73, .85, .9, .95],\gammavec = [.5, .48, .47, .45]$, $\cvec = [.53, .56, .59, .58]$, $\qvec = [.1, .2, .5, .2]$, and $\hvec= [.3, .45, .6, .6]$. Accumulated utility and generative mean plots are given in Figure \ref{fig:time-sensitive-app-3}.

\begin{figure}[h!]
    \centering
    \begin{subfigure}[b]{0.48\textwidth}
        \centering
        \includegraphics[width=\linewidth]{4topics_appendix_util_1.png}
        \caption{Accumulated utility over time}
    \end{subfigure}
    \hfill
    \begin{subfigure}[b]{0.48\textwidth}
        \centering
        \includegraphics[width=\linewidth]{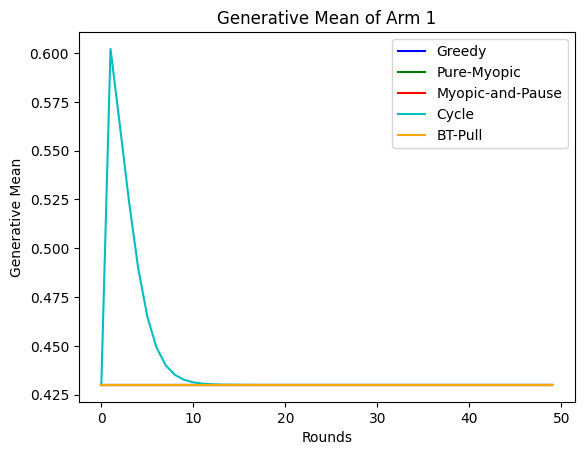}
        \caption{Generative mean of arm 1}
    \end{subfigure}
    \vspace{0.5cm}
    
    \begin{subfigure}[b]{0.48\textwidth}
        \centering
        \includegraphics[width=\linewidth]{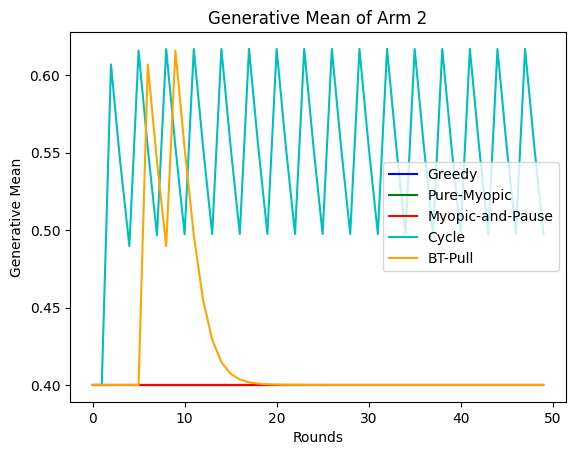}
        \caption{Generative mean of arm 2}
    \end{subfigure}
    \hfill
    \begin{subfigure}[b]{0.48\textwidth}
        \centering
        \includegraphics[width=\linewidth]{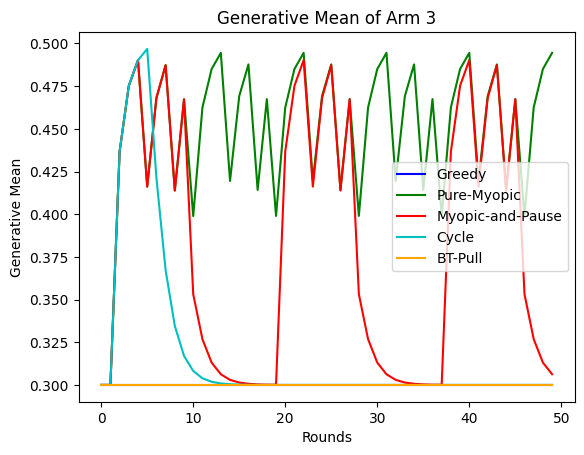}
        \caption{Generative mean of arm 3}
    \end{subfigure}
    \vspace{0.5cm}
    
    \begin{subfigure}[b]{0.48\textwidth}
        \centering
        \includegraphics[width=\linewidth]{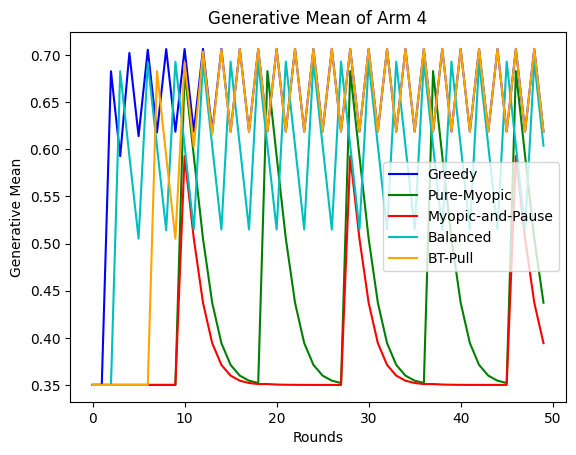}
        \caption{Generative mean of arm 4}
    \end{subfigure}
    
    \caption{Figure shows the accumulated utilities and generative means for \textbf{Experiment 3}. Plot (a) shows the accumulated utility over time for our algorithm and the relevant baselines. Plots (b), (c), (d), and (e) show the dynamics of the generative means throughout the first 50 rounds.}
    \label{fig:time-sensitive-app-3}
\end{figure}


\textbf{Experiment 4:} We have 4 arms and the parameters are set as follows: $\muvec = [.64, .95, .8, .88]$, $\gammavec = [.5, .47, .5, .4]$, $\cvec = [.53, .57, .54, .56]$, $\qvec = [.16, .15, .125, .1]$, $\hvec = [.4, .5, .4, .4]$. Accumulated utility and generative mean plots are given in Figure \ref{fig:time-sensitive-app-4}.

\begin{figure}[h!]
    \centering
    \begin{subfigure}[b]{0.48\textwidth}
        \centering
        \includegraphics[width=\linewidth]{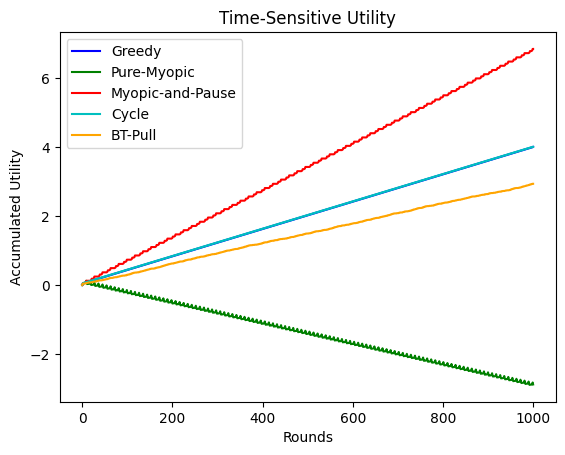}
        \caption{Accumulated utility over time}
    \end{subfigure}
    \hfill
    \begin{subfigure}[b]{0.48\textwidth}
        \centering
        \includegraphics[width=\linewidth]{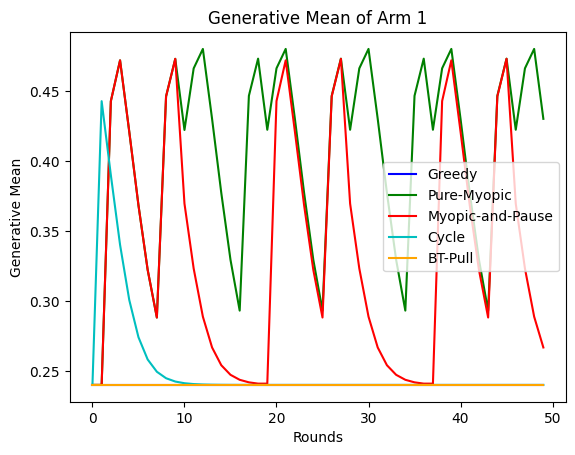}
        \caption{Generative mean of arm 1}
    \end{subfigure}
    \vspace{0.5cm}
    
    \begin{subfigure}[b]{0.48\textwidth}
        \centering
        \includegraphics[width=\linewidth]{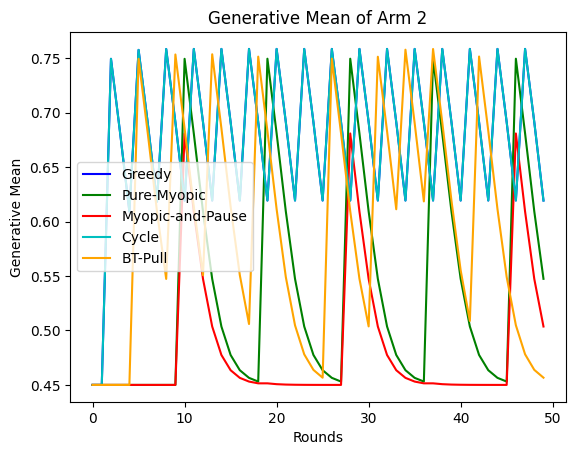}
        \caption{Generative mean of arm 2}
    \end{subfigure}
    \hfill
    \begin{subfigure}[b]{0.48\textwidth}
        \centering
        \includegraphics[width=\linewidth]{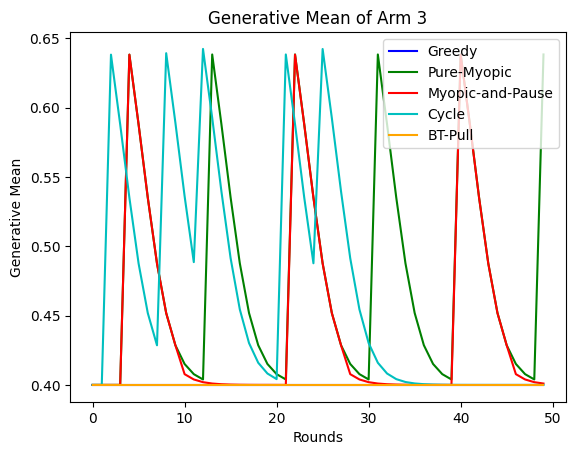}
        \caption{Generative mean of arm 3}
    \end{subfigure}
    \vspace{0.5cm}
    
    \begin{subfigure}[b]{0.48\textwidth}
        \centering
        \includegraphics[width=\linewidth]{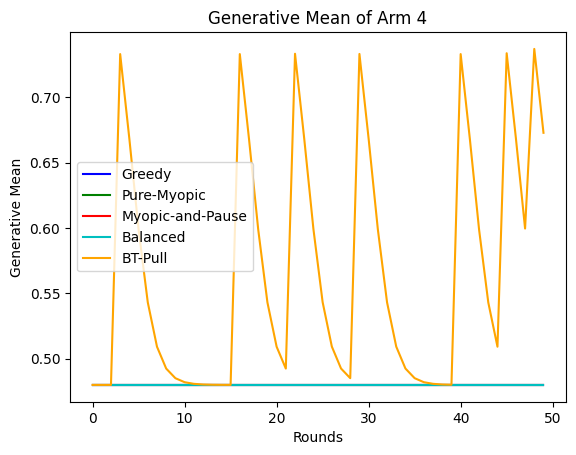}
        \caption{Generative mean of arm 4}
    \end{subfigure}
    
    \caption{Figure shows the accumulated utilities and generative means for \textbf{Experiment 4}. Plot (a) shows the accumulated utility over time for our algorithm and the relevant baselines. Plots (b), (c), (d), and (e) show the dynamics of the generative means throughout the first 50 rounds.}
    \label{fig:time-sensitive-app-4}
\end{figure}


\clearpage
\newpage 
\subsection{Time-Insensitive Domains}
For this setting, the length of the horizon is set to $T \ge \sum_{i \in \agmgm(1)} \nexit{i}$ (the long horizon value) as mentioned in Definition \ref{def:long horizon}. The \textbf{Greedy-Delay} strategy consistently performs well in these experiments, accumulating utility within a fraction of a percent of the optimal for longer games. However, it is worth noting the value of the \textbf{OPT} algorithm since (1) it allows us to show empirically that \textbf{Greedy-Delay} performs near-optimally, and (2) it achieves the true optimum in settings where \textbf{Greedy-Delay} may perform badly, such as the one given in Section \ref{app:long_horizon_greedy_subopt} of the Appendix. We conduct a total of 3 experiments in this setting. The details of the instances are discussed below. Table \ref{table:time-insensitive-appendix} shows the final total utility $u_T$ gained by each algorithm. 

\begin{table}[h!]
    \centering
    \begin{tabular}{|c|c|c|c|c|c|}
    \hline
    & OPT & Greedy-Delay & Greedy & Cycle & BT-Pull \\
    \hline
    Experiment 1 & 10.9460 & 10.9459 & 4.2894 & 7.1552 & 4.7460\\
    \hline
    Experiment 2 & 0.328 & 0.319 & 0.184 & 0.220 & 0.274\\
    \hline
    Experiment 3 & 8.2303 & 8.2301 & 1.8745 & 2.6974 & 4.0030\\
    \hline
    \end{tabular}
    \caption{Accumulated utilities for each experiment in the time-insensitive scenario.}
    \label{table:time-insensitive-appendix}
\end{table}

\textbf{Experiment 1:} For completeness, we record the full results from the time-insensitive experiment in Section \ref{sec:experiments}. Parameters are set as follows: $\muvec=[.86, .95, .87, .95]$, $\cvec=[.56, .52, .55, .53]$, $\qvec = [1.4, 1.3, 1.5, 1.8]$, $\hvec = [.75, .8, .6, .78]$. Figure \ref{fig:time-insensitive-1-app} shows the accumulated utility versus time for each algorithm.

\begin{figure}[h!]
    \centering
    \includegraphics[width=0.5\linewidth]{time_insensitive_3.png}
    \caption{Figure shows the accumulated utility over time for each algorithm in the time-insensitive scenario for \textbf{Experiment 1}.}
    \label{fig:time-insensitive-1-app}
\end{figure}

\textbf{Experiment 2:} Parameters are as set as follows: $\muvec = [.86, .95, .87, .95]$, $\cvec = [.57, .54, .56, .55]$, $\qvec = [.5, .3, .45, .4]$, $\hvec = [.688, .76, .696, .76]$. Figure \ref{fig:time-insensitive-2-app} shows the accumulated utility versus time for each algorithm.

\begin{figure}[h!]
    \centering
    \includegraphics[width=0.5\linewidth]{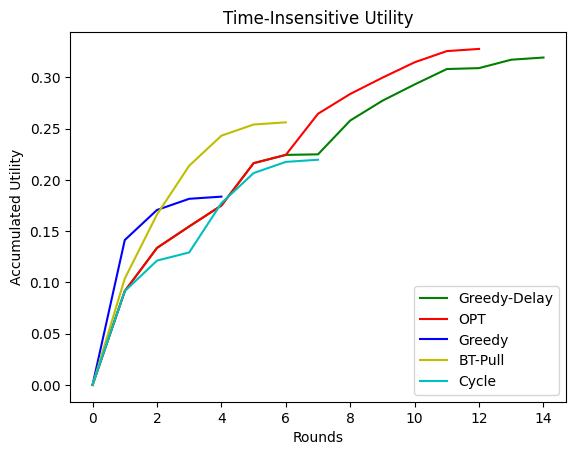}
    \caption{Figure shows the accumulated utility over time for each algorithm in the time-insensitive scenario for \textbf{Experiment 2}.}
    \label{fig:time-insensitive-2-app}
\end{figure}

\textbf{Experiment 3:} Parameters are as set as follows: $\muvec = [.95, .83, .7, .75]$, $\cvec = [.6, .55, .58, .56]$, $\qvec = [3.5, 1.7, 2.2, 3.4]$, $\hvec = [.8, .72, .54, .6]$. Figure \ref{fig:time-insensitive-3-app} shows the accumulated utility versus time for each algorithm.

\begin{figure}[h!]
    \centering
    \includegraphics[width=0.5\linewidth]{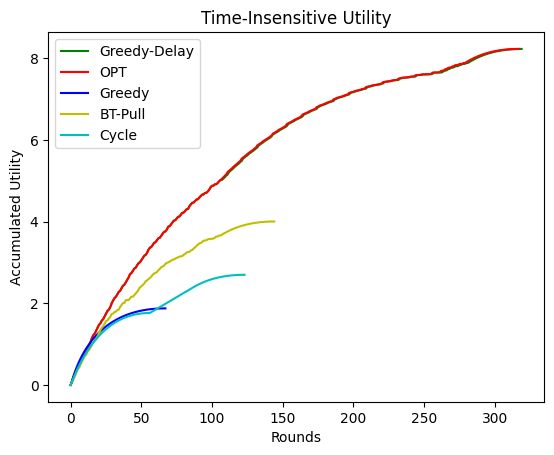}
    \caption{Figure shows the accumulated utility over time for each algorithm in the time-insensitive scenario for \textbf{Experiment 3}.}
    \label{fig:time-insensitive-3-app}
\end{figure}

\clearpage
\newpage 
\subsection{Human Utility under GenAI Deviation}
Lastly, it is interesting to consider the performance of the \textbf{Myopic-and-Pause} and \textbf{OPT} algorithms in a more general scenario where the opposing GenAI may not play the strategy of pulling the arm with the highest generative mean $\astar$. It follows from Proposition \ref{theorem:utility_lower_bound} that for a fixed human strategy, any deviation on the side of the GenAI from the optimal strategy will result in a greater  human utility. 

To show this empirically, we simulate scenarios with a deviating GenAI parameterized by a value $p \in [0, 1]$, where in each round the GenAI chooses the arm with maximum generative mean with probability $p$ and chooses a random one with probability $1-p$. However, the human strategies \textbf{Myopic-and-Pause} and \textbf{OPT} are still calculated under the assumption that the GenAI plays the arm with the maximum generative mean. The scenario with $p=1$ is equivalent to the previous case where the GenAI always plays the arm with the maximum generative mean, and the $p=0$ scenario corresponds to the GenAI playing completely at random. Table \ref{table:deviant_genAI} shows the utility accumulated by the \textbf{Myopic-and-Pause} and \textbf{OPT} algorithms under varying levels of GenAI deviation, and Figure \ref{fig:deviant_genAI} shows the accumulated utilities as a function of the number of rounds. As we would expect the human obtains the lowest utility at $p=1$ and higher utility values for lower $p$. 

The parameters used here are the same as the two examples given in Section \ref{sec:experiments}:

\textbf{Time-Sensitive:} $\muvec= [.73, .85, .9, .95],\gammavec = [.5, .48, .47, .45]$, $\cvec = [.53, .56, .59, .58]$, $\qvec = [.1, .2, .5, .2]$, $\hvec= [.3, .45, .6, .6]$.

\textbf{Time-Insensitive:} $\muvec=[.86, .95, .87, .95]$, $\cvec=[.56, .52, .55, .53]$, $\qvec = [1.4, 1.3, 1.5, 1.8]$, $\hvec = [.75, .8, .6, .78]$.

\begin{table}[h!]
    \centering
    \begin{tabular}{|c|c|c|}
    \hline
    & Myopic-and-Pause (Time-Sensitive) & OPT (Time-Insensitive)\\
    \hline
    $p=1.0$ & 9.481 & 10.946\\
    \hline
    $p=0.9$ & 9.923 & 11.973\\
    \hline
    $p=0.8$ & 10.324 & 12.688\\
    \hline
    $p=0.7$ & 10.701 & 13.536\\
    \hline
    $p=0.6$ & 11.264 & 14.868\\
    \hline
    $p=0.5$ & 11.511 & 15.562\\
    \hline
    $p=0.4$ & 12.025 & 16.502\\
    \hline
    $p=0.3$ & 11.264 & 17.259\\
    \hline
    $p=0.2$ & 12.762 & 17.959\\
    \hline
    $p=0.1$ & 13.053 & 18.532\\
    \hline
    $p=0.0$ & 13.891 & 19.861\\
    \hline
    \end{tabular}
    \caption{Accumulated utilities for the \textbf{Myopic-and-Pause} and \textbf{OPT} algorithms under varying levels of GenAI deviation.}
    \label{table:deviant_genAI}
\end{table}

\begin{figure}[h!]
    \centering
    \begin{subfigure}[b]{0.48\textwidth}
        \centering
        \includegraphics[width=\linewidth]{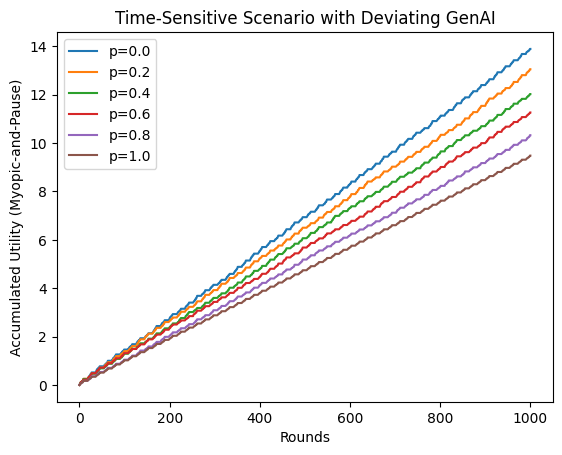}
        \caption{Accumulated utility of \textbf{Myopic-and-Pause}.}
    \end{subfigure}
    \hfill
    \begin{subfigure}[b]{0.48\textwidth}
        \centering
        \includegraphics[width=\linewidth]{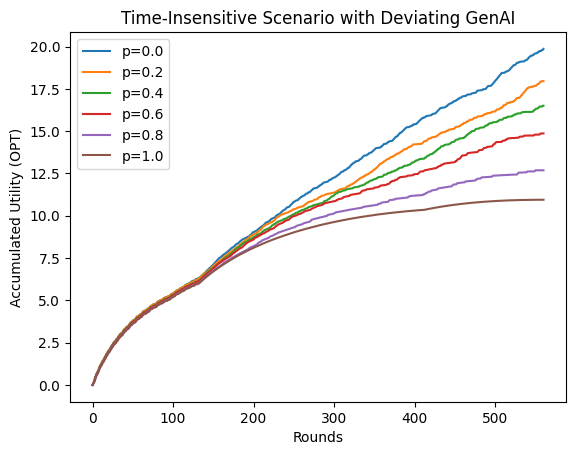}
        \caption{Accumulated utility of \textbf{OPT}.}
    \end{subfigure}
    \caption{Figure shows the accumulated utility over time for the \textbf{Myopic-and-Pause} and \textbf{OPT} algorithms under varying levels of GenAI deviation, parameterized by the value $p$.}
    \label{fig:deviant_genAI}
\end{figure}

\end{document}